\newcommand{\qed}{\rule{1.5mm}{2mm}\vspace{0.1in}}
\newenvironment{proof}{\par\noindent{\bf Proof:}}{\qed}
\newcommand{\D}{\mathcal{D}}
\newcommand{\ignore}[1]{}
\newtheorem{theorem}{Theorem}[section]
\newtheorem{lemma}[theorem]{Lemma}
\newtheorem{proposition}[theorem]{Proposition}
\newtheorem{observation}[theorem]{Observation}
\newtheorem{claim}[theorem]{Claim}
\newtheorem{example}[theorem]{Example}
\begin{document}

\title{Selling Complementary Goods:\\ Dynamics, Efficiency and Revenue}
\author{Moshe Babaioff\footnote{Microsoft Research }  
$\;$ and
Liad Blumrosen\footnote{School of Business Administration, The Hebrew University } 
$\;$ and
Noam Nisan\footnote{School of Engineering and Computer Science, The Hebrew University and Microsoft Research}
}

\date{}
\maketitle

\begin{abstract}
We consider a price competition between two sellers of perfect-complement goods.
Each seller posts a price for the good it sells, but the demand is determined according to the sum of prices. This is a classic model by Cournot (1838), who showed that in this setting a monopoly that sells both goods is better for the society than two competing sellers.

We show that non-trivial pure Nash equilibria always exist in this game.
We also 
quantify Cournot's observation
with respect
to  both the optimal welfare and the monopoly revenue.
We then prove a series of mostly negative results regarding the convergence of best response dynamics to equilibria in such games.
\end{abstract}

\maketitle

\section{Introduction}


In this paper we study a model of a pricing game between two firms
that sell goods that are perfect complements to each other. These goods are only demanded in bundles, at equal quantities, and there is no demand for each good by itself.
The two sellers simultaneously choose prices $p_1, p_2$
and the demand at these prices is given by $\mathcal{D}(p_1+p_2)$ where
$\mathcal{D}$ is the demand for the bundle of these two complementary
goods. 
The revenue of seller $i$ is thus $p_i \cdot \mathcal{D}(p_1+p_2)$, and as
we assume zero production costs, this is taken as his utility.


This model was first studied in Cournot's famous work \cite{Cournot1838}.
In \cite{Cournot1838}, Cournot studied two seminal oligopoly models.
The first, and the more famous, model is the well known Cournot oligopoly model about
sellers who compete through quantities. We study a second model that was proposed by Cournot in the same work, regarding price competition between sellers of perfect complements.\footnote{
\cite{Son68} showed that these two different models by Cournot actually share the same formal structure.
}
In Cournot's example, a manufacturer of zinc may observe that some of her major customers produce brass (made of zinc and copper);
Therefore, zinc manufacturers
indirectly compete with manufacturers of copper, as both target the money of brass producers.
Another classic example of a duopoly selling perfect complements is by \cite{Ell39}, who studied how owners of two consecutive segments of a canal determine the tolls for shippers; Clearly, every shipper must purchase a permit from both owners for being granted the right to cross the canal.
Another, more contemporary, example for perfect complements might be
high-tech or pharmaceutical firms that must buy the rights to use two registered patents to manufacture its product;
The owners of the two patents quote prices for the usage rights, and these patents can be viewed as perfect complements. 

Cournot, in his 1838 book, proved a counterintuitive result saying that
competition among multiple sellers of complement goods lead to a \emph{worse} social outcome than the result reached by a monopoly that controls the two sellers. Moreover, both the profits of the firms and the consumer surplus increase in the monopoly outcome.
In the legal literature, this phenomenon was termed ``the tragedy of the anticommons" (see, \cite{BY00,Hel98,PSD05}). In our work, we will quantify the severity of this phenomenon.

%

Clearly, if the demand at a sufficiently high price is zero,
then there are {\em trivial} equilibria in which both sellers
price prohibitively high, and nothing is sold.
This raises the following question: {\em Do non-trivial equilibria, in which some pairs of items are sold, always exist? }
We study this question as well as some natural follow-ups:
{\em What are the revenue and welfare properties of such equilibria?
What are the properties of equilibria that might arise as a result of best-response dynamics?}




For the sake of quantification, we study a {\em discretized version} of this game in which the demand changes only finitely many times.
The number of discrete steps in the demand function, also viewed as the number of possible types of buyers,
is denoted by $n$ and is called the number of {\em demand levels}.

Our first result proves the existence of non-trivial pure Nash equilibria.

\begin{theorem}
	For any demand function with $n$ demand levels there exists at least one non-trivial pure Nash equilibrium.
\end{theorem}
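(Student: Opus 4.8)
The plan is to reduce the existence question to a clean combinatorial statement about the step function $\mathcal D$, and then settle that statement with a cyclic averaging (rearrangement) argument. Normalize notation: let $0<t_1<\dots<t_n$ be the prices at which demand drops, so that $\mathcal D(s)=d_j$ on the $j$-th bracket with $d_1>d_2>\dots>d_n>0$ and $\mathcal D(s)=0$ for $s>t_n$; write $R(s)=s\,\mathcal D(s)$ for the ``monopoly'' revenue at total price $s$. (I take $\mathcal D$ to be left-continuous at the breakpoints so that a best response is always attained; with the other convention one argues with suprema, but nothing essential changes.)

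First I would show that \emph{every} non-trivial pure Nash equilibrium $(p_1,p_2)$ must satisfy $p_1+p_2=t_k$ for some $k$. Indeed, if the total price lies strictly inside the $j$-th bracket, then seller $1$ can raise $p_1$ up to $t_j-p_2$ without changing the demand level $d_j$, strictly increasing her revenue; hence such a profile is not an equilibrium. The same observation identifies which deviations matter at a candidate profile with $p_1+p_2=t_k$: each seller's only useful deviations are to reposition the \emph{total} price at some other threshold $t_j$ (landing at $t_j$ weakly dominates any other point of the $j$-th bracket, and infeasible or empty brackets give non-positive revenue). Setting $p_1=p_2=t_k/2$ and writing out the no-deviation inequality for each seller and each $j$, one obtains $(t_j-t_k/2)\,d_j\le (t_k/2)\,d_k$, i.e.
\[
2R(t_j)\ \le\ t_k\,(d_j+d_k)\qquad\text{for all } j.
\]
So it suffices to exhibit a single index $k$ for which all of these inequalities hold; then $(t_k/2,\,t_k/2)$ is a non-trivial (indeed symmetric) pure Nash equilibrium.

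The heart of the proof is showing that such a $k$ exists. Suppose not: for every $k$ there is an index $m(k)$, necessarily $m(k)\ne k$ (since $j=k$ gives equality), with $2R(t_{m(k)})>t_k\,(d_{m(k)}+d_k)$. The map $k\mapsto m(k)$ on $\{1,\dots,n\}$ contains a cycle $k_1\to k_2\to\dots\to k_r\to k_1$ with $r\ge 2$. Summing the strict inequality $2\,t_{k_{i+1}}d_{k_{i+1}}>t_{k_i}d_{k_{i+1}}+t_{k_i}d_{k_i}$ over the cycle (indices mod $r$) and subtracting $\sum_i t_{k_i}d_{k_i}$ from both sides gives $\sum_i t_{k_i}d_{k_i}>\sum_i t_{k_i}d_{k_{i+1}}$. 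But inside the cycle a larger threshold always corresponds to a strictly smaller demand level, so the sequences $(t_{k_i})_i$ and $(d_{k_i})_i$ are oppositely ordered, and the rearrangement inequality yields $\sum_i t_{k_i}d_{k_i}\le \sum_i t_{k_i}d_{k_{i+1}}$ — a contradiction. Hence a good $k$ exists.

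The step I expect to be the main obstacle is precisely this last one. As the case $n=2$ already shows, there is no ``canonical'' choice of $k$ (sometimes the equilibrium total price must lie below, and sometimes above, the monopoly price), so one genuinely needs the global cyclic argument rather than a greedy pick. A secondary technical point to handle carefully is the behaviour of $\mathcal D$ at its breakpoints — ensuring best responses are attained and that the reduction in the first two steps is exact — but this only affects bookkeeping, not the structure of the argument. It may also be worth recording that the equilibrium produced is symmetric, since that is convenient for the later sections on efficiency and revenue.
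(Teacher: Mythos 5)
Your proof is correct, but it takes a genuinely different route from the paper's. The paper proves existence constructively via a \emph{symmetrized best-response dynamics}: starting from $(0,0)$, it alternates averaging the two prices with a best-response step, and a monotonicity lemma (if $x<y$ then $y+br_y\ge x+br_x$) shows the total price is non-decreasing and must stop at one of the $n$ demand values, hence at a non-trivial equilibrium within $n$ steps. You instead give a static, purely combinatorial argument: you reduce existence to finding an index $k$ with $2R(t_j)\le t_k(d_j+d_k)$ for all $j$ (which I verified is exactly the equilibrium condition for $(t_k/2,t_k/2)$, and which holds vacuously for unreachable brackets $t_j<t_k/2$), and you rule out the failure of all $n$ candidates by extracting a cycle from the map $k\mapsto m(k)$, summing the violated inequalities along it, and contradicting the rearrangement inequality for the oppositely ordered sequences $(t_{k_i})$ and $(d_{k_i})$. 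Each step checks out, including the cycle having length at least $2$ because $m$ has no fixed points. What the paper's approach buys beyond bare existence is that the dynamics started at $(0,0)$ lands at the \emph{best} equilibrium (minimal total price, hence maximal welfare and revenue by their Proposition \ref{obs:best-NE-well-defined}), and the same monotone-dynamics machinery is reused later to prove the $O(\sqrt{D})$ and $O(2^n)$ price-of-stability bounds; your argument does not identify which symmetric profile works and would need to be supplemented (e.g., by taking the smallest admissible $t_k$ and re-proving optimality) to serve those later sections. What your approach buys is independence from any convergence argument and a pleasant closed-form criterion for when $(t_k/2,t_k/2)$ is an equilibrium.
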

We prove the theorem using an artificial dynamics which starts from zero prices and continues in steps.
In each one of these steps one seller best responds to the other seller's price, 
and after each seller best responds, the total price of both is symmetrized: both prices are replaced by their average.
We show that 
	the total price is monotonically non-decreasing, and thus it
terminates after at most $n$ steps in the non-trivial equilibrium of highest revenue and welfare.

In our model, it is easy to observe that there are multiple equilibria for some demand functions. How different can the welfare and revenue of these equilibria be?
A useful parameter for bounding the difference, as well as bounding the inefficiency of equilibria, turns out to be $D$,
the ratio between the demand at price $0$ and the demand at the highest price $v_{max}$
for which there is non-zero demand. 

Consider the following example with two ($n=2$) types of buyers:
a single buyer that is willing to pay ``a lot'', $2$,
for the bundle of the two goods, and many,
$D-1>>2$, buyers that are willing to pay ``a little'', $1$, each, for the bundle.
A monopolist (that controls both sellers) would have sold
the bundle at the low price $1$. At this price, all the $D$ buyers decide to buy, leading to revenue $D$ and optimal social welfare of $D+1$.
Equilibria here belong to two types:
the ``bad'' equilibria\footnote{It turns out that in our model there is no conflict between welfare and revenue in equilibria - the lower the total price, the higher the welfare and the revenue in equilibria (see Proposition \ref{obs:best-NE-well-defined}).}
have high prices, $p_1+p_2=2$, (which certainly is
an equilibrium when, say, $p_1 = p_2 = 1$) and achieve low revenue and
low social welfare of $2$.
The ``good'' equilibria have low prices, $p_1+p_2=1$ (which is an equilibrium as long
as $p_1, p_2 \ge 1/D$), and achieve optimal social welfare as well as the monopolist revenue, both values are at least $D$.
Thus, we see that the ratio of welfare (and revenue) between the ``good'' and ``bad'' equilibria can be very high, as high as $\Omega(D)$.
This can be viewed as a negative ``Price of Anarchy''  result.  

We next focus on the best equilibria and  present bounds on the ``Price of Stability" of this game;
	We show that the ratio between the optimal social welfare and the best equilibrium revenue\footnote{Note that this also shows  the same bounds on the ratio between the optimal welfare and the welfare in the best equilibrium, as well as the ratio between the monopolist revenue and the revenue in the best equilibrium.}
	is bounded by $O(\sqrt{D})$, and that this is tight when $D=n$.
	When $n$ is very small, the ratio can only grow as $2^n$ and not more. In particular, for constant $n$ the ratio is a constant, in contrast to the lower bound of $\Omega(D)$ on ``Price of Anarchy'' for $n=2$, presented above.
	


%

\begin{theorem}	
	For any instance, the optimal welfare and the monopolist revenue are at most
	$O(\min \{2^n,\sqrt{D}\})$ times the revenue of the best equilibrium. These bounds are tight.

	
\end{theorem}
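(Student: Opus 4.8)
A monopolist controlling both sellers may post the total price of any equilibrium and collect the same total revenue, and the welfare of any allocation is at least the revenue it raises; hence (optimal welfare)$\,\ge\,$(monopolist revenue)$\,\ge\,R^{*}$, the best equilibrium revenue, and it suffices to upper bound (optimal welfare)$/R^{*}$. Write the demand levels as values $v_{1}>v_{2}>\dots>v_{n}>0$ with cumulative demands $Q_{1}<Q_{2}<\dots<Q_{n}$, so $D=Q_{n}/Q_{1}$, the optimal welfare is $W^{*}=\sum_{j}v_{j}(Q_{j}-Q_{j-1})$, and the monopolist revenue is $M=\max_{j}v_{j}Q_{j}$, attained at a level $m$. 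By Proposition~\ref{obs:best-NE-well-defined} the best equilibrium is the one of least total price; let that total price be $v_{k}$, realized by prices $p_{1}\ge p_{2}$ with $p_{1}+p_{2}=v_{k}$, so $R^{*}=v_{k}Q_{k}$.

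\textbf{Step 1: levels above the equilibrium.} In equilibrium seller $1$ does not profit by raising its price to ``target'' a smaller demand level $Q_{j}$ with $j<k$: that deviation earns $(v_{j}-p_{2})Q_{j}\ge(v_{j}/2)Q_{j}$ since $p_{2}\le v_{k}/2\le v_{j}/2$, so non-profitability gives $p_{1}Q_{k}\ge(v_{j}/2)Q_{j}$, and with $p_{1}\le v_{k}$ we obtain $v_{j}Q_{j}\le 2R^{*}$ for every $j\le k$. Plugging this into the elementary estimate that the welfare carried by levels $1,\dots,k$ is at most $\bigl(1+\ln(Q_{k}/Q_{1})\bigr)$ times the best single‑price revenue on those levels, the contribution of levels $j\le k$ to $W^{*}$ is at most $2\min\{k,\,1+\ln D\}\cdot R^{*}$. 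In particular, if $m\le k$ then $M\le 2R^{*}$ and $W^{*}\le 2(1+\ln D)R^{*}=O(\sqrt D)R^{*}=O(2^{n})R^{*}$, and we are done; so from now on assume $m>k$, i.e.\ the monopoly price lies strictly below every equilibrium total price.

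\textbf{Step 2: levels below the equilibrium (the crux).} We must bound $\sum_{j>k}v_{j}(Q_{j}-Q_{j-1})$; the trivial bound $v_{j}\le v_{k}$ only yields $v_{k}Q_{n}=O(D)R^{*}$. The extra input is that for every $j>k$ the total price $v_{j}$ is \emph{not} an equilibrium total price, so in particular the symmetric split $(v_{j}/2,v_{j}/2)$ admits a profitable deviation, which produces a ``witness'' level $w$ with $2v_{w}Q_{w}>v_{j}(Q_{j}+Q_{w})$. The plan is to turn this, by downward induction on $j$ and invoking $v_{\ell}Q_{\ell}\le 2R^{*}$ from Step~1 whenever a witness chain reaches a level $\ell\le k$, into a decay estimate $v_{j}\le c\,v_{k}\sqrt{Q_{k}/Q_{j}}$ for all $j>k$ (equivalently $v_{j}^{2}Q_{j}\le c^{2}R^{*}v_{k}$). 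Granting this, $\sum_{j>k}v_{j}(Q_{j}-Q_{j-1})\le c\,v_{k}\sqrt{Q_{k}}\sum_{j>k}\tfrac{Q_{j}-Q_{j-1}}{\sqrt{Q_{j}}}\le 2c\,v_{k}\sqrt{Q_{k}Q_{n}}=2c\sqrt{Q_{n}/Q_{k}}\cdot R^{*}\le 2c\sqrt D\cdot R^{*}$, and combined with Step~1 this gives $W^{*}=O(\sqrt D)R^{*}$. For the $O(2^{n})$ bound the delicate estimate is unnecessary: read crudely, the witness inequality says $v_{j}Q_{j}<2v_{w}Q_{w}$, losing only a factor $2$ and changing the demand level each time, so after at most $n$ steps it reaches a level $\le k$ with revenue $\le 2R^{*}$; hence $v_{j}Q_{j}\le 2^{O(n)}R^{*}$ for every $j$, and summing over the $\le n$ levels gives $W^{*}=2^{O(n)}R^{*}$ (a more careful accounting yields the clean $2^{n}$).

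\textbf{Tightness and the main obstacle.} For $D=n$, take $n$ equal demand levels (so $Q_{j}=j$, $D=n$) with values $v_{j}\approx j^{-1/2}$, perturbed slightly downward step by step so that for every $j\ge 2$ the symmetric split at $v_{j}$ has a profitable upward deviation (the witness $j'=j-1$ works precisely when the perturbation pushes $v_{j-1}/v_{j}$ just above $\sqrt{j/(j-1)}$), which pins the best equilibrium to the top level; then $R^{*}=\Theta(v_{1}Q_{1})$ while $W^{*}=\sum_{j}v_{j}=\Theta(\sqrt n\,v_{1}Q_{1})$, a ratio $\Theta(\sqrt D)$. For small $n$, a short instance with geometrically spaced values and a unique equilibrium near the top realizes a $2^{\Omega(n)}$ ratio. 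The hard part is the decay estimate of Step~2 for the $\sqrt D$ bound: the witness inequality is easy to state, but a witness may lie either above or below $j$, so closing the induction with only a constant (not $2^{n}$) loss when $D$ is the binding parameter is where the real work lies; the other delicate point is verifying that the extremal instances admit no lower‑total‑price equilibrium, in particular no asymmetric‑split one.
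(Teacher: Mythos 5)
Your skeleton matches the paper's (Step 1 is essentially the paper's Lemma \ref{revwel}, and the decay estimate you want in Step 2 is exactly what the paper proves), but the crux is left unproven, and you say so yourself. Concretely: your witness inequality $2v_wQ_w>v_j(Q_j+Q_w)$ does give the one-step bound $(v_wQ_w/v_jQ_j)^2\ge Q_w/Q_j$ (this is the paper's Lemma \ref{lem-sqrt-bound}, proved by exactly this AM--GM computation), but the missing ingredient is the structural fact that makes these one-step inequalities chain all the way to level $k$: take the witness to be a \emph{best} response and symmetrize. By the Monotonicity Lemma \ref{lemma:monotonicity-of-prices} and Proposition \ref{prop:sym-dynamics}, the symmetrized best-response dynamics started from the even split of any $v_j$ below the best-equilibrium price has monotonically increasing total price, never overshoots the lowest-price equilibrium, and terminates at it after at most $n$ steps. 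That resolves precisely the ``witness may lie above or below $j$'' obstruction you flag, makes the telescoping give $v_jQ_j\le v_kQ_k\sqrt{Q_j/Q_k}$ with no constant loss, and also supplies the termination/no-cycling argument that your ``crude'' $2^n$ chain silently assumes (the paper instead runs the same dynamics from the monopoly price and observes that total revenue at most halves per step, with at most $n-1$ steps). As written, your Step 2 is a plan, not a proof. A further slip: the claim ``if $m\le k$ then $W^{*}\le 2(1+\ln D)R^{*}$ and we are done'' is wrong, since Step 1 bounds only the welfare carried by levels $j\le k$; the low-value levels $j>k$ contribute to $W^{*}$ no matter where the monopoly price sits, so Step 2 is needed in every case.

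The tightness half is also only asserted. For the $\sqrt{D}$ lower bound the paper's Lemma \ref{lem:POS-LB-D} uses essentially your instance ($v_i\approx 1/\sqrt{i-1}$, $d_i=i$), but the work is in verifying that no equilibrium with low total price exists; the ``asymmetric-split'' worry you raise is dispatched by Lemma \ref{obs:total-price-convex}, which reduces the check to even splits $(v_i/2,v_i/2)$, and then a short calculation shows these fail for all $i\ge 4$. For the $2^{n}$ lower bound an explicit construction must be exhibited and verified (the paper's Lemma \ref{lem:POS-LB-n} takes $v_i=\delta^{i-1}$ and demands calibrated so that $R(v_i)\approx(2-\delta)^{i-1}$, and checks that the only equilibrium total price is $v_1$); your one-sentence description of ``geometrically spaced values with a unique equilibrium near the top'' does not establish this.
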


\ignore{ 
	
In the above example we saw that the ratio between the welfare of the best and {\em worst} non-trivial equilibria can be as large as $\Omega(D)$ even with only two demand levels.
\mbedit{In contrast, the theorem shows that 
the ratio between  the {\em optimal welfare} and the best equilibrium welfare is only a constant when the number of demand levels is a constant, and that it cannot grow faster than $O(\sqrt{D})$ when $n\leq \sqrt{D}$.}

} 


We now turn to discuss how such markets converge to equilibria, and in case of multiple equilibria, which of them will be reached?
We consider best response dynamics in which
players start with some initial prices 
and repeatedly best-reply to each other.
 We study the quality of equilibria reached by the dynamics, compared to the best equilibria.

Clearly, if the dynamics happen to start at an equilibrium, best replying will leave the prices there,
whether the equilibrium is good or bad.
But what happens in general:
which equilibrium will they ``converge'' to when starting from ``natural" starting points, if any,
and how long can that take?
Zero  prices (or other, very low prices) are probably the most natural starting point.
However, as can be seen by the example above, starting from zero prices may result in the worst equilibrium.\footnote{In this example, the best response to price of $0$ is price of $1$. Next, the first seller will move from price of $0$ to price of $1$ as well, resulting in the worst equilibrium.}
Another natural starting point is a situation where the two sellers form a cartel and decide to post prices that sum to the monopoly price.
Indeed, in our example above, if the two sellers equally split the monopoly price, this will be the best equilibrium.
However, we know that cartel solutions are typically unstable, and the participants will have incentives to deviate to other prices and thus start a price updating process. We prove a negative result in this context, showing that starting from any split of the monopoly price might result in bad equilibria.
We also check what would be the result of dynamics that start at random prices. Again, we prove a negative result showing situations where dynamics starting from random prices almost surely converge to bad equilibria.
Finally, we show that convergence might take a long time, even with only two demand levels. Following is a more formal description of these results about the best-response dynamics:
%

\begin{theorem}
	The following statements hold:
	\begin{itemize}
		\item There are instances with $3$ demand levels for which a best-response dynamics starting from any split of a monopoly price reaches the worst equilibrium that
		is factor $\Omega(\sqrt{D})$ worse than the best equilibrium in terms of both revenue and welfare.
		
		\item For any $\epsilon>0$ and $D>2/\epsilon$  there are instances with $2$ demand levels for which a best-response dynamics starting from uniform random prices in $[0,v_{max}]^2$ reaches the worst equilibrium with probability $1-\epsilon$, while the best equilibrium has welfare and revenue that is factor $\epsilon\cdot D$ larger.
		
		\item For any $n\geq 2$ and $\epsilon>0$ there are instances with $n$ demand levels for which a best-response dynamics starting from uniform random prices in $[0,v_{max}]^2$ almost surely (with probability 1) reaches the worst equilibrium, while the best equilibrium has welfare and revenue that is factor $\Omega(2^n)$ larger.
			
		\item (Slow convergence.) For any $K>0$ there is an instance with only $2$ demand levels ($n=2$) and $D<2$ for which a best-response dynamics continues for at least $K$ steps before converging to an equilibrium.
	\end{itemize}
\end{theorem}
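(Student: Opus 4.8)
The plan is to prove each of the four items by exhibiting an explicit demand function and analyzing best-response dynamics on it, using two structural facts that make such analysis manageable. First, when a seller best-responds to an opponent price $p$, the resulting \emph{total} price $p_1+p_2$ is always one of the finitely many breakpoints $v_1>v_2>\dots>v_n$ of $\D$: the seller simply picks the breakpoint $v_k\ge p$ maximizing $(v_k-p)\,\D(v_k)$ and sets its own price to $v_k-p$. Second, among equilibria lower total price is (weakly) better in both revenue and welfare (Proposition~\ref{obs:best-NE-well-defined}), so the ``worst equilibrium'' is the equilibrium of largest total price; moreover, for a fixed total price $v_k$ the splits $(p_1,p_2)$ with $p_1+p_2=v_k$ that are equilibria form an explicit (possibly empty) interval, and in particular the monopoly breakpoint $v^\ast$ supports an equilibrium if and only if an explicit inequality in $v^\ast$ and $\D$ holds (no seller can profitably jump up; jumping down is never profitable from the monopoly price). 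Consequently in each construction the dynamics can be tracked by the pair (current total price $\in\{v_1,\dots,v_n\}$, one coordinate of the split), and the possible limit points are known in advance.

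For the first item I use a $3$-breakpoint instance with geometrically spaced populations (sizes roughly $1,\sqrt D,D$) and the three values tuned so that: (a) the monopoly price is the lowest breakpoint $v_3$ and it supports \emph{no} equilibrium (the explicit inequality above fails), so that no split of $v_3$ is itself an equilibrium and every split has a profitable upward jump; (b) the good equilibrium sits at the middle breakpoint $v_2$ with revenue $\Theta(D)$, the worst equilibrium at the top breakpoint $v_1$ with revenue $\Theta(\sqrt D)$, giving the claimed $\Theta(\sqrt D)$ ratio; and (c) starting from any split of $v_3$ the dynamics first raises the total to $v_2$ but lands \emph{outside} the equilibrium interval of $v_2$, then raises it again to $v_1$, where it reaches the equilibrium. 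Item~1 then follows from a finite case analysis over $\min(p_1,p_2)\in[0,v_3/2]$ for a split of $v_3$: in each subinterval I compute the first best response, verify it raises the total, and (together with the analogous computation at $v_2$) verify that the subsequent best responses never stop before $v_1$. The delicate subcase is the balanced split $p_1=p_2=v_3/2$; reconciling ``the balanced split has a profitable upward deviation'' with ``the worst equilibrium is only $\sqrt D$ (not $D$) below the best'' is what pins down the exponents $1,\sqrt D,D$ and is the main content of the construction.

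For items two and three I use well-separated instances. For $n=2$: one buyer of value $v_{\max}=\Theta(1/\epsilon)$ and $D-1$ buyers of value $1$, with $D>2/\epsilon$, so the monopoly sells cheap with revenue $\approx D$ while the high-price equilibrium has revenue $\approx v_{\max}=1/\epsilon$, a ratio $\approx\epsilon D$. By the first structural fact the best response to an opponent price $p$ is a step function of $p$ with a single threshold near $1$ --- below it the mover drives the total to $1$, above it to $v_{\max}$ --- and tracing two best responses shows the basin of attraction of the good (total $=1$) equilibrium inside $[0,v_{\max}]^2$ is contained in a region of area $O(v_{\max})=O(1/\epsilon)$ out of $v_{\max}^2=1/\epsilon^2$, i.e.\ an $O(\epsilon)$ fraction; adjusting constants gives probability at least $1-\epsilon$ of reaching the bad equilibrium. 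For general $n$ I stack $n$ such levels geometrically (consecutive values and populations growing by constant factors, and $D$ taken large enough that $\sqrt D\ge 2^n$, so the $2^n$ side of the $O(\min\{2^n,\sqrt D\})$ bound shown above is the operative one), so that each consecutive breakpoint contributes a constant factor toward an $\Omega(2^n)$ worst-versus-best gap; the construction is arranged so that the very first best response from \emph{any} starting point lying off a fixed finite union of curves already pushes the total price above every non-top equilibrium, after which the total can never descend back to it, so the basins of all non-worst equilibria are Lebesgue-null and the dynamics reaches the worst equilibrium with probability $1$.

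For the last item I use an almost-flat two-point demand: $v_2=1$, one buyer of value $v_1=1+\delta$, and population ratio $D=1+\gamma$ with $\gamma$ tiny, so $D<2$. I choose $\delta$ so that the low breakpoint supports no equilibrium (which needs $v_1-v_2$ bounded below by a constant multiple of $D-1$) while $v_1-v_2$ is much smaller than the threshold $\theta$ that, at each breakpoint, separates ``the mover raises the total'' from ``the mover lowers it''. Then the alternating best-response dynamics started at total price $v_1$ with the lagging seller's price near $0$ is forced into a long oscillation: at total $v_1$ the mover lowers the total to $v_2$, at total $v_2$ the mover raises it to $v_1$, and the opponent price seen by the next mover drifts toward $\theta$ by exactly $v_1-v_2$ per full oscillation, so the oscillation survives $\Theta\!\big(\theta/(v_1-v_2)\big)$ steps; taking $\gamma=\Theta(1/K)$ and $\delta$ correspondingly small makes this exceed $K$, and one checks the routine points --- every move is a genuine best response, no nonnegative feasible alternative does better, the trivial equilibrium is never reached, and the process terminates in the equilibrium interval at total price $v_1$. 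Across all four items the real difficulty lies in items two and three: the local ``does a seller want to jump'' inequalities are mechanical, but showing that \emph{almost every} (respectively a $(1-\epsilon)$-fraction of) starting points are funneled to the worst equilibrium needs a global description of the two-dimensional best-response map, and, for general $n$, an instance engineered so that the basins of the good equilibria collapse to a measure-zero set.
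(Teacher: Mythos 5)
Your constructions for the first, second and fourth bullets are essentially the paper's own: a three-level instance with demands $1,\sqrt D,D$ whose monopoly price is the lowest value and whose even split is destabilized by a jump to the middle level (cf.\ Proposition~\ref{obs:brd-3-lb}); a two-level instance with value ratio $\Theta(1/\epsilon)$ and a single best-response threshold, so only an $O(\epsilon)$-fraction of starting prices can feed the low-price equilibrium; and a near-flat two-level instance in which the total price oscillates between $v_1$ and $v_2$ while the split drifts by $v_1-v_2$ per round, giving $\Theta(1/(D-1))\ge K$ updates. These are workable, with one caveat on the first bullet: your tuning must also guarantee that after the first jump to total price $v_2$, the seller still holding a price at most $v_3$ strictly prefers to jump on to $v_1$ rather than keep his price, i.e.\ roughly $v_1-v_2> v_3\,(\sqrt D-1)$; without this the dynamics can stop at a \emph{good} equilibrium at $v_2$ from a positive-length range of splits of the monopoly price (the paper's choice $v_1=1$, $v_2=1/4$, $v_3=1/(3\sqrt D)$ satisfies exactly this). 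Your sketch asserts the trajectory "lands outside the equilibrium interval of $v_2$" but does not identify this inequality.

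The third bullet has a genuine gap. You propose to arrange that "the very first best response from any starting point lying off a fixed finite union of curves already pushes the total price above every non-top equilibrium," i.e.\ directly to the top value $v_1$. No instance witnessing the bullet can have this property: if the best response to almost every opponent price $q$ were total price $v_1$, then by continuity $(v_1-v_i/2)\,d_1\ge (v_i/2)\,d_i$ for every $i$, hence $R(v_i)\le 2\,v_1 d_1=2R(v_1)$ for all $i$, so no equilibrium could beat the worst one by more than a factor of $2$, contradicting the required $\Omega(2^n)$ gap. (Concretely, for starting prices near zero --- a positive-measure set --- the first best response moves the total to the revenue-maximizing \emph{low} level, where the best equilibrium lives.) The mechanism that actually works, and that the paper proves for Theorem~\ref{thm:almost-sure-bed}, is different: (i) at every level below $v_1$ the only equilibrium split is the exact even split, so the non-worst equilibria form a finite (measure-zero) set; and (ii) a best response to any price outside $\{v_i/2\}$ yields an uneven split, and from an uneven split the low-price seller's best response strictly raises the total and leapfrogs the opponent's price (Lemmas~\ref{lem:non-equal-not-ne} and~\ref{lem:non-equal-high-stays}), so the split stays uneven and the total climbs level by level until it reaches $v_1$. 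Establishing (i) and (ii) forces the values to decay geometrically with a \emph{small} ratio (the paper takes $v_i=\epsilon^{i-1}$, $d_i=((2-\epsilon)/\epsilon)^{i-1}$), not merely "constant factors," and this climbing argument --- absent from your sketch --- is the substantive content of showing the good basins are Lebesgue-null.
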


Thus, best-reply dynamics may take a very long time to converge, and then typically end up at a very bad equilibrium.
While for very simple ($n=2$) markets we know that convergence will always occur, we do not know whether convergence
is assured for every market.

\vspace{0.1in}
\noindent
{\bf Open Problem:} {\em Do best reply dynamics always converge to an equilibrium or may
they loop infinitely? }  We do not know the answer even for $n=3$.
\vspace{0.1in}

\noindent \textbf{More related work.}
While this paper studies price competition between sellers of perfect complements, the classic Bertrand competition \cite{Bert83} studied a similar situation between sellers of perfect substitutes. Bertrand competition leads to an efficient outcome with zero profits for the sellers.
\cite{BLN13} studied Bertrand-like competition over a network of sellers.
In another paper \cite{BBN16}, we studied a network of sellers
of perfect complements, where we showed how equilibrium properties depend on the graph structure, and we proved price-of-stability results for lines, cycles, trees etc.
Chawla and Roughgarden \cite{CR08} studied the price of anarchy in two-sided markets with consumers interested in buying flows in a graph from multiple sellers, each selling limited bandwidth on a single edge. Their model is fundamentally different than ours (e.g., they consider combinatorial demand by buyers, and sellers with limited capacities) and their PoA results are with respect to unrestricted Nash Equilibrium, while we focus on non-trivial ones (in our model the analysis of PoA is straightforward for unrestricted NE). A similar model was also studied in \cite{CN09}.

\cite{ES92} extended the complements model of Cournot to accommodate multiple brands of compatible goods. \cite{EK06} studied pricing strategies for complementary software products. The paper by \cite{FK01} directly studied the Cournot/Ellet model, but when buyers approach the sellers (or the tollbooths on the canal) sequentially.

\cite{FKLMO13} discussed best-response dynamics in a Cournot Oligopoly model with linear demand functions, and proved that they converge to equilibria.
Another recent paper \cite{NP10} studied how no-regret strategies converge to Nash equilibria in Cournot and Bertrand oligopoly settings;
The main results in \cite{NP10} are positive, showing how such strategies lead to a positive-payoff outcomes in Bertrand competition, but they do not consider such a model with complement items.

Best-response dynamics is a natural description of how decentralized markets converge to equilibria, see, e.g., \cite{FPT04,NSVZ11}, or to approximate equilibria, e.g., \cite{AAEMS08,SV08}.
The inefficiency of equilibria in various settings has been extensively studied, see, \cite{KP99, RT02, Rou15, ADKTWR08,HHT14}.

\vspace{2mm}

We continue as follows: Section \ref{sec:model} defines our model and some basic equilibrium properties. In Section \ref{sec:existence} we prove the existence of non trivial equilibria. In Section \ref{sec:best-response} we study the results of best-response dynamics. Finally, Section \ref{sec:pos} compares the quality of the best equilibria to the optimal outcomes.



\ignore{
\mbcomment{OLD:}

For the sake of quantification, we study a
{\em discretized version}\footnote{In the appendix we
go back to the continuous model and discuss how are discretized results
are translated to it.} of this game and derive our results in
terms of two
basic parameters of the demand function:  The first
parameter is the number of discrete steps in the
demand function, also viewed as the number of possible types of
buyers, which we denote by $n$ and called the number of demand levels.
The second  parameter is the ratio $D$ between
the total demand at price $0$ and
the demand at the highest price for which there is non-zero demand.
One may view $D$ as
the {\em size} of the market and view $n$ as the {\em complexity} of the market.

Let us start with an example: consider the  case where
there are two ($n=2$) types of buyers: a single buyer that is willing to pay ``a lot'',
$V>>1$, for the bundle of the two items, and many,
$D-1>>V>>1$, buyers that are willing to pay ``a little'', $1$, each, for the bundle.
A monopolist (that controls both sellers) would have sold
the bundle at the low price $1$, leading to revenue $D$ and optimal
social welfare of $D-1+V$.
There are two equilibria here: the ``bad'' one has high prices, $p_1+p_2=V$, (which certainly is
an equilibrium when, say, $p_1, p_2 \ge 1$) and achieves low revenue and
low social welfare compared to the monopolist.
The ``good'' one has low prices, $p_1+p_2=1$ (which is an equilibrium as long
as $p_1, p_2 \ge (V-1)/D$), and achieves
optimal social welfare as well as the monopolist revenue.

We start by studying the statics of this game: how bad can the best and worst
equilibria of this game be
compared to to each other, as well as compared to the optimal and
to the monopolist's outcome,
both for revenue and for social
welfare.  The example above already shows a gap as large as possible
(proportional to the market size, $D$) between the worst and best
equilibria
even for very simple markets ($n=2$), both for revenue and for social welfare.
This also settles the ``price of anarchy'' question as it matches also the trivial
upper bound of $D$.  The ``price of stability'' question turns out to be more
delicate, and the best equilibrium is competitive for simple markets, and not completely
bad, in general:

\begin{theorem}
*** FIX TO BE EXACTLY CORRECT ***

There always exists an equilibrium whose revenue and welfare are at least $1/\sqrt{D}$
fraction of the monopolist's revenue and welfare.  There always exists an equilibrium whose revenue and welfare are at least $2^{-n}$
fraction of the monopolist's revenue and welfare. Both bounds are tight.
\end{theorem}

We then continue with the focus of our paper of how to ensure that we reach a ``good''
equilibrium rather than a ``bad'' one.  We start be studying the best-reply dynamics: players
start with some initial prices $p_1^0, p_2^0$ and repeatedly best-reply to each other.
Clearly if they happen to start at an equilibrium, best replying will leave them there,
whether the equilibrium is good or bad.  But what happens in general: which equilibrium
will they ``converge'' to when starting from a ``natural starting point'', if any,
and how long can that take?  Our results are mostly negative here, with just a bit of
positive results:

\begin{theorem}
\begin{itemize}
	\item For any $K>0$ there is instance with only $2$ demand levels ($n=2$) and $D<2$ for which best response dynamics continues for at least $K$ steps before converging to an equilibrium.
\item
For any $n \ge 2$, dynamics starting at either zero prices or at a $1-\epsilon$ measure
of starting prices (for any fixed $\epsilon>0$) may reach the worst equilibrium that
is $\Omega(D)$ worse than the best one in terms of revenue and welfare.
\item
For every
$n$, there is non-zero measure of starting points that reaches an equilibrium whose
revenue and welfare is at least $(2+\delta)^{-n}$ fraction of those of the best one
(for any fixed $\delta>0$).
\item
It is possible to have only a zero-measure of starting prices
that reaches an equilibrium with at least $(2-\delta)^{-n}$ fraction of the best
revenue or welfare.
\end{itemize}
\end{theorem}

Thus best reply dynamics may take a very long time to converge, and then may typically
end up at a very bad equilibrium.  At least for simple markets there is a non-zero
probability of reaching a reasonable equilibrium.  While for very simple ($n=2$) markets
we know that convergence will always occur, we do not know whether convergence
is assured for every market.

\vspace{0.1in}
\noindent
{\bf Open Problem:} Do best reply dynamics always converge to an equilibrium or may
they loop infinitely?  We do not know the answer even for $n=3$.
\vspace{0.1in}

As opposed to simple best-reply dynamics, we introduce a symmetrized version of
best-reply dynamics that is guaranteed to converge both quickly and to the best equilibrium.
In this version the players start with some prices, and at each stage one player best-replies
to the other one, but then we split the joint revenue equally between the two palyers.
I.e. if at a certain stage we have prices $p_1^i,p_2^i$, then player one updates his price
to $p'_1 = BR(p_2^i)$ and then we split the joint
revenue between them: $p_1^{i+1}=p_2^{i+1}=(p'_1+p_2^i)/2$.

\begin{theorem}
\begin{itemize}
\item
Symmetric best reply dynamics always reach an equilibrium after at most $O(n)$ steps
(and this bound is tight).
\item
Symmetric best-reply dynamics starting with zero prices, or any sufficiently low prices,
always end up at the best equilibrum.
\end{itemize}
\end{theorem}

We thus see that intervening in the best-reply dynamics in a way that every step
equalizes the revenue share between the two bidders improves efficiency, welfare, and
convergence time.

} 

\section{Model and Preliminaries}
\label{sec:model}

We consider two sellers, each selling a single, homogeneous, divisible good.
The sellers have zero manufacturing cost for the good they sell, and an unlimited supply is available from each good.
All the buyers in the economy are interested in bundles of these two goods, and the goods are perfect complements for the buyers.
That is, each buyer only demands a bundle consists of two goods,
in equal quantities\footnote{This actually assumes that the ratio of demand of the two goods is fixed, as we can normalized the units to assume that it is $1$ for both.},
and there is no demand for each good separately.
The demand for the bundle of the two goods is given by
a demand function $\mathcal{D}(\cdot)$, where $\mathcal{D}(p)\in \mathbb{R}_+$ is the quantity
of each of the two goods which is demanded when the price for one unit of the bundle of the two goods is $p\in \mathbb{R}_+$.

The sellers simultaneously offer prices for the goods they sell. Each seller offers a single price, and cannot discriminate between buyers.
If the two prices offered by the sellers are $p$ and $q$ then  $p+q$ is the {\it total price} and the demand in this market is $\mathcal{D}(p+q)$.
The revenue of the seller that posts a price $p$ is thus $p\cdot \mathcal{D}(p+q)$, the revenue of the second seller is $q\cdot \mathcal{D}(p+q)$ and the total revenue
of the two selling firms is denoted by $R(p+q)= (p+q)\cdot \mathcal{D}(p+q)$.
The maximal revenue that a monopoly that owns the two sellers can achieve
is $\sup_x x\cdot \mathcal{D}(x)$ and we use $p^*$ to denote a monopolist price.\footnote{
Our paper considers demand functions for which the monopoly revenue is attained and a monopolist price exists.	
When there is more than one price that maximizes the monopoly profit, our claims regarding $p^*$ will hold for each one of these prices. When necessary, we will treat the different prices separately.
}

\vspace{2mm}

\noindent \textbf{Discrete Demand Levels:}
In this paper we consider discrete demand curves, where potential buyers only have $n\geq 2$ different values denoted by $\vec{v}$, such that $v_1 > v_2 > \cdots > v_n>0$.
The demand at each price $v_i$ is denoted by $d_i=\mathcal{D}(v_i)$, and assuming a downward sloping demand curve we get that $\vec{d}$ is increasing, that is,
$0<d_1<d_2<\cdots<d_n$.
For convenience, we define $v_0=\infty$ and $d_0=0$.
The parameter $n$ is central in our analysis and
it denotes the number of \emph{demand levels} in the economy.
Another parameter that we frequently use is the
\emph{total demand} $D$, which is the ratio between the highest and lowest demand at non-zero prices, that is $D=d_n/d_1$.
In other words, $D$ is
the maximal demand $d_n$ measured in units of the minimal non-zero demand $d_1$ (Note that $D>1$).
The \emph{social welfare} in the economy is the total value generated for the consumers. The social welfare, given a total price $x$, is
 $SW(x)= \sum_{i | x<v_{i}} v_i (d_{i}-d_{i-1})$, and the optimal welfare is $SW(0)=\sum_{i=1}^{n} v_i (d_{i}-d_{i-1})$.

\vspace{2mm}

\noindent \textbf{Strategies and Equilibria:}
The sellers engage in a price competition.
We say that $p$ is a \emph{best response} to a price $q$ of the other seller if $p\in  argmax_{p'} \  p'\cdot \mathcal{D}(p'+q)$, and let the set of all best responses to $q$ be $BR(q)$.
We consider the pure Nash equilibria (NE) of this full-information pricing game.
A pure Nash equilibrium is a pair of prices such that each price is a best response to the other price, that is, $(p,q)$  such that  $p\in BR(q)$ and $q\in BR(p)$.

It is easy to see that NE always exist in this game, but unfortunately some of them are trivial and no item is sold, and thus their welfare is zero;
For example, $(\infty,\infty)$ is always an equilibrium with zero welfare and revenue.
We will therefore focus on a subset of NE that are {\em non-trivial}, i.e., where some quantity is sold.
It is not immediate to see that non-trivial 
equilibria exist, and we will begin by proving (in Section \ref{sec:existence}) that such equilibria indeed always exist.
On the other hand, we will show that
multiplicity of equilibria is a problem even for this restricted set of equilibria, as there might be an extreme variance in their revenue and efficiency.

\ignore{
{\bf Moshe: this should move from here: } For a given value and demand vectors $\vec{v},\vec{d}$,
the welfare Price of Anarchy is $POA_{SW} (\vec{v},\vec{d}) = \sup_{(p,q)\in NE} \frac{OPT}{SW(p+q)}$, and the welfare Price of
Stability is $POS_{SW} (\vec{v},\vec{d}) = \inf_{(p,q)\in NE} \frac{OPT}{SW(p+q)}$.
The revenue Price of Anarchy is $POA_{REV}(\vec{v},\vec{d})  = \sup_{(p,q)\in NE} \frac{R_{mon}}{R(p+q)}$ and
the revenue Price of Stability is $POA_{REV}(\vec{v},\vec{d})  = \inf_{(p,q)\in NE} \frac{R_{mon}}{R(p+q)}$.
When presenting a bound on the POA or POS without explicit mentioning on the instance, we mean that the bound holds for any instance.
For example, a bound of the form $POS_{SW} = \Theta(D)$ means that for every $\vec{v},\vec{d}$ with total demand $D$ it holds that
$POS_{SW}(\vec{v},\vec{d}) = O(D)$, and that for every total demand $D$, for some instance $\vec{v},\vec{d}$ with total demand $D$, it holds that $POS_{SW}(\vec{v},\vec{d}) = \Omega(D)$.
}


%


\subsection{Basic Equilibrium Properties}

We now describe some basic structural properties of equilibria in the pricing game between sellers of complement goods. We use these properties throughout the paper.

We start with a simple observation claiming that all best response dynamics lead to a total price which is exactly one of the demand values.
This holds as otherwise any seller can slightly increase his price, selling the same quantity and increasing his revenue.

\begin{observation}
	\label{obs:BR-price-is-value}
	Let $x\leq v_1 $ be some price offered by one seller, and $BR(x)$ be a best response of the other seller to the price $x$. Then, it holds that $x+BR(x)=v_i$ for some $i\in \{1,...,n\}$.
	In particular, for every pure non-trivial NE $(p,q)$, it holds that $p+q=v_i$ for some $i$.
\end{observation}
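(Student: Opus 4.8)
The plan is to analyze the structure of the best-response function $BR(x)$ directly from the definition of demand levels. Fix a price $x \le v_1$ offered by one seller, and let $p \in BR(x)$, so $p$ maximizes $p' \cdot \mathcal{D}(p' + x)$ over all $p' \ge 0$. The key observation is that the demand function $\mathcal{D}$ is a step function: $\mathcal{D}(t) = d_i$ precisely when $v_{i+1} < t \le v_i$ (using the conventions $v_0 = \infty$, and noting $\mathcal{D}(t) = 0$ for $t > v_1$ and $\mathcal{D}(t) = d_n$ for $0 \le t \le v_n$). Since $p$ must be a \emph{best} response and we want to rule out the trivial outcome, first note that choosing $p' $ with $p' + x \le v_n$ (which is possible since $x \le v_1$, though one must check $x < v_n$ or handle $x \ge v_n$ separately) yields strictly positive revenue, so any best response gives total price $p + x \le v_1$, i.e., lands in some ``demand band'' $(v_{i+1}, v_i]$ with $d_i > 0$.

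Next I would argue that within any fixed band the optimal $p'$ pushes the total price to the right endpoint. Suppose $p + x = t$ with $v_{i+1} < t < v_i$, so that $\mathcal{D}(t) = d_i$. Consider increasing the seller's price to $p' = p + \delta$ for small $\delta > 0$ with $t + \delta \le v_i$; then $\mathcal{D}(p' + x) = \mathcal{D}(t + \delta) = d_i$ still, but the revenue becomes $(p + \delta) d_i > p \cdot d_i$, contradicting that $p$ was a best response. Hence the total price $p + x$ cannot lie strictly inside a band; it must equal $v_i$ for some $i \in \{1, \dots, n\}$. (The edge case $p = 0$: then the seller earns zero revenue, which cannot be optimal unless \emph{every} choice gives zero, but that's impossible since $x \le v_1$ lets the other seller's complementary price be positive; so $p > 0$ and the argument that $t$ sits at a band boundary still applies — when $t \le v_n$ we'd similarly push up to $t = v_n$.) The ``in particular'' clause for a non-trivial NE $(p,q)$ is then immediate: non-triviality means positive quantity is sold, so $q \le v_1$, and applying the statement with $x = q$ gives $p + q = v_i$ for some $i$.

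The main obstacle — really the only subtlety — is handling the boundary and degenerate cases cleanly: ensuring that a best response is never the zero price (so that the ``slightly increase the price'' perturbation is available and strictly improves revenue), and correctly treating prices $x$ with $x \ge v_n$ versus $x < v_n$, as well as the top band where $\mathcal{D}(t) = d_n$ for all $t \le v_n$ (there the perturbation argument still forces $p + x = v_n$ unless... one must check that raising to exactly $v_n$ is weakly better, which it is since $d_n \ge d_i$ and the price is larger). I would organize this as: (1) a best response yields total price at most $v_1$ and is strictly positive; (2) if the total price were strictly between two consecutive values $v_{i+1}$ and $v_i$ (or strictly below $v_n$), a small upward perturbation strictly increases revenue; (3) conclude $p + x \in \{v_1, \dots, v_n\}$; (4) specialize to NE. This is short and elementary, so I would keep the write-up to a few lines, taking care only with the ``is it ever optimal to price at zero'' point.
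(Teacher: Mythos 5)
Your proof is correct and follows exactly the paper's reasoning: the paper justifies this observation in one line by noting that if the total price fell strictly inside a demand band, the responding seller could slightly raise his price, sell the same quantity, and strictly increase revenue. Your write-up is simply a more careful elaboration of that same perturbation argument, including the boundary cases (total price below $v_n$ or above $v_1$, and the degenerate tie at $x=v_1$) that the paper leaves implicit.
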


Next, we prove a useful lemma claiming that the set of equilibria with a particular total price is convex.
Intuitively, the idea in the proof is that a seller with a higher offer cares more about changes in the demand than a seller with a lower offer. Therefore, if the seller with the higher offer decided not to deviate to an increased price, clearly the other seller would not deviate as well. The proof of the lemma appears in appendix \ref{app:eq-are-convex}.
	
\begin{lemma}
	\label{obs:total-price-convex}
If $(p,q)$ is a pure NE then $(x, p+q-x)$ is also a pure NE for every
$x\in [\min \{p,q\}, \max \{p,q\}]$.
	In particular, $((p+q)/2, (p+q)/2)$ is also a pure NE.
\end{lemma}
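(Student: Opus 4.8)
The plan is to exploit the symmetry between the two sellers and reduce to the following: if $(p,q)$ is a pure NE with $p\le q$, then every split $(p',q')$ with $p'+q'=p+q$ and $p\le p'\le q'\le q$ is again a NE. Granting this, for any $x\in[\min\{p,q\},\max\{p,q\}]$ exactly one of $x,\,p+q-x$ lies in $[p,(p+q)/2]$ and the other in $[(p+q)/2,q]$, so applying the reduced statement to $\{p',q'\}=\{x,\,p+q-x\}$ shows that $(x,p+q-x)$ is a NE; the case $x=(p+q)/2$ gives the displayed ``midpoint'' assertion.

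First I would dispose of the trivial equilibria. If nothing is sold at $(p,q)$, then $p+q>v_1$ and moreover $p,q\ge v_1$, since otherwise a seller whose price is below $v_1$ could cut its price so that the total drops to $v_1$ and earn strictly positive revenue, contradicting equilibrium. But then for any $x$ between $\min\{p,q\}$ and $\max\{p,q\}$ we have $x\ge v_1$ and $p+q-x\ge v_1$, so at $(x,p+q-x)$ neither seller can earn anything by any deviation, and $(x,p+q-x)$ is again a (trivial) equilibrium.

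Now suppose $(p,q)$ is non-trivial. By Observation~\ref{obs:BR-price-is-value}, $p+q=v_k$ for some $k$, so the demand is $d_k$ and the two revenues are $p\,d_k$ and $q\,d_k$, with $0\le p\le q\le v_k$. The maximum revenue a seller can earn against an opponent priced $c\in[0,v_k]$ is $\max_{j:\,v_j>c}(v_j-c)\,d_j$, attained by posting $v_j-c$; hence a split $(c,v_k-c)$ is a NE exactly when
\[
(v_k-c)\,d_k\ \ge\ (v_j-c)\,d_j\qquad\text{for every }j\text{ with }v_j>c,
\]
together with the same inequality for $v_k-c$ in place of $c$. I would then rewrite each such scalar inequality as a half-line constraint on $c$: for $j<k$ it is equivalent to $c\le T_j:=\frac{v_kd_k-v_jd_j}{d_k-d_j}$ (the denominator is positive), and on $c\in[0,v_k]$ the side condition $v_j>c$ is then automatic; for $j>k$ it is equivalent to $c\ge S_j:=\frac{v_jd_j-v_kd_k}{d_j-d_k}$, and since $S_j\le v_j$ the requirement ``$v_j>c\Rightarrow$ the inequality holds'' collapses to simply $c\ge S_j$. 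Consequently $(c,v_k-c)$ is a NE if and only if both $c$ and $v_k-c$ lie in the interval $I:=\big[\max_{j>k}S_j,\ \min_{j<k}T_j\big]\cap[0,v_k]$ (empty $\min$/$\max$ read as $\pm\infty$).

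To finish, observe that the given NE $(p,q)$ forces $p,q\in I$, so the whole segment $[p,q]$ lies in $I$; and since $p+q=v_k$, whenever $x\in[p,q]$ we also have $v_k-x\in[p,q]\subseteq I$. Hence $(x,v_k-x)$ is a NE for every $x\in[\min\{p,q\},\max\{p,q\}]$, and in particular for $x=(p+q)/2$. The only step that takes a little care is the book-keeping of which constraints are ``active'' (i.e.\ have $v_j>c$) as $c$ varies — in particular the verification that for $j>k$ the side condition may be discarded because $S_j\le v_j$; once this is in hand, the lemma is just the remark that an intersection of rays $\{c\le T_j\}$ and $\{c\ge S_j\}$ is an interval.
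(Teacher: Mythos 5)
Your argument is correct, but it takes a genuinely different route from the paper's. The paper proves the lemma by a short ``deviation transfer'' contradiction that works for any downward-sloping demand and needs no case split: if some seller in the profile $(x,\,p+q-x)$ had a profitable deviation $\Delta$, i.e.\ $\mathcal{D}(p+q+\Delta)\cdot\Delta>(\mathcal{D}(p+q)-\mathcal{D}(p+q+\Delta))\cdot x$, then for $\Delta>0$ the same inequality holds with $x$ replaced by the smaller price $\min\{p,q\}$, and for $\Delta<0$ with $x$ replaced by the larger price $\max\{p,q\}$, so the very same deviation would already be profitable for one of the sellers at the original equilibrium $(p,q)$ --- a contradiction. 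You instead fix the total price $v_k$ (via Observation~\ref{obs:BR-price-is-value}), enumerate the candidate deviations $v_j-c$, and characterize the equilibrium splits as exactly those $c$ with both $c$ and $v_k-c$ in the interval $I=\bigl[\max_{j>k}S_j,\ \min_{j<k}T_j\bigr]\cap[0,v_k]$; convexity is then immediate, since $x\in[p,q]$ forces $v_k-x\in[p,q]\subseteq I$. Your bookkeeping of the side conditions is right (in particular $S_j\le v_j$ because $v_j\le v_k$ for $j>k$), and your route buys strictly more than the lemma asks: a complete description of the equilibrium set at each price level as a symmetric interval of splits around $v_k/2$, which re-proves the midpoint claim and could be reused to compute all equilibria. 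What the paper's argument buys is brevity and generality: it does not use Observation~\ref{obs:BR-price-is-value} or the step structure of the demand, and it covers trivial equilibria in the same stroke, whereas you need a separate (easy) trivial case. One small repair there: if $\min\{p,q\}<v_1$, the profitable price cut belongs to the \emph{opponent} of the low-priced seller, who can deviate to $v_1-\min\{p,q\}>0$; the low-priced seller itself may be unable to bring the total down to $v_1$ with a nonnegative price. The conclusion $p,q\ge v_1$ for trivial equilibria, and hence your trivial case, stands.
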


We next observe that there is no conflict between welfare and revenue in equilibrium:
an equilibrium with the highest welfare also has the highest equilibrium revenue. This holds
since
equilibria with lower total price obtain higher revenue and welfare.
We can thus say that any equilibrium with minimal total price is the ``best" as it is as good as possible on both dimensions: welfare and revenue. Similarly,  any equilibrium with maximal total price is the ``worst".
\begin{proposition}
\label{obs:best-NE-well-defined}
	Both welfare and revenue of equilibria are monotonically non-increasing in the total price.
	Therefore, an equilibrium with the minimal total price has both the highest welfare and the highest revenue, among all equilibria.
	Similarly, an equilibrium with the maximal total price has both the lowest welfare as well as the lowest revenue, among all equilibria.
\end{proposition}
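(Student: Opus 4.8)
The plan is to prove the two monotonicity statements separately, since the welfare part is essentially a restatement of the definition while the revenue part is where the equilibrium hypothesis is actually used.

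First I would dispose of welfare. From the closed form $SW(x)=\sum_{i:\,x<v_i} v_i(d_i-d_{i-1})$, every summand is strictly positive (because $v_i>0$ and the demand vector is strictly increasing), and enlarging $x$ only shrinks the index set $\{i:x<v_i\}$. Hence $SW$ is non-increasing in the total price over \emph{all} prices, in particular along equilibria; the equilibrium structure plays no role here.

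The substance is the revenue claim. Let $(p,q)$ and $(p',q')$ be two non-trivial equilibria; by Observation \ref{obs:BR-price-is-value} their total prices are demand values, say $v_i$ and $v_j$, and I may assume $v_i>v_j$ (equal total prices being trivial), hence $i<j$ and $d_i<d_j$. The goal is $R(v_i)\le R(v_j)$, i.e.\ $v_i d_i\le v_j d_j$. By Lemma \ref{obs:total-price-convex} the symmetric profile $(v_j/2,v_j/2)$ is also an equilibrium, so I would test a single deviation there: one seller raises his price from $v_j/2$ to $v_i-v_j/2$. This price is non-negative since $v_i>v_j>v_j/2$, and it makes the total price exactly $v_i$, hence the demand exactly $d_i$. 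Unprofitability of this deviation gives $\tfrac{v_j}{2}d_j\ge\bigl(v_i-\tfrac{v_j}{2}\bigr)d_i$; rearranging yields $\tfrac{v_j}{2}(d_i+d_j)\ge v_i d_i$, and then $d_i<d_j$ gives $v_j d_j>\tfrac{v_j}{2}(d_i+d_j)\ge v_i d_i$, which is even a strict version of what we want.

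The ``therefore'' sentence is then immediate: an equilibrium whose total price is smallest among all equilibria has total price at most that of any other equilibrium, so by the two monotonicity statements its welfare and revenue dominate those of every equilibrium; the largest-total-price case is symmetric. I do not expect a real obstacle here — the only point requiring care is choosing, in the revenue step, the deviation that lands precisely on the other equilibrium's total price and checking it is a legal (non-negative) price; everything else is a one-line computation.
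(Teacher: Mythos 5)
Your proof is correct and follows essentially the same route as the paper: symmetrize the lower-total-price equilibrium via Lemma \ref{obs:total-price-convex} and use the unprofitability of the single upward deviation that lands on the higher equilibrium total price. The only (immaterial) difference is in finishing the algebra — the paper bounds $w-v/2\ge w/2$ directly, while you rearrange and invoke $d_i<d_j$, yielding the same (in fact strict) conclusion.
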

\begin{proof}
	Consider two equilibria, one with total price $v$ and the other with total price $w>v$. The claim that the welfare is non-increasing in the total price follows immediately from the definition.
We will show that for $w>v$ it holds that $R(v)\geq R(w)$.
	
	
	Lemma \ref{obs:total-price-convex} shows that if there is an equilibrium with total price $p$ then $(p/2,p/2)$ is also an equilibrium. As $(v/2,v/2)$ is an equilibrium, it holds that deviating to $w-v/2$ is not beneficial for a seller, and thus
	$R(v)/2\geq (w-v/2)\mathcal{D}(w)\geq (w/2) \mathcal{D}(w)= R(w)/2$ and thus $R(v)\geq R(w)$ as claimed.
\end{proof}

Finally, we give a variant of a classic result by Cournot \cite{Cournot1838}, which shows, somewhat counterintuitively, that
a single monopolist that sells two complementary goods is better for the society than two competing sellers for each selling one of the good. 
\begin{proposition}
	The total price in any equilibrium is at least as high as the minimal monopolist price $p^*$.
	Thus, the welfare and revenue achieved by the monopolist price $p^*$
	are at least as high as the welfare and revenue (resp.) of the best equilibria.
\end{proposition}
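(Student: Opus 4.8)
The plan is to argue by contradiction. Suppose some equilibrium $(p,q)$ has total price $x := p+q$ strictly smaller than the minimal monopolist price $p^*$; I will contradict the best-response condition. The main tool is Lemma~\ref{obs:total-price-convex}: it lets me pass to the \emph{symmetric} equilibrium $(x/2,x/2)$, which has the same total price $x$ and makes the deviation I want to analyze symmetric and hence clean.

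In the profile $(x/2,x/2)$ each seller earns $\tfrac{x}{2}\mathcal{D}(x)$. Consider a unilateral deviation of one seller from $x/2$ to the price $p^*-x/2$. Since $x<p^*$ we have $p^*-x/2>p^*/2>0$, so this is a legal price, and it moves the total price to exactly $p^*$; the deviating seller would then earn $(p^*-x/2)\,\mathcal{D}(p^*)$. The equilibrium condition therefore gives $\tfrac{x}{2}\mathcal{D}(x)\ge (p^*-x/2)\mathcal{D}(p^*)=R(p^*)-\tfrac{x}{2}\mathcal{D}(p^*)$, i.e. $\tfrac{x}{2}\bigl(\mathcal{D}(x)+\mathcal{D}(p^*)\bigr)\ge R(p^*)$. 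Now invoke two monotonicity facts already available: the demand curve is downward sloping and $p^*>x$, so $\mathcal{D}(p^*)\le \mathcal{D}(x)$; hence $\tfrac{x}{2}\bigl(\mathcal{D}(x)+\mathcal{D}(p^*)\bigr)\le x\,\mathcal{D}(x)=R(x)$. Combining, $R(x)\ge R(p^*)$. But $R(p^*)=\sup_y y\,\mathcal{D}(y)$ is the monopoly revenue, so in fact $R(x)=R(p^*)$, i.e. $x$ is itself a monopolist price — contradicting that $p^*$ is the \emph{minimal} one, since $x<p^*$. Hence every equilibrium has total price at least $p^*$.

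The second assertion is then immediate from monotonicity. The social welfare $SW(\cdot)$ is non-increasing in the total price directly from its definition, so the best equilibrium, whose total price is some $x^*\ge p^*$, has welfare $SW(x^*)\le SW(p^*)$; and its revenue is $R(x^*)\le\sup_y y\,\mathcal{D}(y)=R(p^*)$ trivially (equivalently, by Proposition~\ref{obs:best-NE-well-defined} the equilibrium revenue is non-increasing in the total price).

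I expect the only subtle point to be the handling of the minimal monopolist price: one has to check that $x<p^*$ genuinely contradicts minimality, which is precisely what closes the argument without having to chase the degenerate equality case $\mathcal{D}(x)=\mathcal{D}(p^*)$ separately. A minor side remark: trivial equilibria need no special treatment here, since for them $\mathcal{D}(x)=0$ forces $x>v_1\ge p^*$ (the monopolist never prices above $v_1$, where demand vanishes while $v_1 d_1>0$), so the assumption $x<p^*$ is vacuous in that case.
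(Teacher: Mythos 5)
Your proof is correct and follows essentially the same route as the paper: symmetrize the putative low-price equilibrium to $(x/2,x/2)$ via Lemma~\ref{obs:total-price-convex} and rule out the unilateral deviation to $p^*-x/2$. The only (cosmetic) difference is the final bookkeeping — the paper uses $p^*-x/2>p^*/2$ to get the strict inequality $R(x)>R(p^*)$ and contradict revenue-maximality directly, while you pass through $\mathcal{D}(p^*)\le\mathcal{D}(x)$ to get $R(x)\ge R(p^*)$ and then contradict the \emph{minimality} of $p^*$; both are fine.
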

\begin{proof}
	Assume that there is an equilibrium with total price $p<p^*$. As $p^*$ is the minimal monopolist price it holds that $R(p)<R(p^*)$. Additionally, as there is an equilibrium with total price $p$ then by Lemma \ref{obs:total-price-convex} the pair $(p/2,p/2)$ is  an equilibrium, where each seller has revenue $R(p)/2$.
	As $p<p^*$ a seller might deviate to $p^*-p/2>p^*/2>0$, and since such deviation is not beneficial, it holds that
	$R(p)/2 \geq (p^*-p/2) \mathcal{D}(p^*)> (p^*/2) \mathcal{D}(p^*) = R(p^*)/2$ and thus $R(p)>R(p^*)$, a contradiction.

By Proposition \ref{obs:best-NE-well-defined}, it follows that the welfare and revenue achieved by the minimal monopolist price $p^*$ are no less than those in the best equilibrium.
\end{proof}

\section{Existence of Non-Trivial Equilibria}	
\label{sec:existence}


In this section we show that non-trivial equilibria always exist.
We first note that the structural lemmas from the previous sections seem to get us almost there:
We know from Obs. \ref{obs:BR-price-is-value} that the total price in equilibrium must equal one of the $v_i$'s; We also know that if $p,q$ is an equilibrium, then $(\frac{p+q}{2},\frac{p+q}{2})$ is also an equilibrium.
Therefore, if an equilibrium exists, then $(\frac{v_i}{2},\frac{v_i}{2})$ must be an equilibrium for some $i$.
However, these observations give a simple way of finding an equilibrium {\em if an equilibrium indeed exists}, but  they do not prove existence on their own.

We give a constructive existence proof, by showing an algorithm based on an artificial dynamics that always terminates in a non-trivial equilibrium. The algorithm is essentially a sequence of best responses by the sellers, but with a twist: after every best-response step the prices are averaged.
We show that this dynamics always stops at a non-trivial equilibrium and thus in particular, such equilibria exist.
Moreover, when starting from prices of zero, the dynamics terminates at the best equilibrium.
We formalize these claims in Proposition \ref{prop:sym-dynamics} below, from which we can clearly derive the existence of non-trivial equilibrium claimed in the next theorem as an immediate corollary.



\begin{theorem}
\label{thm:existence}
		For any instance $(\vec{v},\vec{d})$ there exists at least one non-trivial pure Nash equilibrium.
\end{theorem}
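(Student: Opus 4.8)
The plan is a constructive proof: I exhibit a monotone ``symmetrized best-response'' process, started from the zero profile, and show it halts at a non-trivial equilibrium. Restrict attention to symmetric profiles. Let $\beta(x)$ be the \emph{largest} best response to a price $x$ (ties among best responses broken towards the highest price), put $x_0=0$, and iterate $x_{t+1}=\tfrac12\bigl(x_t+\beta(x_t)\bigr)$ --- read as ``the active seller best-responds, then the two prices are re-averaged''. This is well defined and stays in the region we want: whenever $x\le v_1$ the response $q=v_1-x$ already earns positive revenue $(v_1-x)\,d_1>0$, so $\beta(x)>0$; and by Observation~\ref{obs:BR-price-is-value} the total price $x_t+\beta(x_t)$ equals some $v_i$, so $2x_{t+1}\in\{v_1,\dots,v_n\}$ for every $t\ge 0$ and then $x_{t+1}=v_i/2<v_1$, which maintains the invariant $x_t<v_1$ and shows the total price is a positive element of $\{v_1,\dots,v_n\}$ from step $1$ on.

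The core is the claim that $t\mapsto 2x_t$ is non-decreasing; I would prove $2x_t\le 2x_{t+1}$ for all $t\ge 1$ by induction ($t=0$ being the trivial $0\le 2x_1$, where moreover $2x_1=\beta(0)$ is a monopolist price $p^\ast$). Fix $t\ge 1$, write $v_k:=2x_t$, $x':=x_{t-1}$, and assume inductively $x'\le x_t=v_k/2$. The seller facing $x_t$ selects a total price $v_j>x_t$ maximizing $(v_j-x_t)\,d_j$; since the option $j=k$ is available with value $(v_k-x_t)d_k=x_t\,d_k$, it suffices to rule out that some $j$ with $v_j<v_k$ (so $d_j>d_k$) does strictly better. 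If it did, then $(v_j-x_t)d_j>x_t\,d_k$ with $x_t=v_k/2$ gives $v_j d_j>\tfrac{v_k}{2}(d_k+d_j)$. But at the previous step the total price $v_k$ was an optimal response to $x'$, and $v_j>x_t\ge x'$ makes $j$ a legal option there, so $(v_k-x')d_k\ge(v_j-x')d_j$, i.e.\ $v_j d_j\le v_k d_k+x'(d_j-d_k)$. Combining, $\tfrac{v_k}{2}(d_k+d_j)<v_k d_k+x'(d_j-d_k)$, which rearranges to $\bigl(\tfrac{v_k}{2}-x'\bigr)(d_j-d_k)<0$; since $d_j-d_k>0$ this forces $x'>v_k/2$, contradicting the inductive hypothesis $x'\le v_k/2$. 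Hence a maximizer with $v_j\ge v_k$ exists, the tie-break picks it (or a larger one), and $2x_{t+1}\ge 2x_t$.

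Given monotonicity the conclusion is immediate: $(2x_t)_{t\ge1}$ is non-decreasing, bounded by $v_1$, and valued in the finite set $\{v_1,\dots,v_n\}$, so it can strictly increase at most $n-1$ times and is eventually constant; say it stabilizes at $(x,x)$ with $2x=v_i$. Unwinding the recurrence, $2x_{t+1}=2x_t$ means $\beta(x)=x$, so $x$ is a best response to itself and $(x,x)$ is a pure Nash equilibrium. It is non-trivial, because $x=v_i/2>0$ and the demand at total price $v_i$ is $\mathcal{D}(v_i)=d_i\ge d_1>0$, so a positive quantity is sold. This proves the theorem. (Started from zero prices the dynamics in fact reaches the equilibrium of \emph{minimal} total price, hence of maximal revenue and welfare by Proposition~\ref{obs:best-NE-well-defined}; that sharper statement is Proposition~\ref{prop:sym-dynamics}, but only existence is needed here.)

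I expect the monotonicity induction to be the one real obstacle. Plain (non-symmetrized) best response can cycle, so the averaging step is essential; and the only way a seller could push the total price \emph{down} is by jumping to a lower value with much larger demand --- excluding this is exactly what the comparison with the previous round's optimality condition (plus a consistent tie-break) accomplishes. Everything else is bookkeeping on a step function.
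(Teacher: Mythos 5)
Your proposal is correct and follows essentially the same route as the paper: the symmetrized (average-then-best-respond) dynamics started from zero prices, the observation that the total price always equals some $v_i$, monotonicity of the total price, and finiteness of the value set forcing termination at a symmetric, non-trivial equilibrium. The only difference is cosmetic: you establish monotonicity by induction along the trajectory using the previous round's optimality condition together with a highest-price tie-break, whereas the paper proves a standalone Monotonicity Lemma (Lemma \ref{lemma:monotonicity-of-prices}) valid for arbitrary best-response selections, which makes the tie-breaking unnecessary; both arguments hinge on the same comparison.
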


Before we formally define the dynamics,
we prove a simple lemma showing that the total price weakly increases
as one seller best-responds to a higher price.



\begin{lemma}
\label{lemma:monotonicity-of-prices}
 \textbf{(Monotonicity Lemma.)}
	Let $br_x\in BR(x)$ be a best reply of a seller to a price $x$ and let $br_y\in BR(y)$ be a best reply of a seller to a price $y$.
	If $x<y\leq v_1$ then $y+br_y\geq x+br_x$.
\end{lemma}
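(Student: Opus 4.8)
The plan is to argue by contradiction, using the two best-response optimality conditions together with Observation~\ref{obs:BR-price-is-value}. Write $s = x + br_x$ and $t = y + br_y$ for the two total prices. Since $x < y \le v_1$, Observation~\ref{obs:BR-price-is-value} applies to both sellers, so $s,t \in \{v_1,\dots,v_n\}$. The goal is to show $t \ge s$, so suppose toward a contradiction that $s > t$. Because $br_y \ge 0$ we have $t = y + br_y \ge y > x$, and hence under our assumption $s > t \ge y > x$; in particular $t - x > 0$ and $s - y > 0$, so the prices $t-x$ and $s-y$ are legitimate (non-negative) deviations for the seller facing $x$ and the seller facing $y$, respectively.

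The core step is to combine the optimality inequalities. Optimality of $br_x$ against $x$ (comparing its revenue to that of the deviation $t-x$, which yields total price $t$) gives $(s-x)\,\mathcal{D}(s) \ge (t-x)\,\mathcal{D}(t)$, while optimality of $br_y$ against $y$ (comparing to the deviation $s-y$, which yields total price $s$) gives $(t-y)\,\mathcal{D}(t) \ge (s-y)\,\mathcal{D}(s)$. Adding these two inequalities, the terms not involving $x$ or $y$ cancel and one is left with $(y-x)\,\mathcal{D}(s) \ge (y-x)\,\mathcal{D}(t)$; since $y - x > 0$, this forces $\mathcal{D}(s) \ge \mathcal{D}(t)$.

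To finish I would invoke discreteness: since $s > t$ and both are demand values, write $s = v_i$ and $t = v_j$; then $v_i > v_j$ gives $i < j$, and as the demand levels are strictly increasing, $\mathcal{D}(s) = d_i < d_j = \mathcal{D}(t)$, contradicting $\mathcal{D}(s) \ge \mathcal{D}(t)$. Hence $t \ge s$, as claimed.

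I expect the one real subtlety to be this last step: the ``add the two inequalities'' trick, using only that $\mathcal{D}$ is non-increasing, merely yields $\mathcal{D}(s) = \mathcal{D}(t)$, which is not by itself absurd. It is precisely the discreteness of the demand curve — channeled through Observation~\ref{obs:BR-price-is-value}, which pins both total prices to the values $v_i$, where $\mathcal{D}$ is \emph{strictly} increasing across distinct levels — that turns this equality into a contradiction. The only other thing to be careful about is checking that the deviations considered are to non-negative prices, which is exactly what the chain $s > t \ge y > x$ guarantees.
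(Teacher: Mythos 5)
Your proof is correct and takes essentially the same route as the paper's: both invoke Observation \ref{obs:BR-price-is-value} to place the two total prices at demand values and then use exactly the same pair of cross inequalities, comparing each best reply against the deviation that reaches the other total price. The only divergence is the final algebra --- the paper divides the two inequalities and uses that $(a-x)/(b-x)$ is non-decreasing in $x$ iff $a \ge b$, whereas you add them, cancel, and conclude by contradiction from the strict monotonicity of the discrete demands $d_1<\cdots<d_n$ --- an equally valid finish (and one that sidesteps any sign concerns when dividing).
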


\begin{proof}
	As $x<y\leq v_1$ by Observation \ref{obs:BR-price-is-value}, we know that there exists $i$ such that $x+br_x=v_i$ and $j$ such that $y+br_y=v_j$. 
As the second seller is best responding at each price level,
	$\mathcal{D}(v_i) (v_i - x) \ge \mathcal{D}(v_j) (v_j - x)$ and $\mathcal{D}(v_i)(v_i - y) \le \mathcal{D}(v_j) (v_j - y)$. Together, we get that $(v_j-x)/(v_i-x) \le \mathcal{D}(v_i)/\mathcal{D}(v_j) \le (v_j-y)/(v_i-y)$.
	Now notice that the function $(a-x)/(b-x)$ is non-decreasing in $x$ iff $a \ge b$ thus, since $y>x$, it
	follows that $v_j \ge v_i$.
\end{proof}

We next formally define the price-updating dynamics that we call {\em symmetrized best response dynamics}.
It works similarly to the best response dynamics with one small difference: at each step, before a seller acts, the price of both sellers is replaced by their average price.

More formally, we start from some profile of prices $(x_0,y_0)$. We then
symmetrize the prices to $(\frac{x_0+y_0}{2},\frac{x_0+y_0}{2})$, and then we let the first seller best reply to get prices $(x_1,y_1)$,
where $x_1 \in BR(\frac{x_0+y_0}{2})$ and $y_1=\frac{x_0+y_0}{2}$.
In one case, when the utility of the seller is 0, we need to break ties carefully: if  $0\in BR(\frac{x_0+y_0}{2})$ then we assume that $x_1=0$, that is, a seller with zero utility prices at 0.
We then symmetrize again to $(\frac{x_1+y_1}{2},\frac{x_1+y_1}{2})$, and then we let the second seller best respond, symmetrize again, and continue similarly in an alternating order.
The dynamic stops if the price remains unchanged in some step.

It turns out that symmetrized best response dynamics quickly converges to a non-trivial equilibrium. Moreover, we show that this dynamics is guaranteed to end up in the best equilibria. Theorem \ref{thm:existence} follows from the following proposition.

\begin{proposition}\label{prop:sym-dynamics}
For any instance
with $n$ demand levels, the symmetrized best response dynamics starting with prices $(0,0)$ reaches a non-trivial equilibrium in at most $n$ steps, in each of them the total price increases.
Moreover, 
this equilibrium achieves the highest social welfare and the highest revenue among all equilibria.


\end{proposition}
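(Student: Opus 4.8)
The plan is to analyze the symmetrized best response dynamics starting from $(0,0)$ and track the total price through the iterations. First I would observe that after the initial symmetrization nothing changes (the prices are already $(0,0)$), and the first seller best-responds to price $0$; by Observation \ref{obs:BR-price-is-value} the new total price is exactly some $v_i$, so the total price jumps from $0$ to a positive demand value in the first step. The key invariant to establish is that the total price is \emph{monotonically non-decreasing} across all steps of the dynamics. This should follow from the Monotonicity Lemma (Lemma \ref{lemma:monotonicity-of-prices}): at each step, one seller faces the symmetrized price $t/2$ where $t$ is the current total price, and best-responds. I would compare two consecutive steps: if the total price was $t$ before step $k$ and $t'$ after, and then $t''$ after step $k+1$, the seller acting in step $k+1$ is best-responding to $t'/2 \ge t/2$ (using monotonicity of the first step and induction), so by Lemma \ref{lemma:monotonicity-of-prices} applied with $x = t/2$, $y = t'/2$ we get $t'' \ge t'$. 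I need to be a little careful that the relevant prices stay $\le v_1$ so the lemma applies; since the total price is some $v_i \le v_1$, the symmetrized single-seller price is $v_i/2 \le v_1$, which is fine, and the tie-breaking rule ($0 \in BR \Rightarrow$ price $0$) handles the degenerate case cleanly.

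Next I would argue termination in at most $n$ steps: the total price after step $1$ is one of $v_1,\dots,v_n$, it is non-decreasing, and it only takes values in this finite set of size $n$ (plus the starting value $0$). Each step either strictly increases the total price — moving it up the chain $v_n > v_{n-1} > \dots > v_1$, equivalently increasing the index downward... more precisely each step either increases the total price to a strictly larger $v$-value or leaves it unchanged. If it leaves the total price unchanged for a full step, I claim we have reached an equilibrium: the acting seller best-responded to $t/2$ and got back total price $t$, meaning $t/2 \in BR(t/2)$, so $(t/2, t/2)$ is a profile where one seller is best-responding; by symmetry (the situation is symmetric in the two sellers once prices are equal) the other seller is also best-responding, hence $(t/2,t/2)$ is a Nash equilibrium, and it is non-trivial since $t = v_i > 0$ means $\mathcal{D}(t) \ge d_i > 0$. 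Since the total price can strictly increase at most $n-1$ more times after reaching its first value among the $v_i$'s (or I can just say: it takes at most $n$ distinct values total, so within $n$ steps it must repeat, and the first repeat is an equilibrium), the dynamics stops within $n$ steps.

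Finally, for the optimality claim, I would show the terminal equilibrium has the \emph{minimal} total price among all non-trivial equilibria; by Proposition \ref{obs:best-NE-well-defined} this gives it the highest welfare and revenue. Let $(p^\star, q^\star)$ be any non-trivial equilibrium with total price $v_j$; by Lemma \ref{obs:total-price-convex}, $(v_j/2, v_j/2)$ is also an equilibrium. The dynamics starts from total price $0 < v_j$ and at every step the acting seller best-responds to a symmetrized price that I claim never exceeds $v_j/2$ — this is the heart of the argument, and I'd prove it by induction: if the current total price $t \le v_j$, then the seller best-responds to $t/2 \le v_j/2$; since $v_j/2 \in BR(v_j/2)$ gives total price $v_j$, the Monotonicity Lemma with $x = t/2 \le y = v_j/2$ forces the new total price $t' = (t/2) + BR(t/2) \le (v_j/2) + v_j/2 = v_j$. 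Hence the total price at termination is $\le v_j$ for every equilibrium total price $v_j$, i.e., it is the minimum. The main obstacle I anticipate is the careful bookkeeping around the tie-breaking convention and ensuring the Monotonicity Lemma's hypothesis ($x < y \le v_1$, versus $x \le y$) is met at each application — in particular handling the case where the symmetrized price equals an earlier one (so $x = y$, giving equality trivially) versus strictly increases — but this is routine once the invariant is stated correctly.
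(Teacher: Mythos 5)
Your proposal is correct and follows essentially the same route as the paper's proof: monotonicity of the total price via Lemma \ref{lemma:monotonicity-of-prices}, termination within $n$ steps because the total price lives in the finite set $\{v_1,\dots,v_n\}$, and optimality via Proposition \ref{obs:best-NE-well-defined} by showing the trajectory never exceeds the lowest equilibrium total price (the paper phrases this as a comparison between dynamics started from two total prices, which when specialized to the constant trajectory at the best equilibrium is exactly your induction). The edge cases you flag (tie-breaking at an equilibrium price, the $x=y$ case of the lemma) are handled in the paper by the same implicit convention that a seller who cannot strictly improve does not move, so no substantive gap remains.
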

\begin{proof}
We first argue that for any starting point, the sum of players' prices in the symmetrized dynamics is either monotonically increasing or monotonically decreasing.
To see that, let us look at the symmetric price profiles of two consecutive steps: $(x,x)$ and then $(y,y)$ where
	$y=(x+br_x)/2$ for some $br_x\in BR(x)$ and then $(z,z)$ where $z=(y + br_y)/2$ for some $br_y\in BR(y)$.
If $x=y$, then $(x,x)$ is an equilibrium and we are done. 
We first observe that if $y>x$ then $z \geq y$. 
Indeed, our monotonicity lemma (Lemma \ref{lemma:monotonicity-of-prices}) shows exactly that: if $y>x$ then
for any
$br_x\in BR(x)$ and  $br_y\in BR(y)$ it holds that
 $y+br_y \geq x+br_x$ and therefore $z\geq y$. Similarly, if $y<x$ then $z \leq y$.

To prove convergence, note that until the step where the process terminates,
the total price must be either strictly increasing or strictly decreasing. Due to Observation \ref{obs:BR-price-is-value}, the total price at each step must be equal to $v_i$ for some $i$. Since there are exactly $n$ distinct values, the process converges after at most $n$ steps.
Note that if we reach a price level of $v_n$ or $v_1$ the process must stop (no seller will have a best response that crosses these values), and a non-trivial equilibrium is reached.


%
%

Finally, we will show that a symmetrized dynamics starting at zero prices reaches an equilibrium with maximal revenue and welfare over all equilibria.
Using Proposition \ref{obs:best-NE-well-defined}, it is sufficient to show that such process reaches an equilibrium with minimum total price over all possible equilibria. This follows from the following claim:

\begin{claim}
The total price reached by a symmetrized best-response dynamics starting from a total price level $x$ is bounded from above by the total price reached by the same dynamics starting from a total price of $y>x$,
%
\end{claim}
\begin{proof}
It is enough to show that the prices reached after a single step from $x$ are at most those reached by a single step from $y$, since we can then repeat and show that this holds after all future steps.  For a single step this holds due to the monotonicity lemma (Lemma \ref{lemma:monotonicity-of-prices}): given some total price $z$, the new total price after a single step of symmetrizing the price and best responding is $f(z)=z/2+br_{z/2}$ for some $br_{z/2}\in BR(z/2)$, and since $y>x$ it holds that $f(y)\geq f(x)$ by Lemma \ref{lemma:monotonicity-of-prices}.
\end{proof}

We complete the proof by showing how the proposition follows from the last claim. Let $p$ be the total price of the highest welfare equilibrium (lowest equilibrium price).
We use the claim on total price $0$ and total price $p>0$. The symmetrized best-response dynamics starting at $p$ stays fixed and the total price never changes, while the dynamics starting at $0$ must strictly increase the total price at each step, and never go over $p$, and thus must end at $p$ after at most $n$ steps.
This concludes the proof of the proposition.
\end{proof}

\section{Best Response Dynamics}
\label{sec:best-response}

In the previous section we saw that non-trivial NE always exist in our price competition model, and that the best equilibrium can be easily computed. We now turn to discuss whether we can expect agents in these markets to reach such equilibria via natural adaptive heuristics.
We consider the process of repeated best responses. Such a process starts from some profile of prices $(p,q)$, then the first seller chooses a price which is a best response to $q$, the second seller best responds to the price chosen by the first seller, and they continue in alternating order. The process stops if no seller can improve his utility by changing his price. As we aim for non-trivial equilibria, a seller that cannot gain a positive profit chooses the best response of zero.
A sequential best response process has simple and intuitive rules.
The main difference between different possible dynamics of this form is in their starting prices.
We will study the importance of the choice of starting prices.

Our results for best-response dynamics are negative: we show that starting from cartel prices might result in  bad equilibria. We then consider starting from random prices and show that this might not help.  Finally, we show that convergence time of the dynamics may be very long, even with only two demand levels.



\subsection{Quality of the Dynamics' Outcomes}

Probably the most natural starting prices to consider in best responses dynamics are $(0,0)$. We start with a simple example that shows that such dynamics might result in an equilibrium with very low welfare, even when another equilibrium with high welfare exists.
The gap between the quality of these equilibria is in the order of $D$ (in Appendix \ref{app:gaps-between-eq-tight} we show that this is the largest possible gap between equilibria).

\begin{example}
\label{obs:equal-split-2}
Consider a market with 2 demand levels, $v_1=2$, $v_2=1$, $d_1=1$ and $d_2=D$.
Here, a best response dynamics starting from prices $(0,0)$ moves to $(1,0)$ and then ends in equilibrium prices $(1,1)$. This NE has welfare of $2$,
while $(1/2,1/2)$ is an equilibrium with welfare of $D+1$ and revenue of $D$.

It follows that even with $2$ demand levels, the total revenue in the highest revenue equilibrium can be factor $D/2$ larger than both the welfare and revenue of the equilibrium reached by best-response dynamics starting from prices $(0,0)$.

\end{example}

One might hope that starting the dynamics from a different set of prices will guarantee convergence to a good equilibrium.
Clearly, if the dynamics somehow starts from the prices of the best equilibrium it will immediately stop, but our goal is exactly to study whether the agents can adaptively reach such equilibria. 
One can consider two reasonable approaches for studying the starting points of the dynamics: the first approach assumes that the sellers initially agree to act as a cartel and price the bundle at the monopolist price, dividing the monopoly profit among themselves. It is well known that such a cartel is not stable, and sellers may have  incentives to deviate to a different price; We would like to understand where such dynamics will stop.
The second approach considers starting from a random pair of prices, and hoping that there will be a sufficient mass of starting points for which the dynamics converges to a good equilibrium. We move to study the two approaches below.


\subsubsection{Dynamics Starting at a Split of the Monopolist Price }

We now study best-response dynamics that start from a cartelistic solution: the total price at the starting stage is equal to the price a monopoly would have set had it owned the two selling firms.
In Example \ref{obs:equal-split-2}  we saw that splitting the monopolist price between the two sellers results in the best  equilibrium.
One may hope that this will generalize and such starting points ensure converging to good outcomes.
In Appendix \ref{app:cartel-prices-two-levels} we show that this is indeed the case for two demand levels.
%
%
%
%
However, we next show that even with three demand levels, the welfare and revenue of the equilibrium reached by such best-response dynamics can be much lower than the revenue of the best equilibria. This holds not only when the two seller split the monopolist price evenly, but for any cartelistic split of this price. Proof can be found in Appendix \ref{app:cartel-prices}.

\begin{proposition}
	\label{obs:brd-3-lb}
	For any large enough total demand $D$ there is an instance with $3$ demand levels and monopolist price $p^*$ for which best response dynamics
	starting from any pair  $(p^*- q,q)$ for $q\in [0,p^*]$, ends in an equilibrium of welfare and revenue of only $1$, while there exist another equilibrium of welfare and revenue at least $\sqrt{D}/4$.
\end{proposition}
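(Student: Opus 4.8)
The plan is to engineer a three-level instance where the monopolist price lies at an intermediate level $v_2$, but where $v_2$ is *not* a fixed point of the best-response map once the prices are symmetrized (or even asymmetric). Concretely, I would choose values $v_1 \gg v_2 \gg v_3$ and demands $d_1 < d_2 < d_3$ so that: (a) the monopoly revenue is maximized at $x=v_2$, i.e. $v_2 d_2 > v_1 d_1$ and $v_2 d_2 > v_3 d_3$; (b) starting from any split $(p^*-q,q)$ of $p^* = v_2$, the seller who moves first strictly prefers to *raise* the total price to $v_1$ — this needs $(v_1-q)d_1 > (v_2-q)d_2$ for the relevant range of $q$, which is achievable when $d_1$ is small relative to $d_2$ and $v_1$ is large; and (c) once the total price is at $v_1$, it stays there — $(v_1,0)$-type profiles (or $(v_1/2,v_1/2)$) are an equilibrium, which follows from Observation \ref{obs:BR-price-is-value} and the Monotonicity Lemma (Lemma \ref{lemma:monotonicity-of-prices}) since best responses can never decrease the total price once we are at the top level $v_1$. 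The "bad" equilibrium at total price $v_1$ has welfare and revenue $v_1 d_1$, which I normalize to $1$ by setting $v_1 d_1 = 1$.

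The next step is to verify that a "good" equilibrium of welfare/revenue $\Omega(\sqrt D)$ actually exists. By Lemma \ref{obs:total-price-convex} and Proposition \ref{obs:best-NE-well-defined}, it suffices to check that $(v_3/2, v_3/2)$ (total price $v_3$) is an equilibrium: neither seller should want to deviate upward to total price $v_2$ or $v_1$, which amounts to $(v_3/2)d_3 \ge (v_2 - v_3/2)d_2$ and $(v_3/2)d_3 \ge (v_1 - v_3/2)d_1$. Given the normalization $v_1 d_1 = 1$, I want $R(v_3) = v_3 d_3 \approx \sqrt D$. A clean parameter choice: set $d_1 = 1$, $d_3 = D$, $d_2 = \sqrt D$, and pick $v_3$ of order $1/\sqrt D$ (so $v_3 d_3 = \sqrt D$), $v_2$ of order $1$ (so $v_2 d_2 = \sqrt D$ as well, matching the monopoly condition), and $v_1 = 1/d_1 = 1$... — here I need to be careful, since I also need $v_1 > v_2$, so I would actually rescale so that $v_1$ is genuinely the largest value while still having $v_1 d_1$ small; taking $d_1 \to 0$ with $v_1 d_1$ fixed at $1$ and $v_1 \to \infty$ resolves the tension. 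I would then plug these into the inequalities (a)–(c) above and into the deviation inequalities for the good equilibrium, and check they all hold for $D$ large enough, possibly adjusting constants (the statement only asks for $\sqrt D/4$, so there is slack).

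The subtle point — and the main obstacle — is part (b): I must show the dynamics escapes $v_2$ *from every* split $(p^*-q,q)$, $q\in[0,p^*]$, not just the symmetric one. When the first seller best-responds to $q$, the total price becomes $v_i$ where $i$ maximizes $(v_i - q)d_i$; I need this maximizer to be $i=1$ for all $q \in [0, v_2]$. The worst case is $q = v_2$ (the other seller holds the entire monopoly price), where the deviating seller compares $(v_1 - v_2)d_1$ against $0$ and against $(v_3 - v_2)d_3 < 0$ — so moving to $v_1$ beats staying, but I also need it to beat moving to $v_3$, i.e. I need $(v_1-v_2)d_1 > 0$, trivially true, and more importantly I need that the seller doesn't prefer $v_2$ itself, i.e. doesn't want to keep total price at $v_2$ with his own price being $v_2 - q = 0$: revenue $0$, so any positive option dominates. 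The genuinely constraining case is intermediate $q$: I need $(v_1-q)d_1 \ge (v_2-q)d_2$ for all $q\in[0,v_2]$, which is tightest at $q=0$, giving $v_1 d_1 \ge v_2 d_2$ — but this *contradicts* the monopoly condition $v_2 d_2 > v_1 d_1$! So the resolution must be that when $q$ is small the first mover goes to $v_2$ (not $v_1$), but then on the *next* move the second seller, now facing a price near $v_2$, is pushed to $v_1$; I would therefore trace two or three rounds of the dynamics rather than one, using the Monotonicity Lemma to argue the total price is non-decreasing along the way and must land at $v_1$ since it cannot stabilize at $v_2$ (at total price $v_2$ with symmetric split $(v_2/2,v_2/2)$, a deviation to total price $v_1$ is profitable iff $(v_1 - v_2/2)d_1 > (v_2/2)d_2$, which I will arrange to hold). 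Nailing down this multi-round argument cleanly — and confirming that *no* reachable profile along the way is an equilibrium with total price $v_2$ or $v_3$ — is where the real care is needed; the parameter bookkeeping for the $\sqrt D$ bounds is routine by comparison.
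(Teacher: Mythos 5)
There is a genuine flaw in your construction, and it is structural rather than a matter of bookkeeping. You place the monopolist price at the middle level $v_2$ and the ``good'' equilibrium at total price $v_3<v_2$. But no equilibrium can have total price strictly below the monopolist price: this is exactly the Cournot-type proposition in Section 2.1 of the paper, and it is immediate in your own parameters --- at $(v_3/2,v_3/2)$ a seller can deviate to $v_2-v_3/2>v_2/2$ and earn more than $R(v_2)/2>R(v_3)/2$. Concretely, your no-deviation condition $(v_3/2)d_3\ge (v_2-v_3/2)d_2$ with $d_2=\sqrt{D}$, $d_3=D$, $v_3\approx 1/\sqrt{D}$, $v_2\approx 1$ reads $\sqrt{D}/2\ge\approx\sqrt{D}$, which is false; making it true forces $R(v_3)\gtrsim 2R(v_2)$, contradicting your monopoly condition $R(v_2)>R(v_3)$. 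The obvious repair --- put the good equilibrium at $v_2=p^*$ itself --- also cannot work: then the even split $(p^*/2,p^*/2)$ (and by Lemma \ref{obs:total-price-convex} a whole interval of splits) is already the good equilibrium, so the dynamics starting there never reaches the bad one, violating the ``for any pair $(p^*-q,q)$'' requirement. You correctly sensed a tension in your step (b) (the inequality $(v_1-q)d_1\ge(v_2-q)d_2$ at $q=0$ clashing with the monopoly condition), but the multi-round escape argument you sketch does not resolve the deeper obstruction above. A secondary issue: the Monotonicity Lemma does not say that plain best-response dynamics has non-decreasing total price, so your step (c) needs a direct argument (in the relevant profiles it holds because the other seller's price exceeds $v_2$, not by that lemma).

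The paper's construction avoids all of this by inverting your design: it takes $v_1=1,\,v_2=1/4,\,v_3=1/(3\sqrt{D})$ with $d_1=1,\,d_2=\sqrt{D},\,d_3=D$, so the monopolist price is the \emph{lowest} value, $p^*=v_3$, with $R(v_3)=\sqrt{D}/3$ only slightly above the good equilibrium revenue $R(v_2)=\sqrt{D}/4$ attained at $(v_2/2,v_2/2)$. Every split $(p^*-q,q)$ of $v_3$ has both prices at most $1/(3\sqrt{D})$, i.e.\ far below $v_2/2$, so whichever level the first best response lands on, the resulting split of $v_2$ or $v_3$ is extremely uneven; a short case analysis then shows the low-price seller is pushed up to total price $v_1$, where the dynamics stops with welfare and revenue $1$. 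The lesson is that the proposition hinges on separating the monopolist price from the price of the good equilibrium (the former strictly below the latter), which your design rules out from the start.
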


We conclude that starting from both sellers (arbitrarily) splitting the monopolist price does not ensure that the dynamics ends in a good equilibrium, even with only  three demand levels.

\subsubsection{Dynamics Starting at Random Prices }

We now consider a second approach for studying the role of starting prices in best-response dynamics. We assume that the starting prices are determined at random, and ask what are the chances that a sequence of best responses will reach a good equilibrium.
Unfortunately this approach fails as well. We next show that for any $\epsilon>0$, there is an instance with only two demand levels for which the dynamics starting from a uniform random price vector in $[0,v_1]^2$ has probability of at most $\epsilon$ of ending in an equilibrium with high welfare and revenue (although such equilibrium exists).\footnote{
In Appendix \ref{app:dynamic-random-prices} we show that this result is essentially tight.
}


\begin{proposition}[High probability of convergence to bad equilibria, $n=2$]
	\label{obs:conv-bad-NE-n is 2}

	For any\\   small enough $\epsilon>0$ and total demand $D$ such that $\epsilon D>2$,
	there is an instance with two demand levels ($n=2$) that has an equilibrium of welfare and revenue of at least $\epsilon D$, but best-response dynamics starting with uniform random pair of prices
in $[0,v_1]^2$ ends in an equilibrium of welfare and revenue of only $1$ with probability at least $1-\epsilon$.
\end{proposition}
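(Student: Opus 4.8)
The plan is to normalize $v_2=1$ and set $v_1=V$ for a parameter $V>1$ (of order $1/\epsilon$, fixed at the end), with $d_1=1/V$ and $d_2=D\,d_1=D/V$, so that the demand ratio is $D$. Then $R(v_1)=v_1d_1=1$ and $R(v_2)=v_2d_2=D/V$, and since $D>V$ in this regime $R(v_2)>R(v_1)$. By Observation~\ref{obs:BR-price-is-value} every non-trivial equilibrium has total price $v_1$ or $v_2$: the total-price-$v_1$ equilibria have revenue and welfare exactly $1$ (the ``bad'' ones), while the total-price-$v_2$ equilibrium has revenue $R(v_2)=D/V$ and welfare $v_1d_1+v_2(d_2-d_1)\ge D/V$ (the ``good'' one); choosing $V$ so that $D/V=\epsilon D$, the good equilibrium beats the bad one by a factor $\epsilon D$ in both measures.

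\textbf{The best-response map.} A seller facing a price $x\le v_1$ best responds by moving the total price to $v_1$ or to $v_2$; comparing the revenues $(v_2-x)d_2$ and $(v_1-x)d_1$ yields a threshold $\theta=\frac{v_2D-v_1}{D-1}$, which lies in $(0,v_2)$ because $R(v_2)>R(v_1)$, with $BR(x)=v_2-x$ for $x\in[0,\theta]$ and $BR(x)=v_1-x$ for $x\in(\theta,v_1]$. Consequently the good equilibria are exactly the pairs summing to $v_2$ with both coordinates in $[v_2-\theta,\theta]$, a set that is nonempty iff $2\theta\ge v_2$; with $V$ of order $1/\epsilon$ this amounts to the hypothesis $\epsilon D>2$. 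The bad equilibria are the pairs summing to $v_1$ with both coordinates in $(\theta,v_1-\theta)$.

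\textbf{Tracing the dynamics.} Since the first seller's opening move replaces $p_0$ by $BR(q_0)$, the trajectory depends (almost surely) only on $q_0$, which is uniform on $[0,v_1]$. I would partition $[0,v_1]$ into the five intervals $[0,v_2-\theta)$, $[v_2-\theta,\theta]$, $(\theta,v_1-\theta)$, $[v_1-\theta,v_1-v_2+\theta]$, $(v_1-v_2+\theta,v_1]$ and, on each, follow the best responses (at most three steps), using the two-piece form of $BR$ and the Monotonicity Lemma (Lemma~\ref{lemma:monotonicity-of-prices}) to rule out cycling: on the second and fourth intervals the process stabilizes at total price $v_2$ (a good equilibrium), while on the remaining three intervals it stabilizes at total price $v_1$ (a bad equilibrium). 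In each branch the terminal profile is checked to be a fixed point directly from the formula for $BR$.

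\textbf{Conclusion and main obstacle.} The ``good'' set of initial $q_0$ is a union of two intervals of length $2\theta-v_2$ each, hence of total measure $2(2\theta-v_2)<2v_2$; thus the good equilibrium is reached with probability $\frac{2(2\theta-v_2)}{v_1}$, which for $v_1$ of order $v_2/\epsilon$ is $O(\epsilon)$ and, for an appropriate choice of parameters, at most $\epsilon$ --- equivalently, the bad equilibrium (revenue and welfare $1$) is reached with probability at least $1-\epsilon$ while the good equilibrium is a factor $\epsilon D$ better. I expect the trajectory case analysis to be the main obstacle: correctly handling the two boundary regions of $q_0$ (near $0$ and near $v_1$), where the first best response lands far from equilibrium and the process needs an extra step, and verifying termination rather than cycling in all five cases; a secondary point is the arithmetic bookkeeping of choosing $V$ and $D$ so that the good equilibrium exists ($2\theta\ge v_2$), the gap is at least $\epsilon D$, and the good basin has measure at most $\epsilon v_1$, all at once, which pins down the admissible range of $\epsilon D$.
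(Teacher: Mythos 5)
Your construction is the paper's instance in disguise: rescaling prices by $\epsilon$ and demands by $1/\epsilon$ turns your $(v_1,v_2,d_1,d_2)=(1/\epsilon,\,1,\,1/V,\,D/V)$ with $V=1/\epsilon$ into the paper's $(1,\epsilon,1,D)$, so the approach is the same, and your threshold/basin analysis is in fact \emph{finer} than the paper's: the paper simply asserts that whenever the responded-to price exceeds $v_2$ the first best response already lands in a total-price-$v_1$ equilibrium, whereas you correctly identify a second ``good'' interval $[v_1-\theta,\,v_1-v_2+\theta]$ of starting prices near $v_1$ from which the dynamics settles (in three steps) at total price $v_2$. Your five-interval case analysis is right and terminates in every branch, so the ``main obstacle'' you flag is not where the trouble lies.

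The genuine gap is the closing step, ``for an appropriate choice of parameters, at most $\epsilon$.'' The requirement that the good equilibrium beat the bad one (revenue and welfare $1$) by a factor $\epsilon D$ forces $v_2/v_1\ge\epsilon$, i.e.\ $V\le 1/\epsilon$, and then the good basin has probability
\[
\frac{2(2\theta-v_2)}{v_1}\;=\;\frac{2\bigl((v_2/v_1)(D+1)-2\bigr)}{D-1}\;\ge\;\frac{2\bigl(\epsilon(D+1)-2\bigr)}{D-1},
\]
which exceeds $\epsilon$ as soon as $\epsilon D>4-3\epsilon$ and tends to $2\epsilon$ as $\epsilon D$ grows. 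So within your family (indeed, within any two-level instance meeting the $\epsilon D$-gap requirement) there is no choice of $V$ achieving probability at most $\epsilon$ in the large-$\epsilon D$ regime; what your analysis actually delivers is convergence to the bad equilibrium with probability $1-2\epsilon+O(1/D)$, or the literal $1-\epsilon$ bound only at the cost of halving the gap (run the construction with $\epsilon/2$). The paper's proof obtains ``$1-\epsilon$'' only by the coarser claim that every $q_0>v_2$ stops after one step, which your (correct) fourth-interval trace contradicts on a set of measure roughly $\epsilon v_1$; so your refinement exposes a factor-$2$ slack in the constant rather than a defect in the method, but as written the final parameter-tuning sentence does not close the proof of the stated bound and should be replaced either by the explicit computation above together with a restriction on $\epsilon D$, or by an honest $1-2\epsilon$ statement.
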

\begin{proof}
	Consider the input with $n=2$ demand levels satisfying $v_1=1> v_2=\epsilon$ and $d_1=1<d_2=D$. 
	A pair of prices $(p,q)$ with $p+q=v_2$ results in welfare and total revenue of $\epsilon D$,
	and if $\epsilon D>2$, the pair $(v_2/2, v_2/2) $ is indeed an equilibrium.
	On the other hand, for small enough $\epsilon $ the pair of prices $(1/2,1/2)$ is also an equilibrium, and its welfare and revenue are only $1$. Finally, observe that unless the price that the first best response in dynamics refers to is at most $v_2=\epsilon$, the first best response results in an equilibrium with total price of $1$, and welfare and revenue of $1$. 
	The probability that the process stops after a single step is therefore at least
$1-\epsilon$, and the claim follows.
\end{proof}


Proposition \ref{obs:conv-bad-NE-n is 2} only gives high probability of convergence to a low welfare equilibrium, but this will not occur with certainty.
We next show that one can construct instances in which except of a measure zero 
set of starting prices, 
every dynamics will end up in an equilibrium with very low welfare, although equilibrium with high welfare exists.
Moreover, we show that the welfare gap between the good and bad equilibria increases exponentially in the number of demand levels $n$.

\begin{theorem}[Almost sure convergence to bad equilibria, large $n$]
	\label{thm:almost-sure-bed}
	For any 
	number of demand levels $n\geq 2$  and $\epsilon>0$ that is small enough,
	there exists an instance 
	that has an equilibrium with welfare $2\cdot (2-\epsilon)^{n-1}-1 $ 
	and revenue of $(2-\epsilon)^{n-1}$,
	but best response dynamics starting with pair of prices chosen uniformly at random over
	 $[0,v_1]^2$ almost surely ends in an equilibrium of welfare and revenue of only $1$. 
\end{theorem}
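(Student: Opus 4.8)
The strategy is to build on the $n=2$ construction from Proposition \ref{obs:conv-bad-NE-n is 2} and iterate it. In that construction, with $v_1=1 > v_2=\epsilon$ and $d_1 = 1 < d_2 = D$, the bad equilibrium sits at total price $v_1$ and the good one at total price $v_2$, and the basin of attraction of the bad equilibrium covers all starting prices except a thin sliver where the first best response "sees" a price at most $v_2$. I would design the demand levels $v_1 > v_2 > \cdots > v_n$ and demands $d_1 < \cdots < d_n$ so that (i) the monopolist-optimal price is the lowest level $v_n$, giving the good equilibrium revenue $R(v_n) = v_n d_n = (2-\epsilon)^{n-1}$ and welfare $2(2-\epsilon)^{n-1}-1$; (ii) the highest level $v_1$ supports a trivial-looking but non-trivial bad equilibrium at $(v_1/2, v_1/2)$ with revenue and welfare $1$ (so $v_1 d_1 = 1$, say $v_1 = 1$, $d_1 = 1$); and (iii) crucially, for \emph{every} level $v_i$ with $i < n$, the pair $(v_i/2, v_i/2)$ is itself an equilibrium. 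The geometric factor $(2-\epsilon)$ should come from choosing $v_i / v_{i+1}$ and $d_{i+1}/d_i$ so that each level is "locally optimal" and the revenue roughly doubles (up to the $\epsilon$ slack) as one descends one level. Concretely I would try $v_i = (2-\epsilon)^{-(i-1)}$ and $d_i$ growing so that $v_i d_i$ is increasing but each intermediate $(v_i/2,v_i/2)$ is still an equilibrium — this requires checking the one-sided deviation inequalities at each level, which is the analog of the "$\epsilon D > 2$" condition and the "small enough $\epsilon$" condition in the $n=2$ case.

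The key dynamical claim is then a \textbf{staircase / absorption lemma}: whenever the current total price is $v_i$ for some $i < n$, a best response by either seller keeps the total price at some $v_j$ with $j \le i$ (it cannot descend far), and in fact, from \emph{almost every} profile the dynamics gets stuck at a level strictly above $v_n$. The mechanism mirrors the $n=2$ proof: to move the total price down to $v_{i+1}$ or below, the price the responder is replying to must already be very small (at most roughly $v_{i+1}$), because best responses are monotone (Lemma \ref{lemma:monotonicity-of-prices}) and "prefer" to land on a higher value level unless forced down. I would make each downward threshold have measure tending to zero, and then argue that the union over the finitely many levels of "bad" transition events still has the dynamics absorbed at the $v_1$ equilibrium with probability $1$ (the complement — ever reaching $v_n$ — requires landing in a nested sequence of measure-zero, or arbitrarily-small-measure, traps). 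For the "almost surely" (probability exactly $1$) statement rather than "$1-\epsilon$", I would arrange the construction so that the problematic region for the \emph{first} step genuinely has measure zero: e.g., make $v_n$ so tiny that the only way the first best response sees a price $\le v_n$ is if the initial price equals $0$ exactly, or more robustly, engineer the levels so that from the first best response onward the total price is pinned to $v_1$ unless the start was on a measure-zero set. This is the place where the $n=2$ Proposition only got $1-\epsilon$; the extra levels and the freedom in choosing $\epsilon$ small should let me push the failure set to measure zero.

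The main obstacle I anticipate is \textbf{verifying simultaneously that every intermediate symmetric profile $(v_i/2, v_i/2)$ is an equilibrium while the revenues $v_i d_i$ still climb geometrically} — these pull in tension, since making $(v_i/2,v_i/2)$ robust against upward deviations wants the demand to drop sharply above $v_i$, but a geometric revenue increase wants demand to rise sharply as price falls. Threading this needs a careful choice of the $d_i$'s (likely $d_i \approx v_i^{-1} \cdot (\text{something slightly super-linear in } i)$) and is exactly where the "$\epsilon$ small enough" hypothesis earns its keep: the $(2-\epsilon)$ rather than $2$ gives the slack needed for all the one-sided best-response inequalities to hold at once. A secondary obstacle is bounding the measure of each downward-escape region and confirming the intersection of escape events over all $n-1$ levels is null; I expect this to follow from a clean monotonicity argument (a best response to a price $x$ is a non-increasing-level function of $x$ in the relevant range, by Lemma \ref{lemma:monotonicity-of-prices}), so that each escape requires the previous price to lie in an interval of length $O(v_{i+1})$, and these shrink geometrically. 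Once the absorption lemma is in place, the theorem follows by reading off the revenue and welfare of the $v_1$-equilibrium ($=1$) and of the $v_n$-monopoly equilibrium ($=(2-\epsilon)^{n-1}$ and $2(2-\epsilon)^{n-1}-1$ respectively).
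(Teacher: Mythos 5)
Your blueprint (a geometric revenue ladder with ratio $2-\epsilon$, every even split $(v_i/2,v_i/2)$ an equilibrium, the bad equilibrium at $v_1$) matches the paper's construction in spirit, but the dynamical argument you sketch has a genuine gap exactly where the theorem is hardest: getting probability $1$ rather than $1-\epsilon$. Your mechanism is to bound the measure of "downward-escape" intervals; any argument of that type can only show the failure set has small positive measure, and your proposed fix ("make $v_n$ so tiny\ldots engineer the levels so that the price is pinned to $v_1$ unless the start is measure zero") restates the goal without supplying a mechanism. The actual mechanism needed is different: (a) the instance must be calibrated so that the \emph{only} equilibria with total price below $v_1$ are the exact even splits --- by Lemma~\ref{obs:total-price-convex} the equilibrium splits of a given total price form an interval, and if that interval has positive length then a uniformly random starting price $q$ falls in it with positive probability and the very first best response lands on a good equilibrium, killing almost-sure convergence; and (b) from \emph{any} uneven split of any $v_i<v_1$, the responding seller must strictly prefer to push the total price \emph{up} to a higher level and end up holding the strictly larger price, so the unevenness regenerates and the dynamics escalates step by step to $v_1$. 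With these two properties the exceptional starting set is just $\{q=v_i/2: i\in[n]\}$, a finite (hence null) set, which is how the paper (Appendix~\ref{app:dynamic-random-prices-impossibility}, Lemmas~\ref{lem:non-equal-not-ne}--\ref{lem:very uneven split}) concludes.

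Your concrete parameters also would not work. With $v_i=(2-\epsilon)^{-(i-1)}$ and $R(v_i)=(2-\epsilon)^{i-1}$, at level $v_2$ each seller at the even split earns about $1$ while the best upward deviation (to total price $v_1$) earns about $1-v_2/2\approx 3/4$; this constant slack means a whole interval of \emph{uneven} splits of $v_2$ are equilibria, i.e.\ a positive-measure trap at welfare $>1$, contradicting the almost-sure claim. The paper instead takes $v_i=\epsilon^{i-1}$ (so $v_{i+1}<v_i/2$ and best responses to prices $\ge v_i/2$ can only go to weakly higher levels) and $d_i=(2-\epsilon)^{i-1}/v_i$, which puts every intermediate even split exactly at the knife edge of indifference to the upward deviation; any perturbation away from the even split makes the higher-priced seller strictly prefer to raise the total price (your Lemma~\ref{lemma:monotonicity-of-prices} alone does not give this --- it is a direct computation specific to this calibration). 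Finally, note that "gets stuck at a level strictly above $v_n$" is weaker than what the theorem needs: you must show the total price is driven all the way to $v_1$, since stopping at any intermediate level yields welfare strictly larger than $1$.
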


To prove the theorem, we build an instance where
the pair of prices $(v_i/2, v_i/2)$ forms an equilibrium for any $i$.
In this instance, the total revenue from a total price $v_i$ is $(2-\epsilon)^{i-1}$.
In particular, $(v_n/2, v_n/2)$ is an equilibrium that attains the monopolist revenue and the optimal welfare of $O((2-\epsilon)^n)$.
However, best response dynamics starting by best responding to any price
which is not \emph{exactly} $v_i/2$ (for some $i$)
terminates in an equilibrium with total price of $v_1=1$ and welfare of $1$. Thus, the set of pairs from which the dynamics does not end at welfare of $1$ is finite and has measure $0$, so the dynamics almost surely converges to the worst equilibrium. The full proof is in Appendix \ref{app:dynamic-random-prices-impossibility}.

\ignore{ 
We show that Theorem \ref{thm:almost-sure-bed} is tight by showing that if the probability of convergence to low welfare equilibrium is $1$, then the gaps presented in the theorem are essentially as large as possible.

\begin{observation}
	Fix an instance with $n$ demand levels. 
	Assume that best response dynamics starting from uniform random pair of prices in $[0,v_1]^2$ ends in an equilibrium with welfare 
	of $Z$ with probability $1$,
	then the optimal welfare (and thus also the highest welfare and highest revenue in equilibrium) is at most $2^{n}\cdot Z$.
\end{observation}
\begin{proof}
	WLOG assume that $Z=1$.
	Let $v_k$ be the total price for which the dynamics converges with probability $1$, and for which the welfare is $1$. As welfare decreases in price, to upper bound the optimal welfare we focus on the welfare achieved by any total price $v_j<v_k$.
	
	We first note that by Lemma \ref{obs:total-price-convex}, if $(p,q)$ is a pure NE, then for any $x\in [\min \{p,q\}, \max \{p,q\}]$ it holds that $(x, p+q-x)$ is also a pure NE.
	Thus, if an uneven split of the value of a buyer is an equilibrium, then there is a continuum of equilibria.
	We argue that if an uneven split $(p,q)$ of a value smaller than $v_k$ is an equilibrium,
	it implies that the dynamics stops at total price $p+q$ and welfare larger than $1$, with positive probability.
	This is so as the dynamics starting from any pair $(x,y)$ such that $0\leq x\leq v_1$ and $y\in [\min \{p,q\}, \max \{p,q\}]$ ends at $(p+q-y,y)$ after a single best respond. Thus the probability of ending at welfare larger than 1 is at least  $ (\max \{p,q\}- \min \{p,q\})/v_1>0$.
		
	We conclude that if the dynamics ends at welfare of $1$ with probability $1$, the only pairs $(p,q)$ with $p+q< v_k$ that might be in equilibrium are even splits of some value $v_t$ for some $t>k$.
	We claim that this implies that the revenue from total price $v_{k+i-1}\leq v_k$ for $i\in \{1,2,\ldots, n+1-k\}$ is at most $R(v_k)\cdot 2^{i-1} = 2^{i-1} $.
	This is proven by induction on $i$.
	The claim is clearly true for $i=1$. Assume that for $i\geq 2$ it is true for any $j<i$, we prove this for $i$.
	Assume in contradiction that $R(v_{k+i-1})>2^{i-1}$, we aim to show that there is an uneven split of $v_{k+i-1}$ that is also an equilibrium, deriving a contradiction.
	
	If $R(v_{k+i-1})>2^{i-1}$ then for small enough $\epsilon>0$
	the revenue of each seller with an uneven  split $(\frac{v_{k+i-1}}{2} -\epsilon, \frac{v_{k+i-1}}{2} +\epsilon)$ is greater than $2^{i-2}$, which by the induction hypothesis is an upper bound for the combined revenue of both sellers with any total price $v_l$ for $k+i-1>l\geq k$, and thus there is no beneficial deviation to any such total price $v_l$ satisfying $v_{k+i-1}< v_l\leq v_k$.
	As the welfare with price $v_k$ is $1$, the revenue of each seller when the total price is $v_l>v_k$ is at most $1$,
	which is not larger than $2^{i-2}$ for any $i\geq 2$, and thus there is no beneficial deviation to any $l$ such that $1<l<k$.
	Finally, consider deviation to $v_l<v_{k+i-1}$, that is, $l>k+i-1$ ....
	
	1111111111111111111 FINISH!!!
	
	{\bf MOSHE: how do we prove that they do not deviate to prices smaller than $v_{k+i-1}$? Somehow we also did not use the fact that $v_k$ is an equilibrium. strange! }	

	FINISH!!!
	
	We conclude that $R(v_{k+i-1})>2^{i-1}$ implies that there is an uneven split of $v_{k+i-1}$ which forms an equilibrium, a contradiction.

	As the revenue from total price $v_{k+i-1}<v_k$ is at most $2^{i-1}$, 
	the welfare $v_n$ is bounded as follows:
	$SW(v_n)= \sum_{i=1}^n v_i (d_{i}-d_{i-1}) = SW(v_k) + \sum_{i=k+1}^n v_i (d_{i}-d_{i-1})
	\leq 1+\sum_{i=k+1}^n R(v_i)\leq 1+ \sum_{i=k+1}^n 2^{i-1}\leq 2^{n}$.
\end{proof}

} 



\subsection{Time to Convergence}
\ignore{OLD, was replaced by a file:
	
Up to this point we considered the quality of equilibria reached by best response dynamics. We next consider the time to convergence.
We show that time to convergence can be very long, even with only $2$ demand levels and total demand that is close to $1$.
Specifically, we show that even for only two demand levels, and even for $D$ close to $1$, the dynamics can take time that is linear in $W=\frac{d_n}{min_{i=1}^n d_i-d_{i-1}}$, the ratio between the maximal demand and the minimal change in demand. Note that $W\geq D$ and additionally, if $d_1=1$ and every $d_i$ is an integer, then $W=D$.

\begin{theorem}[Slow convergence]
	For any $W>1$, for some instances with $2$ demand levels ($n=2$),
	best response dynamics takes $\Omega(W)$ steps to converge to an equilibrium.
\end{theorem}

The following theorem shows that except of a measure zero (finite) set of starting prices for the best response dynamics, every dynamics will end up in a very low welfare equilibrium, although equilibrium with high welfare exists. The welfare gap between the good and bad equilibria increases exponentially in the number of demand levels $n$.

\begin{theorem}[Almost sure convergence to bad equilibria, large $n$]
	For any 
	demand levels $n\geq 2$  and $\epsilon>0$ that is small enough,
	there exists instance 
	that has the following properties:
	\begin{itemize}
		\item it has equilibria with welfare at least $\frac{(2-\epsilon)^{n}-1}{1-\epsilon}$ and monopolist revenue of at least $(2-\epsilon)^{n-1}$.
		\item if the best response of the first seller in the dynamics does not result in an equilibrium, then
		the best response dynamics  will end up in an equilibrium with welfare and revenue of $1$.
		\item the set of prices $\{(p,q) | (p,q)\in NE \ \&\ SW(p,q)\neq 1\}$ is finite.
		\item the set of prices $\{(p,q) | (p,q)\in NE \ \&\ SW(p,q)= 1\}$ is infinite and uncountable.
	\end{itemize}
\end{theorem}
\begin{proof}
Let $\alpha=2-\epsilon$.
For $i\in [n]$ let $v_i=\epsilon^{i-1} $ and $d_i=\alpha^{i-1}/v_i$.
We argue that for any $i\in [n]$ prices $(v_i/2, v_i/2)$ form an equilibrium, and that best response dynamics starting by best responding to any price $p\notin \{v_i/2 \ for \ i\in [n]\}$, ends in an equilibrium with $p+q=v_1$, and thus the set of starting prices for the dynamics that result in equilibrium welfare higher than $1$ is finite.

FINISH THE PROOF
\end{proof}

We next show that for any total demand $D$, even with only $2$ demand levels, the gap in welfare between the best and worst equilibria can be as large as $\sqrt{D}$ and moreover, with non-malicious prices being in $[0,1]$, if the starting price for the dynamics is sampled uniformly from these prices, the dynamics will converge to the bad equilibria with probability at least $1-1/\sqrt{D}$.

\begin{theorem}[High probability of convergence to bad equilibria, $n=2$]
	\label{thm:conv-bad-NE-n is 2}
	For any total demand $D>1$ that is large enough, there exists instance with $n=2$ demand levels
	that has the following properties:
	\begin{itemize}
		\item Non-malicious prices are in $[0,1]$.
		\item it has (good) equilibria with welfare and monopolist revenue of at least $\sqrt{D}$.
		\item it has (bad) equilibria with welfare and monopolist revenue of $1$.
		\item if the price from which the best response dynamics starts (the first best response is to that price) is sampled uniformly in $[0,1]$, then  the best response dynamics ends up in an equilibrium with welfare and revenue of $1$ with probability at least $1-1/\sqrt{D}$.
	\end{itemize}
\end{theorem}
\begin{proof}
	 Consider the input with $n=2$ demand levels satisfying $v_1=1> v_2=1/\sqrt{D}$ and $d_1=1<d_2=D$. 
	 A pair of prices $(p,q)$ with $p+q=v_2$ results with welfare and monopolist revenue of $\sqrt{D}$, and for large enough $D$, the pair $(v_2/2, v_2/2) $ is indeed an equilibrium. On the other hand, $(1/2,1/2)$ is also an equilibrium, and its welfare and revenue are only $1$. Finally, observe that unless the price that the dynamics starts with is at most $v_2=1/\sqrt{D}$, the first best response result in an equilibrium with total price of $1$, and welfare of $1$, immediately after the first best response.
\end{proof}

We observe that with $n=2$ demand levels, convergence to equilibrium is guaranteed, and moreover, the lower bound on the number of steps it might take is actually tight, and convergence always happens in $O(W)$ steps.
\begin{observation}
	\label{obs:bsd-2-stops}
	For any instance with $2$ demand levels ($n=2$), best response dynamics starting from any prices will stop in an equilibrium after $O(W)$ steps.
\end{observation}
} 

Up to this point we considered the quality of equilibria reached by best response dynamics. In this section, we will show that not only that best response dynamics reach equilibria of poor quality, it may also take them arbitrary long time to converge.
Moreover, the long convergence time is possible even with only $2$ demand levels and total demand that is close to $1$.

Specifically, we will show that as the difference between the demand of adjacent values becomes smaller, the convergence time can increase.
More formally, we let
$W=\frac{d_n}{min_{i=2}^n \{d_i-d_{i-1}\}}$ be the ratio between the maximal demand and the minimal change in demand.
Note that if $d_1=1$ and every $d_i$ is an integer, then 
$d_1=min_{i=2}^n \{d_i-d_{i-1}\}$ and thus
$W=D$; 
if demands are not restricted to be integers, $W$ might be much larger than $D$ even in the case that $d_1=1$,
for example if $d_1=1$ and $d_2=1+\epsilon=D$
then $W=1/\epsilon$ is large while $D=1+\epsilon\approx 1$.
We show a simple setting with only two demand levels and with $D$ close to $1$ in which the dynamics takes time linear in $W$.


\begin{theorem}[Slow convergence]
	For any $W$, 
	best response dynamics starting from zero prices may require each seller to update his price $W-1$ times
	to converge to an equilibrium.
	Moreover, this holds even with 2 demand levels ($n=2$) and with $D= \frac{W}{W-1}$ which is close to $1$ when $W$ is large.
\end{theorem}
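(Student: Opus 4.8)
The goal is to exhibit, for any target $W$, an instance with $n=2$ on which best-response dynamics from $(0,0)$ makes $\Omega(W)$ updates. The natural construction is $v_1 = 1$, $v_2 = \epsilon$ for a small $\epsilon>0$, with $d_1 = 1$ and $d_2 = W/(W-1)$, so that $D = d_2/d_1 = W/(W-1)$ and the minimal demand gap is $d_2 - d_1 = 1/(W-1)$, giving $W = d_2/(d_2-d_1)$ as required. The point of making $v_2$ much smaller than $v_1$ is that whenever one seller prices just below $v_1$, the other seller's only profitable response is to price at $v_1 - (\text{other price})$ — i.e., to keep the total at $v_1$ — rather than to crash the total down to $v_2$, since selling a tiny sliver at total price $\epsilon$ yields revenue only $\epsilon d_2$, which is tiny. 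So the dynamics will want to keep the total price pinned at $v_1=1$, and the interesting question is how the \emph{split} evolves.

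**Tracking the split.** Starting from $(0,0)$, the first seller best-responds to $0$: by Observation~\ref{obs:BR-price-is-value} the best response makes the total equal to some $v_i$; here it is $v_1=1$, so the first seller jumps to price $1$, giving $(1,0)$. Now the second seller best-responds to $1$. Here is the crux: a price of $1$ from the opponent means the second seller can either price at $0$ (revenue $0$), or price at $v_2 - 1$ — impossible since $v_2 = \epsilon < 1$ — so the second seller is stuck pricing at $0$ as well, and $(1,0)$ is already an equilibrium. That is too fast. So the construction must be refined: I want the opponent's price to sit strictly between $v_2$ and $v_1$ at the relevant moments, so that the responding seller can profitably shave the total back to $v_1$ by taking the residual $1 - (\text{opponent's price})$, and this residual should change by only a tiny increment each step. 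The right way to force slow convergence is to choose the demand values so that the \emph{best response to a price $q \in (v_2, v_1)$} is not simply $1-q$ but is constrained to lie on a fine grid, or so that at each step the seller can only improve his revenue by a minimal amount and hence only nudges his price. Concretely, I expect the intended instance uses the indifference between total price $v_1$ and total price $v_2$: there is a threshold price $q^\star$ for the opponent below which the responder prefers to crash to $v_2$ and above which he prefers $v_1$; near that threshold, best responses oscillate and the total price slowly ratchets. I would set the parameters (using $d_2/d_1 = W/(W-1)$) so that the responder, starting from a low price, can increase his own price only in steps of size $\sim (d_2-d_1)/d_2 = 1/W$ of the available ``room,'' so that it takes $\Theta(W)$ steps for the total to climb from near $v_2$ up to $v_1$ where it finally locks.

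**Carrying it out.** The plan is: (1) fix $v_1=1$, $v_2 = \epsilon$, $d_1 = 1$, $d_2 = W/(W-1)$, and verify $D = W/(W-1)$ and that the minimal-demand-gap ratio is exactly $W$; (2) compute $BR(q)$ explicitly as a function of $q$ — it is the maximizer over $\{1-q,\ \epsilon - q,\ 0\}$ (intersected with $[0,\infty)$) of $p \cdot \mathcal D(p+q)$, which reduces to comparing $(1-q)\cdot 1$ with $(\epsilon - q)\cdot d_2$ when $q < \epsilon$, and equals $1-q$ when $\epsilon \le q < 1$; (3) start the dynamics from $(0,0)$ and show that, because of the specific value of $d_2$, each best-response step from a configuration $(p,q)$ with $p+q \le v_1$ either increases the acting seller's price by a bounded small increment or, when $q \ge \epsilon$, the responder moves to exactly $1-q$ — and then trace the orbit to show the total price $p+q$ increases by at most $\Theta(1/W)$ per round until it reaches $1$, after which no improving move exists; (4) count the rounds: $\Theta(W)$, i.e. each seller updates $W-1$ times, and the terminal equilibrium has total price $v_1 = 1$. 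The main obstacle is step (3): getting the arithmetic of the best-response map to produce a genuinely \emph{slow} monotone climb rather than an immediate jump to $v_1$. This requires choosing $\epsilon$ and the precise tie-breaking so that from $(0,0)$ the first move does \emph{not} land directly at total price $v_1$ — which means I actually need $BR(0)$ to equal $v_2 = \epsilon$ (not $v_1$), i.e. $\epsilon \cdot d_2 > 1 \cdot d_1$, forcing $d_2 > 1/\epsilon$; then the dynamics sits near total price $\epsilon$ and must slowly be driven up to $1$, and the slowness comes from each seller only being willing to raise his own price by the minimum amount that strictly improves revenue, which is governed by the demand gap and hence by $W$. Verifying that this slow ratchet is forced — and that it indeed terminates only after $\Omega(W)$ updates at the equilibrium with total price $1$ — is the technical heart of the argument.
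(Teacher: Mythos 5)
There is a genuine gap: your construction points in the wrong direction, and the mechanism you hope will produce slowness cannot occur. You take $v_2=\epsilon$ tiny and $d_2=W/(W-1)$ (so that $D=W/(W-1)$), but then the low demand level is essentially worthless: revenue at total price $v_2$ is $\epsilon d_2\approx\epsilon$ versus revenue $1$ at total price $v_1$, so, as you yourself notice, the dynamics from $(0,0)$ ends after one or two moves. Your proposed repair --- forcing $BR(0)=v_2$ by requiring $\epsilon d_2>1$, i.e.\ $d_2>1/\epsilon$ --- is incompatible with the very constraint you need, since $d_1=1$ then gives $D=d_2>1/\epsilon$, which is far from $W/(W-1)\approx 1$. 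Moreover, the dynamical picture you sketch in step (3), a total price that ``ratchets up by $\Theta(1/W)$ per round until it reaches $1$,'' contradicts Observation \ref{obs:BR-price-is-value}: with $n=2$, after every best response the total price is exactly $v_1$ or $v_2$, so the total cannot climb gradually. Likewise, sellers do not raise prices ``by the minimum amount that strictly improves revenue''; they play exact best responses, so slowness must come from the structure of those best responses, not from minimal improving moves.

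The correct construction is the opposite of yours: take $v_2$ \emph{close} to $v_1$. The paper sets $v_1=1$, $d_1=1$, $v_2=1-\epsilon$, $d_2=\frac{1}{1-2\epsilon}$, so that $W=\frac{d_2}{d_2-d_1}=\frac{1}{2\epsilon}$ and $D=\frac{1}{1-2\epsilon}=\frac{W}{W-1}$. Then $BR(0)=v_2$ holds because $(1-\epsilon)\cdot\frac{1}{1-2\epsilon}>1$, even though $d_2\approx 1$. The slow quantity is the \emph{split}, not the total: the total price alternates between $v_2$ (after the first seller moves, capturing the larger demand) and $v_1$ (after the second seller pushes it back up), while the individual prices drift by exactly $\epsilon$ per round, $(p_m,q_m)=(1-m\epsilon,\,m\epsilon)$. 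An induction shows the relevant revenue comparisons ($(1-(m+1)\epsilon)\frac{1}{1-2\epsilon}$ versus $1-m\epsilon$, and $(m+1)\epsilon$ versus $\frac{m\epsilon}{1-2\epsilon}$) keep this pattern going precisely while $m+1<\frac{1}{2\epsilon}=W$, so each seller updates about $W-1$ times before the process locks at total price $v_1$. Without this ``small value gap, small demand gap'' design and the alternating-total-price analysis, your outline cannot be completed.
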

\begin{proof}
We consider the following setting given some $\epsilon>0$ that is small enough:
$v_1=1$ and $d_1=1$, $v_2=1-\epsilon$ and $d_2=\frac{1}{1-2\epsilon}$.
In this case, $W=d_2/(d_2-d_1)=\frac{1}{2\epsilon}$. We will show that for this instance best response dynamics starting at $(0,0)$
 takes at least $W-1=\frac{1}{2\epsilon}-1$ steps to converge to an equilibrium. 

Let $p_m,q_m$ denote the price offered by the two sellers after $m$ best-response steps for each seller ($p_m$ is the offer of the seller who plays first).
We will prove by induction that $p_m=1-m\epsilon$ and $q_m=m\epsilon$ whenever $m+1 < \frac{1}{2\epsilon}$.

We first handle the base case. With zero prices, the first seller can price at $v_1=1$ and get profit $1$, or price at $v_2= 1-\epsilon$ and get profit
$(1-\epsilon)\cdot \frac{1}{1-2\epsilon} >1$. Thus, $p_1=1-\epsilon$.
Now, the best response of the other seller is clearly $q_1=\epsilon$ as
pricing at total price of $1-\epsilon$ gains her $0$ profit.

We next move to the induction step. Assume that the claim is true for some $m$, i.e., $(p_m,q_m)=(1-m\epsilon, m\epsilon)$, and we prove it for $m+1$
(as long as $m+1<\frac{1}{2\epsilon}$).
If the second seller prices at $m\epsilon$, the first seller will maximize profit by pricing either at $1-(m+1)\epsilon$  or at  $1-m\epsilon$ (recall that
by Observation \ref{obs:BR-price-is-value} after a seller is best responding, the price will be equal to either $v_1$ or $v_2$).

The gain from the first price is $(1-(m+1)\epsilon)\cdot \frac{1}{1-2\epsilon}$ and the gain from the latter price is $1-m\epsilon$.
Simple algebra shows that $(1-(m+1)\epsilon)\cdot \frac{1}{1-2\epsilon} > 1-m\epsilon$ iff $m<\frac{1}{2\epsilon}$.

	Now, assume that the first seller prices at  $1-(m+1)\epsilon$,
the second seller maximizes profit by pricing either at $(m+1)\epsilon$  or at $1-\epsilon-(1-(m+1)\epsilon)=m\epsilon$.
The second seller chooses a price of $(m+1)\epsilon$ if
$(m+1)\epsilon > \frac{1}{1-2\epsilon}m\epsilon$.
Simple algebra shows that this holds iff $m+1<\frac{1}{2\epsilon}$.
This concludes the induction step and completes the proof.
\end{proof}

\ignore{OLD:
The following theorem shows that except of a measure zero (finite) set of starting prices for the best response dynamics, every dynamics will end up in a very low welfare equilibrium, although equilibrium with high welfare exists. The welfare gap between the good and bad equilibria increases exponentially in the number of demand levels $n$.

\begin{theorem}[Almost sure convergence to bad equilibria, large $n$]
	For any 
	demand levels $n\geq 2$  and $\epsilon>0$ that is small enough,
	there exists instance 
	that has the following properties:
	\begin{itemize}
		\item it has equilibria with welfare at least $\frac{(2-\epsilon)^{n}-1}{1-\epsilon}$ and monopolist revenue of at least $(2-\epsilon)^{n-1}$.
		\item if the best response of the first seller in the dynamics does not result in an equilibrium, then
		the best response dynamics  will end up in an equilibrium with welfare and revenue of $1$.
		\item the set of prices $\{(p,q) | (p,q)\in NE \ \&\ SW(p,q)\neq 1\}$ is finite.
		\item the set of prices $\{(p,q) | (p,q)\in NE \ \&\ SW(p,q)= 1\}$ is infinite and uncountable.
	\end{itemize}
\end{theorem}
\begin{proof}
Let $\alpha=2-\epsilon$.
For $i\in [n]$ let $v_i=\epsilon^{i-1} $ and $d_i=\alpha^{i-1}/v_i$.
We argue that for any $i\in [n]$ prices $(v_i/2, v_i/2)$ form an equilibrium, and that best response dynamics starting by best responding to any price $p\notin \{v_i/2 \ for \ i\in [n]\}$, ends in an equilibrium with $p+q=v_1$, and thus the set of starting prices for the dynamics that result in equilibrium welfare higher than $1$ is finite.

FINISH THE PROOF
\end{proof}

We next show that for any total demand $D$, even with only $2$ demand levels, the gap in welfare between the best and worst equilibria can be as large as $\sqrt{D}$ and moreover, with non-malicious prices being in $[0,1]$, if the starting price for the dynamics is sampled uniformly from these prices, the dynamics will converge to the bad equilibria with probability at least $1-1/\sqrt{D}$.

\begin{theorem}[High probability of convergence to bad equilibria, $n=2$]
	\label{thm:conv-bad-NE-n is 2}
	For any total demand $D>1$ that is large enough, there exists instance with $n=2$ demand levels
	that has the following properties:
	\begin{itemize}
		\item Non-malicious prices are in $[0,1]$.
		\item it has (good) equilibria with welfare and monopolist revenue of at least $\sqrt{D}$.
		\item it has (bad) equilibria with welfare and monopolist revenue of $1$.
		\item if the price from which the best response dynamics starts (the first best response is to that price) is sampled uniformly in $[0,1]$, then  the best response dynamics ends up in an equilibrium with welfare and revenue of $1$ with probability at least $1-1/\sqrt{D}$.
	\end{itemize}
\end{theorem}
\begin{proof}
	 Consider the input with $n=2$ demand levels satisfying $v_1=1> v_2=1/\sqrt{D}$ and $d_1=1<d_2=D$. 
	 A pair of prices $(p,q)$ with $p+q=v_2$ results with welfare and monopolist revenue of $\sqrt{D}$, and for large enough $D$, the pair $(v_2/2, v_2/2) $ is indeed an equilibrium. On the other hand, $(1/2,1/2)$ is also an equilibrium, and its welfare and revenue are only $1$. Finally, observe that unless the price that the dynamics starts with is at most $v_2=1/\sqrt{D}$, the first best response result in an equilibrium with total price of $1$, and welfare of $1$, immediately after the first best response.
\end{proof}
}

We observe that with two demand levels, convergence to equilibrium is guaranteed, and the above linear bound is actually tight.
Proof appears in Appendix \ref{app:BR-stops-at-time-W}.
\begin{proposition}
	\label{obs:bsd-2-stops}
	For any instance with $2$ demand levels ($n=2$), best response dynamics starting from any price profile will stop in an
	equilibrium after each seller updates his price at most $W$ times.
\end{proposition}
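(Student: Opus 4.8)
The plan is to collapse the two-seller dynamics into a one-dimensional map and drive it with the Monotonicity Lemma (Lemma~\ref{lemma:monotonicity-of-prices}).

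First I would clear a transient. If the opponent's price $x$ satisfies $x\ge v_1$ no positive revenue is available, so a seller breaking ties toward non-triviality sets $BR(x)=0$; and for $x\in[0,v_1]$, Observation~\ref{obs:BR-price-is-value} forces $BR(x)\in\{v_1-x,\,v_2-x\}\subseteq[0,v_1]$. Hence after seller~$1$'s first move and seller~$2$'s first move both prices lie in $[0,v_1]$, and they stay there thereafter; at most one transient update per seller is incurred. From here on I restrict attention to $[0,v_1]$.

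On $[0,v_1]$ the best response compares the revenue $d_1(v_1-x)$ of ``total price $v_1$'' with the revenue $d_2(v_2-x)$ of ``total price $v_2$'' (available only when $x<v_2$). If $R(v_1)=d_1v_1\ge R(v_2)=d_2v_2$, the high total price is always (weakly) preferred, so $BR(x)=v_1-x$ and $BR(BR(x))=x$ throughout $[0,v_1]$: the process reaches an equilibrium after at most one further move per seller, comfortably within the bound since $W>1$. Otherwise put $q^*=\frac{d_2v_2-d_1v_1}{d_2-d_1}\in(0,v_2)$; then $BR(x)=v_2-x$ for $x\le q^*$ and $BR(x)=v_1-x$ for $q^*\le x\le v_1$. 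Let $g=BR\circ BR$ be the map carrying a seller's price from one of his turns to the next. Substituting the formula for $BR$, $g$ is piecewise of the form $g(x)=x+\delta$ on $[0,\ell)$, $g(x)=x$ on $[\ell,m]$, $g(x)=x-\delta$ on $(m,v_1]$, where $\delta=v_1-v_2$, $\ell=\min\{q^*,\,v_2-q^*\}$, $m=\max\{q^*,\,v_1-q^*\}$. The key is the pair of identities $v_2-q^*=\tfrac{d_1}{d_2-d_1}\,\delta=(W-1)\delta$ and $v_1-q^*=\tfrac{d_2}{d_2-d_1}\,\delta=W\delta$, from which one verifies $\ell+\delta\le m$. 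This ``no-overshoot'' inequality makes the fixed interval $[\ell,m]$ absorbing: iterating $g$ from any point below $\ell$ increases it by exactly $\delta$ at each step and lands it in $[\ell,m]$ without jumping past, and symmetrically from above $m$. In particular $g$ has no cycles, so the dynamics always converges --- re-proving convergence for $n=2$ --- and, since any $x^*\in[\ell,m]$ is a fixed point of $g$, the pair $(BR(x^*),x^*)$ is an equilibrium where the process halts.

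It remains to count. Inside $[0,v_1]$, a seller's successive prices $x,g(x),g^2(x),\dots$ form a monotone sequence that changes by exactly $\delta$ at each step until it reaches $[\ell,m]$ and freezes; the number of changes is at most $\max\{\ell,\,v_1-m\}/\delta$. Since $\ell\le(W-1)\delta$ and $v_1-m=\min\{q^*,\,v_1-q^*\}\le W\delta$, this is at most $W$; adding the transient and repeating for the other seller yields the claim. I expect the substantive work to be the case analysis that pins down the exact piecewise shape of $g$ and the inequality $\ell+\delta\le m$ (this is where the two identities for $q^*$ enter, and where ties --- $R(v_1)=R(v_2)$, $x=q^*$, a zero-revenue seller choosing $0$ --- must be handled consistently), and then the careful bookkeeping in the count, where reconciling the first (possibly partial) move and the transient moves with the clean figure $W$ is the delicate point (so the honest bound may carry a small additive or ceiling slack).
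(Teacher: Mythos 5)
Your argument is correct and is essentially the paper's own proof in different packaging: both rest on the fact that, outside the stopping region, every update moves a seller's price by exactly $\delta=v_1-v_2$, monotonically toward the indifference threshold $q^*$ where $d_1(v_1-x)=d_2(v_2-x)$, and the distance to be traversed is at most $v_1-q^*=\frac{d_2}{d_2-d_1}\,\delta=W\delta$, giving about $W$ updates per seller (the paper phrases this as an induction along the alternating sequence in a normalized instance rather than via the composed map $g=BR\circ BR$). The only caveat — that the transient and boundary ties may add a ceiling/additive-constant slack to the exact count — is one you flag yourself, and the paper's own bookkeeping is no tighter on this point.
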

\section{The Quality of the Best Equilibrium}
\label{sec:pos}

In this section, we study the price of stability in our game, that is, the ratio between the quality of the best equilibrium and the optimal outcome (both for revenue and welfare).
The following theorem gives two upper bounds for the price of stability.
One bound shows that for every total demand $D$, the best equilibrium and the optimal outcome are at most factor $O(\sqrt{D})$ away, for both welfare and revenue.
The second bound is exponential in $n$, but it is independent of $D$. This implies, in particular, that the price of stability in markets with a small number of demand levels is small even for a very large $D$.

%

\begin{theorem}
	\label{thm:pos-UB}
	For any instance, the optimal welfare and the monopolist revenue are at most
	$O(\min \{2^n,\sqrt{D}\})$ times the revenue of the best equilibrium.
	
\end{theorem}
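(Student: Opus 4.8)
The plan is to bound the monopolist revenue $R(p^*)$ in terms of the best equilibrium revenue by two separate arguments, one giving the $O(\sqrt D)$ bound and one giving the $O(2^n)$ bound; since the optimal welfare is at most the monopolist revenue scaled appropriately — actually we should be careful here: $SW(0) = \sum_i v_i(d_i - d_{i-1})$ is not literally $\le R(p^*)$ in general, but $SW(0) \le \sum_i R(v_i)$ and more usefully $SW(0) \le 2 R(p^*)\ln(\text{something})$ type bounds won't suffice, so instead I will directly relate $SW(0)$ to the best-equilibrium revenue, and note separately that $R(p^*) \le SW(0)$ so bounding $SW(0)$ suffices for both quantities. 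Let $v_k$ denote the total price of the \emph{best} equilibrium (minimal equilibrium total price), which exists by Theorem~\ref{thm:existence} and Proposition~\ref{obs:best-NE-well-defined}, and write $b = R(v_k)$ for its revenue.

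For the $O(\sqrt D)$ bound: the key structural fact is that $(v_k/2, v_k/2)$ is an equilibrium (Lemma~\ref{obs:total-price-convex}), so no seller wants to deviate; deviating to make the total price $v_j$ for any $j < k$ (i.e. $v_j > v_k$) is unprofitable, which gives $R(v_j) = v_j d_j \le 2 (v_j - v_k/2) d_j \le 2 R(v_j)$ — that's vacuous, so instead I use the deviation bound the way Proposition~\ref{obs:best-NE-well-defined} does: for every value $v_j$ with $v_j > v_k$, $R(v_k)/2 \ge (v_j - v_k/2) d_j$, hence $R(v_j) = v_j d_j \le \frac{v_j}{v_j - v_k/2} \cdot \frac{R(v_k)}{2} \le R(v_k)$ once $v_j \ge v_k$. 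So every higher price has revenue at most $b$. For prices $v_j < v_k$ I need a lower bound on $b$: since $v_k$ is the \emph{minimal} equilibrium price, for each such $j$ the pair $(v_j/2,v_j/2)$ is not an equilibrium, so some seller can profitably deviate; one shows this forces $b = R(v_k) = v_k d_k \ge$ something comparable to $v_j d_j / (\text{small factor})$. The cleanest route: $SW(0) = \sum_{j=1}^n v_j(d_j - d_{j-1}) \le \sum_{j : v_j \ge v_k} R(v_j) + \sum_{j : v_j < v_k} v_j(d_j-d_{j-1})$; the first sum is $\le n \cdot b$ but we want $\sqrt D$, not $n$, so I instead bound $SW(0) \le v_k d_k + \sum_{j: v_j<v_k} v_j d_n \le v_k d_n + (\text{number of lower levels}) v_k d_n$ — still not $\sqrt D$. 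The right idea is the standard one: split the sum at the level where $d_j \approx d_k/\sqrt D$; below that, demand is tiny, and above that, each term is controlled by $b$ via the non-deviation inequality, losing only a $\sqrt D$ factor because $v_j/(v_j - v_k/2)$ can be as large as $\sqrt D$ when $v_j$ is close to $v_k$. I expect this balancing — choosing the threshold and showing the two pieces each contribute $O(\sqrt D)\, b$ — to be the main obstacle.

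For the $O(2^n)$ bound: here I argue that $R(v_j) \le 2^{k-j} b$ for all $j \le k$ (lower total prices, i.e. $v_j > v_k$) by downward induction, using repeatedly that $(v_j/2, v_j/2)$ being a non-equilibrium when $j<k$, combined with the equilibrium condition at $v_k$; more precisely, since $(v_k/2, v_k/2)$ is an equilibrium, deviating to total price $v_j$ gives $\frac{R(v_k)}{2} \ge (v_j - v_k/2) d_j > \frac{v_j}{2} d_j \cdot \frac{v_j - v_k/2}{v_j/2}$, and iterating the comparison between consecutive levels yields a geometric factor of $2$ per level. Then $SW(0) = \sum_{j=1}^n v_j(d_j-d_{j-1}) \le \sum_{j=1}^n R(v_j)$, and since each $R(v_j)$ with $v_j \ge v_k$ is at most $b$ while... this also needs the levels above $v_k$; combining, $SW(0) \le \sum_{j=1}^{n} R(v_j) \le \sum_{i=0}^{n-1} 2^i \cdot b = (2^n - 1) b$. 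Finally, $R(p^*) \le SW(0)$ always (the monopolist's revenue at any price is at most the welfare at price $0$), so both the optimal welfare and the monopolist revenue are $O(\min\{2^n, \sqrt D\}) \cdot b$, as claimed. The tightness claim I would defer to the constructions already given in Example~\ref{obs:equal-split-2}, Proposition~\ref{obs:brd-3-lb}, and Theorem~\ref{thm:almost-sure-bed}, which exhibit instances matching $\Omega(\sqrt D)$ (with $D = n$, where the two bounds coincide) and $\Omega(2^n)$ respectively.
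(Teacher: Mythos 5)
There is a genuine gap: for both bounds, the only levels you actually control are those with value \emph{above} the best-equilibrium price $v_k$ (where the no-deviation inequality at $(v_k/2,v_k/2)$ indeed gives $R(v_j)\le b$), and even there summing level-by-level costs a factor $n$, which you correctly note is not $O(\sqrt D)$ (nor $O(\log D)$). The levels with value \emph{below} $v_k$ --- which are exactly the ones that can carry welfare far exceeding $b$ --- are not handled at all: the single deviation inequality $R(v_k)/2\ge (v_j-v_k/2)d_j$ is weak when $v_j$ is close to $v_k$ and is vacuous when $v_j\le v_k/2$, and your proposed ``split at a threshold'' step is left as an acknowledged obstacle. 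Minimality of $v_k$ must be exploited through a \emph{chain} of deviations, not one: this is what the paper does, running the symmetrized best-response dynamics from $(v_j/2,v_j/2)$ up to the minimal-price equilibrium (Proposition~\ref{prop:sym-dynamics}) and telescoping Lemma~\ref{lem-sqrt-bound}, i.e.\ $(r'/r)^2\ge d'/d$ per step, to get $R(v_j)\le b\sqrt{d_j/d_k}$; after passing to a unit-demand reformulation this yields the $O(\sqrt D)$ bound on the low-value welfare, while the high-value part is bounded by $O(\log D)\cdot b$ via the dyadic-by-value bucketing of Lemma~\ref{revwel} (using $v\cdot\mathcal D(v)\le b$ and $v\le Dv_k$ above the equilibrium), not by a factor-$n$ sum. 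Neither of these two ingredients appears in your sketch.

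The $O(2^n)$ part has the same missing core. Your induction is stated for $j\le k$ (values above $v_k$), where the trivial bound $R(v_j)\le b$ already holds; the levels that need the geometric factor are $j>k$ (values below $v_k$, large demand --- cf.\ the lower-bound instance in Theorem~\ref{thm:POS-LB}, where the unique equilibrium sits at the top value and revenues grow like $2^{j}$ as you go down). ``Iterating the comparison between consecutive levels'' does not follow from the equilibrium condition at $v_k$ alone, since intermediate levels are not equilibria and the deviation inequality from $v_k$ says nothing about prices $v_j\le v_k/2$. A correct route is again via the dynamics: along the symmetrized chain the responding seller never accepts less than his current half-share, so the total revenue at most halves per step, and the chain from any $(v_j/2,v_j/2)$ with $j>k$ climbs to $(v_k/2,v_k/2)$ in at most $j-k$ steps, giving $R(v_j)\le 2^{\,j-k}b$; the paper's version runs the dynamics from $(0,0)$, reaching the monopoly price in one step and concluding $b\ge R(p^*)/2^{n-1}$. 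Your reduction ``$R(p^*)\le SW(0)$, so it suffices to bound $SW(0)$'' is fine, and your direct bound $R(v_j)\le b$ for $v_j>v_k$ is correct, but without the chain/monotonicity machinery both stated bounds remain unproven. (Also, tightness is established by Theorem~\ref{thm:POS-LB}, not by the dynamics examples you cite, though tightness is not part of the statement here.)
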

As the bound holds for the revenue of the best equilibrium, it clearly also holds for the welfare of that equilibrium.
The proof of the theorem is presented in Appendix \ref{sec:proof-pos-ub}.

The next theorem shows that the above price-of-stability bounds are tight.
It describes instances where the gap between the best equilibrium and the optimal outcome is asymptotically at least $2^n$ and $\sqrt{D}$, for both welfare and revenue.
We prove the theorem in Appendix \ref{sec:proof-pos-lb}.
%
\begin{theorem}
\label{thm:POS-LB}
For any number of demand levels $n$, there exists an instance for which
the optimal welfare and the monopolist revenue are at least factor $\Omega(2^n)$ larger than
the best equilibrium welfare and revenue, respectively.

In addition,
there exists an instance with integer demands for which
the optimal welfare and the monopolist revenue are at least factor $\Omega(\sqrt{D})$ larger than
the best equilibrium welfare and revenue, respectively.
\end{theorem}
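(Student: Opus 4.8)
The plan is to exhibit two separate instances, one for each bound, built on a common principle. By Observation~\ref{obs:BR-price-is-value} and Lemma~\ref{obs:total-price-convex} the equilibrium total prices are exactly the $v_i$ for which the symmetric profile $(v_i/2,v_i/2)$ is a Nash equilibrium, and by Proposition~\ref{obs:best-NE-well-defined} the best equilibrium is the one with the smallest such $v_i$. So in each construction I design $(\vec v,\vec d)$ so that (i) one ``high'' value $v_k$ has $(v_k/2,v_k/2)$ an equilibrium of small revenue $R(v_k)$, (ii) every profile $(v_j/2,v_j/2)$ with $v_j<v_k$ fails to be an equilibrium because of a single explicit, strictly profitable upward deviation, and (iii) the monopolist revenue $\max_i R(v_i)$ and the optimal welfare $SW(0)$ both exceed $R(v_k)$ by the claimed factor. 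The welfare versions of the bounds then follow, since in any equilibrium the revenue never exceeds the welfare while the optimal welfare is at least the monopolist revenue.

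\emph{The $\Omega(2^n)$ bound.} Take the geometric instance $v_i=\rho^{\,i-1}$, $d_i=\sigma^{\,i-1}$ with $\rho=\delta/2$ and $\sigma=(2-\delta)/\rho$, for a small $\delta=\Theta(1/n)$; then $R(v_i)=(\rho\sigma)^{i-1}=(2-\delta)^{i-1}$ is increasing and $v_2=\rho<v_1/2$. The latter inequality already forces $(v_1/2,v_1/2)$ to be an equilibrium, with revenue $R(v_1)=1$, because no price moves the total onto a value below $v_1$. For $i\ge 2$, from $(v_i/2,v_i/2)$ a seller can raise the total price to $v_{i-1}$ and earn $(v_{i-1}-v_i/2)d_{i-1}=(2-\delta)^{i-2}(1-\rho/2)$, which beats the current per--seller revenue $(2-\delta)^{i-1}/2$ exactly because $2-\rho>2-\delta$; hence $(v_i/2,v_i/2)$ is not an equilibrium. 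So the best (and only) equilibrium has revenue $1$ and welfare $\Theta(1)$, while summing the geometric series gives monopolist revenue $R(v_n)=(2-\delta)^{n-1}$ and $SW(0)=\Theta((2-\delta)^n)$; with $\delta=\Theta(1/n)$ both are $\Omega(2^n)$.

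\emph{The $\Omega(\sqrt D)$ bound.} A purely geometric instance cannot give this (when demands grow geometrically the condition destroying all low-price equilibria forces $D$ to be superpolynomial in the gap), so I use the \emph{linear} integer demands $d_j=j$, so $D=n$, and fix the values through their revenues. Put $v_1=1$, $v_2=2/3$ (so $R(v_1)=1<R(v_2)=4/3$), and for $j\ge3$ let $R(v_j)=(1-s_j)\tfrac{2j}{2j-1}R(v_{j-1})$ with a slack $0<s_j<\tfrac1{2j}$ (say $s_j=1/j^2$), and $v_j=R(v_j)/j$. Here $\tfrac{2j}{2j-1}=\tfrac{2x_j}{1+x_j}$ with $x_j=d_j/d_{j-1}$ is precisely the threshold above which an upward move from $(v_j/2,v_j/2)$ to total price $v_{j-1}$ ceases to be profitable, so the positive slack keeps that move strictly profitable and $(v_j/2,v_j/2)$ is not an equilibrium for $j\ge3$. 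On the other hand $(v_2/2,v_2/2)$ \emph{is} an equilibrium: with $v_1=\tfrac{3}{2}v_2$ the only upward deviation merely breaks even, and each of the finitely many valid downward deviations fails by an inequality equivalent to $(x_j-1)^2\ge0$. Hence the best equilibrium sits at total price $v_2$, with revenue and welfare $\Theta(1)$. Finally $R(v_j)=R(v_2)\prod_{i=3}^{j}(1-s_i)\tfrac{2i}{2i-1}=\Theta(\sqrt j)$, since the Wallis-type product $\prod_{i=3}^{j}\tfrac{2i}{2i-1}$ is $\Theta(\sqrt j)$ and $\prod_i(1-s_i)=\Theta(1)$ because $\sum_i s_i<\infty$; therefore the monopolist revenue is $R(v_n)=\Theta(\sqrt n)=\Theta(\sqrt D)$ and $SW(0)=\sum_j v_j(d_j-d_{j-1})=\sum_j v_j=\sum_j\Theta(1/\sqrt j)=\Theta(\sqrt n)=\Theta(\sqrt D)$, an $\Omega(\sqrt D)$ factor above the best equilibrium in both measures.

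The crux is the bookkeeping in the $\sqrt D$ construction: the slacks $s_j$ must be small enough that every $(v_j/2,v_j/2)$, $j\ge3$, is genuinely beaten by its neighbour $v_{j-1}$ and that the revenues $R(v_j)$ stay increasing (so the monopolist's optimum is at $v_n$), yet with $\sum_j s_j=O(1)$ so that the product defining $R(v_n)$ does not drop below order $\sqrt D$; simultaneously $(v_2/2,v_2/2)$ must remain an equilibrium, the value vector must stay strictly decreasing, and one must verify that the ``move up to $v_{j-1}$'' deviation is the only one that can matter. The $2^n$ construction, in comparison, is immediate once the geometric parameters are set.
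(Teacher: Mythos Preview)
Your $\Omega(2^n)$ construction is correct and essentially the same as the paper's: both take geometrically decaying values with $R(v_i)\approx(2-\delta)^{i-1}$ and check that the one-step upward deviation from $(v_i/2,v_i/2)$ to total price $v_{i-1}$ is strictly profitable for every $i\ge2$, leaving only the equilibrium at $v_1$.

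For the $\Omega(\sqrt D)$ bound your route differs from the paper's. The paper uses the closed-form instance $d_i=i$, $v_i=1/\sqrt{i-1}$ for $i\ge2$ (with $v_1$ slightly above $1$) and shows that from $(v_i/2,v_i/2)$ the \emph{two-step} upward deviation to $v_{i-2}$ is profitable: the ``no profitable deviation'' condition would read $(i-1)/\sqrt{i-1}\ge(i-2)/\sqrt{i-3}$, i.e.\ $(i-1)(i-3)\ge(i-2)^2$, which is false for all $i\ge4$. Hence the only equilibria sit at $i\le3$, with revenue and welfare $O(1)$, while the monopolist revenue and optimal welfare are $\Theta(\sqrt D)$. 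Your recursive construction with one-step deviations and slacks $s_j$ heads toward the same asymptotic shape ($v_j\sim 1/\sqrt j$, $d_j=j$) but with considerably more bookkeeping; the paper's two-line check is cleaner.

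There is also a genuine gap in your sketch. The claim that every downward deviation from $(v_2/2,v_2/2)$ to total price $v_j$ ``fails by an inequality equivalent to $(x_j-1)^2\ge0$'' is not justified. That square-nonnegativity is exactly the condition that, at the threshold ratio $R(v_{i+1})/R(v_i)=2x_{i+1}/(1+x_{i+1})$, the \emph{one-step} downward move from $(v_i/2,v_i/2)$ to $v_{i+1}$ is unprofitable; it does not by itself handle the multi-step move from level~$2$ to level~$j\ge4$. What you actually need is $R(v_j)\le(j+2)/3$ for all $j\ge3$, equivalently $\prod_{k=3}^{j}\tfrac{2k}{2k-1}\le(j+2)/4$. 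This is true and provable by a short induction (the inductive step reduces to $2(j{+}1)(j{+}2)\le(j{+}3)(2j{+}1)$, i.e.\ $j\ge1$), but it is a separate argument, not an instance of $(x_j-1)^2\ge0$. Combined with the other loose ends you list in your ``crux'' paragraph (strict monotonicity of $\vec v$, $R$ increasing, irrelevance of non-neighbour deviations), your $\sqrt D$ instance ends up heavier than necessary; the paper's closed-form instance avoids all of this.
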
	

\ignore{ 
\begin{theorem}
	\label{thm:POS-n}
	Fix $\epsilon>0$.
	For any integer $n\geq 2$ there exist an instance $(\vec{v},\vec{d})$ with $n$ demand levels for which the best equilibrium has welfare and revenue of $1$, while the optimal welfare if at least $2^n - 1- \epsilon$ and the monopolist revenue is at least $2^{n-1}-\epsilon$.
	
	This is tight as for any instance $(\vec{v},\vec{d})$ with $n$ demand levels it holds that  the optimal welfare is at most factor $2^n -1$ larger than the welfare of the highest welfare equilibrium, and the revenue of the monopolist is at most factor $2^{n-1}$ larger than the total revenue of the highest revenue equilibrium.
\end{theorem}	
} 


%
%



\subparagraph*{Acknowledgements.}

Noam Nisan was supported by ISF grant 1435/14 administered by the Israeli Academy of Sciences and Israel-USA Bi-national Science Foundation (BSF) grant  2014389.

\bibliographystyle{plain}
\bibliography{bib}

\begin{thebibliography}{10}

\bibitem{ADKTWR08}
Elliot Anshelevich, Anirban Dasgupta, Jon~M. Kleinberg, {\'{E}}va Tardos, Tom
  Wexler, and Tim Roughgarden.
\newblock The price of stability for network design with fair cost allocation.
\newblock {\em {SIAM} J. Comput.}, 38(4):1602--1623, 2008.

\bibitem{AAEMS08}
Baruch Awerbuch, Yossi Azar, Amir Epstein, Vahab~S. Mirrokni, and Alexander
  Skopalik.
\newblock Fast convergence to nearly optimal solutions in potential games.
\newblock In {\em Proceedings 9th {ACM} Conference on Electronic Commerce
  (EC08)}, pages 264--273, 2008.

\bibitem{BBN16}
Moshe Babaioff, Liad Blumrosen, and Noam Nisan.
\newblock Network of complements.
\newblock In {\em The 43rd International Colloquium on Automata, Languages and
  Programming (ICALP 2016).}, 2016.

\bibitem{BLN13}
Moshe Babaioff, Brendan Lucier, and Noam Nisan.
\newblock Bertrand networks.
\newblock In {\em ACM Conference on Electronic Commerce (ACM-EC)}, 2013.

\bibitem{Bert83}
Joseph Louis~Fran\c{c}ois Bertrand.
\newblock theorie mathematique de la richesse sociale.
\newblock {\em Journal de Savants}, 67:499–508, 1883.

\bibitem{BY00}
James~M. Buchanan and Yong~J. Yoon.
\newblock Symmetric tragedies; commons and anticommons.
\newblock {\em Journal of Law and Economics}, 43(1):1--13, 2000.

\bibitem{CN09}
Shuchi Chawla and Feng Niu.
\newblock The price of anarchy in bertrand games.
\newblock In {\em Proceedings of the 10th ACM Conference on Electronic
  Commerce}, EC '09, pages 305--314, 2009.

\bibitem{CR08}
Shuchi Chawla and Tim Roughgarden.
\newblock First international symposium on algorithmic game theory, sagt 2008.
\newblock pages 70--82, 2008.

\bibitem{Cournot1838}
Antoine~Augustin Cournot.
\newblock {\em Recherches sur les principes mathematiques de la theori des
  Richesses}.
\newblock 1838.

\bibitem{EK06}
Nicholas Economides and Evangelos Katsamakas.
\newblock Two-sided competition of proprietary vs. open source technology
  platforms and the implications for the software industry.
\newblock {\em Managment Science}, 52(7):1057–1071, 2006.

\bibitem{ES92}
Nicholas Economides and Steven~C. Salop.
\newblock Competition and integration among complements, and network market
  structure.
\newblock {\em The Journal of Industrial Economics}, 40(1):105--123, 1992.

\bibitem{Ell39}
C.~Ellet.
\newblock {\em An essay on the laws of trade in reference to the works of
  internal improvement in the United States}.
\newblock Reprints of economic classics. A.M. Kelley, 1839.

\bibitem{FPT04}
Alex Fabrikant, Christos~H. Papadimitriou, and Kunal Talwar.
\newblock The complexity of pure nash equilibria.
\newblock In {\em Proceedings of the 36th Annual {ACM} Symposium on Theory of
  Computing}, pages 604--612, 2004.

\bibitem{FK01}
Yossi Feinberg and Morton~I. Kamien.
\newblock Highway robbery: complementary monopoly and the hold-up problem.
\newblock {\em International Journal of Industrial Organization}, 19(10):1603
  -- 1621, 2001.

\bibitem{FKLMO13}
Amos Fiat, Elias Koutsoupias, Katrina Ligett, Yishay Mansour, and Svetlana
  Olonetsky.
\newblock Beyond myopic best response (in cournot competition).
\newblock {\em Games and Economic Behavior, to appear.}, 2013.

\bibitem{HHT14}
Jason Hartline, Darrell Hoy, and Sam Taggart.
\newblock Price of anarchy for auction revenue.
\newblock In {\em Proceedings of the Fifteenth ACM Conference on Economics and
  Computation}, EC '14, pages 693--710, 2014.

\bibitem{Hel98}
Michael~A. Heller.
\newblock The tragedy of the anticommons: Property in the transition from marx
  to markets.
\newblock {\em Harvard Law Review}, 111:621 -- 688, 1998.

\bibitem{KP99}
E.~Koutsoupias and C.~Papadimitriou.
\newblock Worst-case equilibria.
\newblock In {\em Proceedings of the 16th Annual Symposium on Theoretical
  Aspects of Computer Science}, pages 404--413, 1999.

\bibitem{NP10}
Uri Nadav and Georgios Piliouras.
\newblock {\em No Regret Learning in Oligopolies: Cournot vs. Bertrand}, pages
  300--311.
\newblock 2010.

\bibitem{NSVZ11}
Noam Nisan, Michael Schapira, Gregory Valiant, and Aviv Zohar.
\newblock Best-response mechanisms.
\newblock In {\em Innovations in Computer Science - {ICS} 2010}, pages
  155--165, 2011.

\bibitem{PSD05}
Francesco Parisi, Norbert Schulz, and Ben Depoorter.
\newblock Duality in property: Commons and anticommons.
\newblock {\em International Review of Law and Economics}, 25(4):578 -- 591,
  2005.

\bibitem{RT02}
T.~Roughgarden and Eva Tardos.
\newblock How bad is selfish routing?
\newblock {\em Journal of the ACM}, 49(2):236 -- 259, 2002.

\bibitem{Rou15}
Tim Roughgarden.
\newblock Intrinsic robustness of the price of anarchy.
\newblock {\em J. {ACM}}, 62(5):32, 2015.

\bibitem{SV08}
Alexander Skopalik and Berthold V{\"{o}}cking.
\newblock Inapproximability of pure nash equilibria.
\newblock In {\em Proceedings of the 40th Annual {ACM} Symposium on Theory of
  Computing}, pages 355--364, 2008.

\bibitem{Son68}
Hugo Sonnenschein.
\newblock The dual of duopoly is complementary monopoly: or, two of cournot's
  theories are one.
\newblock {\em Journal of Political Economy}, 76:316 -- 318, 1968.

\end{thebibliography}


\appendix

\section{Equilibria and Convexity}
\label{app:eq-are-convex}

\noindent \textbf{Proof of Lemma \ref{obs:total-price-convex}:}
\begin{proof}
	We assume WLOG that $p<q$. Assume for a contradiction that $(x, p+q-x)$ is not a pure NE. Then, for some $\Delta>-x$, it holds that $\mathcal{D}(p+q+\Delta)\cdot (\Delta+x)> \mathcal{D}(p+q) \cdot x$, or equivalently,
	\begin{align*}
	\mathcal{D}(p+q+\Delta)\cdot \Delta> (\mathcal{D}(p+q)- \mathcal{D}(p+q+\Delta))\cdot x.
	\end{align*}
	
	We will show that if this beneficial deviation had been to a higher (lower) price, then the same deviation would have been beneficial to the player that offered the lower (higher) price in the equilibrium $(p,q)$.
	
	If $\Delta>0$ then increasing the price by $\Delta$ is also a beneficial deviation for $p$ when the profile is $(p,q)$ and $p<q$.
	This holds since $\Delta>0$ and for the downward-sloping demand it holds that $\mathcal{D}(p+q)\geq \mathcal{D}(p+q+\Delta)$ and thus
	\begin{align*}	
	& \mathcal{D}(p+q+\Delta)\cdot \Delta\\
	& > (\mathcal{D}(p+q)- \mathcal{D}(p+q+\Delta))\cdot x \\
	& \geq  (\mathcal{D}(p+q)- \mathcal{D}(p+q+\Delta))\cdot p
	\end{align*}
	It follows that $\mathcal{D}(p+q+\Delta)\cdot (p+\Delta)> \mathcal{D}(p+q)\cdot p$ which implies that $p+\Delta$ is a beneficial deviation as claimed.
	
	If $\Delta<0$ then adding $\Delta$ to the price is also a beneficial deviation for $q$ when the profile is $(p,q)$. It holds that $\mathcal{D}(p+q)\leq \mathcal{D}(p+q+\Delta)$ ($\Delta<0$) and thus
	\begin{align*}
	& \mathcal{D}(p+q+\Delta)\cdot \Delta \\
	&> (\mathcal{D}(p+q)- \mathcal{D}(p+q+\Delta))\cdot x\\
	& \geq (\mathcal{D}(p+q)- \mathcal{D}(p+q+\Delta))\cdot q
	\end{align*}
	
	We showed that $\mathcal{D}(p+q+\Delta)\cdot (q+\Delta)> \mathcal{D}(p+q)\cdot q$ which implies that $q+\Delta$ is a beneficial deviation as claimed.
\end{proof}


\section{Quality Gaps between Equilibria}
\label{app:gaps-between-eq-tight}
\begin{proposition}\label{prop:POA-D}
	For any number of demand levels $n\geq 2$ and any total demand $D$, it holds that the ratio between the optimal welfare (and thus the welfare in the best equilibria) and the welfare in any non-trivial equilibrium is at most $D$.
	Additionally, the ratio between the revenue of a monopolist (and thus the best revenue in equilibria) and the revenue of any non-trivial equilibrium is at most $2D$.    	
\end{proposition}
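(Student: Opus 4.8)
The plan is to prove each of the two inequalities by a short sandwich argument, exploiting that a non-trivial equilibrium has a very constrained total price. Fix any non-trivial equilibrium $(p,q)$. By Observation~\ref{obs:BR-price-is-value} its total price is one of the demand values, say $p+q=v_k$ for some $k\in\{1,\dots,n\}$, and since some quantity is sold the demand there is $d_k>0$; also $v_k\le v_1$.

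For the welfare claim I would bound the optimal welfare from above and the equilibrium welfare from below by two quantities whose ratio is exactly $D$. A telescoping estimate gives $SW(0)=\sum_{i=1}^{n}v_i(d_i-d_{i-1})\le v_1\sum_{i=1}^{n}(d_i-d_{i-1})=v_1 d_n$, using $v_i\le v_1$ and $d_0=0$. In the other direction, since the equilibrium total price $v_k$ is at most $v_1$, the $d_1$ buyers of the highest value $v_1$ all buy, so the equilibrium welfare is at least the $i=1$ term of the welfare sum, $SW(v_k)\ge v_1 d_1$ (the remaining terms being nonnegative). Hence $SW(0)\le v_1 d_n=D\cdot v_1 d_1\le D\cdot SW(v_k)$, which is the claimed bound; applied to the best equilibrium it also bounds the optimal welfare against the welfare of every non-trivial equilibrium.

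For the revenue claim I would first record that the monopolist revenue is attained at a demand value: on each interval $(v_{i+1},v_i]$ the demand equals $d_i$, so $x\mathcal{D}(x)\le v_i d_i=R(v_i)$, whence $R(p^*)=\max_i v_i d_i$; take $p^*=v_j$ to be the smallest such maximizer. Then I would invoke the Cournot-type proposition of Section~\ref{sec:model} (the total price of every equilibrium is at least the minimal monopolist price) to get $v_k\ge v_j$, i.e. $k\le j$ and $d_k\le d_j$, and conclude
\[
\frac{R(p^*)}{R(v_k)}=\frac{v_j\,d_j}{v_k\,d_k}=\frac{v_j}{v_k}\cdot\frac{d_j}{d_k}\le 1\cdot\frac{d_n}{d_1}=D\le 2D,
\]
the first factor being $\le 1$ by $v_j\le v_k$ and the second being $\le d_n/d_1$ by $d_j\le d_n$ and $d_k\ge d_1$. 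Since $R(p^*)$ dominates the revenue of every equilibrium (again by the Cournot-type proposition), this also bounds the best-equilibrium revenue against that of any non-trivial equilibrium.

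There is no real obstacle here; the two points to get right are the boundary convention — a buyer whose value equals the current total price is counted as buying, which is exactly what makes $SW(v_k)\ge v_1 d_1$ hold even when $k=1$, consistent with Example~\ref{obs:equal-split-2} — and the choice of whether to lean on the Cournot-type proposition. If a self-contained revenue bound is preferred, one gets it (and here the slack factor $2$ is genuinely used) by letting the higher-priced seller of $(p,q)$, with price $p\ge v_k/2$, contemplate moving to total price $v_j$: if $v_j\ge q$ this is the feasible deviation to price $v_j-q\ge 0$, whose unprofitability yields $p\,d_k\ge(v_j-q)d_j$ and hence $R(p^*)=v_j d_j\le p\,d_k+q\,d_j\le v_k d_k+v_k d_n\le(1+D)R(v_k)\le 2D\,R(v_k)$, while if $v_j<q\le v_k/2$ then directly $R(p^*)=v_j d_j<\tfrac{v_k d_n}{2}\le\tfrac{D}{2}R(v_k)$.
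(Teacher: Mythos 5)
Your proof is correct, and it is worth separating its two halves. The welfare half coincides with the paper's argument: optimal welfare at most $v_1 d_n$ by telescoping, equilibrium welfare at least $v_1 d_1$ since the total price is at most $v_1$; your remark about the boundary convention is the right reading of the model (buyers whose value equals the total price are counted as buying, consistently with $\mathcal{D}(v_i)=d_i$, even though the displayed formula for $SW(x)$ uses a strict inequality). The revenue half is genuinely different from the paper's. The paper stays anchored at the top value $v_1$: writing the equilibrium as $(p,q)$ with $p\ge q$ and total price $v_k$, it shows the higher-priced seller must earn at least $v_1 d_1/2$ --- otherwise both prices are below $v_1/2$ and the lower-priced seller would profitably deviate to $v_1-p$, earning more than $v_1 d_1/2$, a contradiction --- and pairs this with the crude bound that the monopoly revenue is at most the optimal welfare, hence at most $v_1 d_n$; this needs no information about where the monopoly revenue is attained and yields exactly $2D$. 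You instead anchor at the monopoly price $v_j$: either you invoke the Cournot-type proposition (equilibrium total price at least the minimal monopoly price), which with $R(p^*)=v_j d_j$ gives the chain $v_j d_j/(v_k d_k)\le d_j/d_k\le D$ --- in fact a bound of $D$, stronger than the stated $2D$, at the cost of importing the convexity lemma on which that proposition rests --- or you run the deviation of the higher-priced seller toward total price $v_j$ directly, which is self-contained and gives $(1+D)\le 2D$; the case split $v_j\ge q$ versus $v_j<q\le v_k/2$ is handled correctly and the deviation price $v_j-q$ is nonnegative exactly when you use it. So the substantive difference is which price you compare against ($v_1$ in the paper, $p^*$ for you) and, correspondingly, whether the argument is fully self-contained or leans on earlier structural results; your first route buys a slightly sharper constant, the paper's buys independence from the monopoly-price machinery.
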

\begin{proof}
	In any non-trivial NE the welfare is at least $v_1\cdot d_1$,
		while the optimal welfare is at most $d_n \cdot v_1$,
		and thus the ratio of the two is at most 
		$\frac{d_n\cdot v_1}{d_1\cdot v_1}= \frac{d_n}{d_1}=D$. 
	
	We next move to present the revenue bound. Observe that the revenue of the monopolist is at most $d_n\cdot v_1$ as this is a bound on the welfare. Fix any NE $(p,q)$ and assume wlog that $p\geq q$.
	
	If $p \mathcal{D}(p+q)\geq \frac{v_1\cdot d_1}{2}$, then the ratio between the monopolist revenue and the revenue in a non-trivial equilibrium is at most 
	$\frac{d_n\cdot v_1}{d_1\cdot v_1/2}= \frac{2d_n}{d_1}= 2D$.
	Otherwise, $(p,q)$ is an equilibrium in which $q \mathcal{D}(p+q)\leq p \mathcal{D}(p+q)< \frac{v_1\cdot d_1}{2}$ and it must be the case that $q\leq p<v_1/2$ as otherwise the revenue of the first seller is $p \mathcal{D}(p+q)\geq v_1 \mathcal{D}(p+q)/2 \geq v_1 d_1/2$. But when
$q\leq p<v_1/2$ the revenue of the second seller by pricing at $v_1-p$ is $d_1 (v_1-p) > d_1 v_1 - v_1 d_1 /2 = v_1 d_1 /2 $, a contradiction.
\end{proof}

\section{Best Response Dynamics}

\subsection{Starting from Cartel Prices}
\label{app:cartel-prices-two-levels}

\begin{proposition}
	For any instance with two demand levels ($n=2$) and any monopolist price $p^*$, best responses dynamics starting from $(p^*/2,p^*/2)$ always ends in an equilibrium with revenue that is at least half the revenue of the monopolist (and thus the revenue in any other equilibria), and welfare that is at least a third of the optimal welfare.
\end{proposition}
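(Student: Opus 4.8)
The plan is to exploit that, for $n=2$, there are only two relevant total prices. By Observation~\ref{obs:BR-price-is-value} every best response, and hence every non-trivial equilibrium, has total price $v_1$ or $v_2$; and since $\mathcal{D}$ is a two-step function, the same reasoning shows that the monopolist price satisfies $p^*\in\{v_1,v_2\}$ with monopolist revenue $\max\{v_1d_1,\,v_2d_2\}$. I will also record the elementary facts that the total revenue at total price $v_i$ is $v_id_i$, that the welfare at total price $v_1$ is $v_1d_1$, and that the welfare at total price $v_2$ equals the optimal welfare $v_1d_1+v_2(d_2-d_1)$. Given these, it suffices to find the equilibrium at which the dynamics stops and substitute into the formulas, and I will split on whether $p^*=v_1$ or $p^*=v_2$.

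If $p^*=v_1$, i.e.\ $v_1d_1\ge v_2d_2$, the claim is that the starting profile $(v_1/2,v_1/2)$ is already an equilibrium, so the dynamics halts immediately. A seller deviating from $v_1/2$ can only hope to gain by pushing the joint price down to $v_2$ (pushing it above $v_1$ gives zero demand, and any joint price strictly between $v_2$ and $v_1$ leaves demand at $d_1$ while lowering its own price), and such a deviation — when it is even feasible, i.e.\ $v_2>v_1/2$ — earns $(v_2-v_1/2)d_2\le v_1d_1-(v_1/2)d_2<(v_1/2)d_1$, where the first inequality uses $v_2d_2\le v_1d_1$ and the second uses $d_1<d_2$; this is strictly less than the current payoff $(v_1/2)d_1$, so there is no even weakly profitable deviation (in particular, no tie to break). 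Hence the dynamics stays put, with revenue $v_1d_1$ (the monopolist revenue) and welfare $v_1d_1$, which is at least half the optimal welfare since $v_2(d_2-d_1)<v_2d_2\le v_1d_1$.

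If $p^*=v_2$, i.e.\ $v_2d_2\ge v_1d_1$, the first mover best-responds to $v_2/2$, and by Observation~\ref{obs:BR-price-is-value} its only candidates are to stay at $v_2/2$ (joint price $v_2$, payoff $(v_2/2)d_2$) or to move to $v_1-v_2/2$ (joint price $v_1$, payoff $(v_1-v_2/2)d_1$). If $(v_1-v_2/2)d_1<(v_2/2)d_2$ the mover stays; then the same analysis as above (lowering the joint price below $v_2$ is unprofitable, raising it to $v_1$ is unprofitable by this inequality) shows $(v_2/2,v_2/2)$ is an equilibrium, and the dynamics halts there with revenue $v_2d_2$ (the monopolist revenue) and optimal welfare. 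Otherwise $(v_1-v_2/2)d_1\ge(v_2/2)d_2$ and the mover switches to $v_1-v_2/2$, making the joint price $v_1$; the key step is that the other seller's best response to $v_1-v_2/2$ again has joint price $v_1$, which follows from the Monotonicity Lemma (Lemma~\ref{lemma:monotonicity-of-prices}) applied to $v_2/2< v_1-v_2/2\le v_1$ with $v_1-v_2/2\in BR(v_2/2)$, together with the fact that no best response has joint price above $v_1$. So the other seller keeps its price $v_2/2$ and the dynamics halts at $(v_1-v_2/2,v_2/2)$, with revenue and welfare both equal to $v_1d_1$. The defining inequality $(v_1-v_2/2)d_1\ge(v_2/2)d_2$ rearranges to $v_1d_1\ge(v_2/2)(d_1+d_2)>(v_2/2)d_2$, i.e.\ the revenue exceeds half the monopolist revenue; and doubling gives $2v_1d_1\ge v_2(d_1+d_2)$, so $3v_1d_1\ge v_1d_1+v_2d_1+v_2d_2> v_1d_1+v_2(d_2-d_1)$, i.e.\ the welfare exceeds a third of the optimal welfare.

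I expect the only real obstacle to be the second situation above: once the dynamics has pushed the joint price up to $v_1$, one must rule out its sliding back down to $v_2$, and the Monotonicity Lemma settles this cleanly (a direct payoff computation also works, reducing to $4(v_1/v_2-1)^2\ge 0$, but is messier). Everything else is bookkeeping with the two revenue/welfare formulas. The one loose end is tie-breaking: the only best-response set that can fail to be a singleton is $BR(v_2/2)$ in the $p^*=v_2$ case on the boundary $(v_1-v_2/2)d_1=(v_2/2)d_2$, and there both reachable equilibria — $(v_2/2,v_2/2)$ and $(v_1-v_2/2,v_2/2)$ — were already shown to satisfy both bounds, so the conclusion holds however ties are resolved.
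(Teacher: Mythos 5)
Your proof is correct, and it reaches the same conclusion by a somewhat different route than the paper. The paper's proof first invokes Proposition \ref{obs:bsd-2-stops} to guarantee that with two demand levels the dynamics converges to some equilibrium, and then bounds the quality of whatever equilibrium can be reached: for $p^*=v_1$ it checks, exactly as you do, that $(v_1/2,v_1/2)$ is itself an equilibrium; for $p^*=v_2$ it splits on whether $v_2d_2>2v_1d_1$ (then $(v_2/2,v_2/2)$ is an equilibrium and the dynamics never leaves it) or $v_1d_1\le v_2d_2\le 2v_1d_1$ (then either candidate equilibrium revenue, $v_1d_1$ or $v_2d_2$, is at least half the monopoly revenue, and the optimal welfare is at most $3v_1d_1$), without determining which equilibrium the process actually stops at. You instead trace the dynamics explicitly: your case split in the $p^*=v_2$ branch is on whether the first responder's best reply to $v_2/2$ pushes the total price up to $v_1$, and in that sub-case you combine the Monotonicity Lemma (Lemma \ref{lemma:monotonicity-of-prices}) with Observation \ref{obs:BR-price-is-value} to show the second seller's best reply keeps the total at $v_1$, so the process halts at $(v_1-v_2/2,\,v_2/2)$ after a single update, and the revenue and welfare bounds fall out of the branching inequality $(v_1-v_2/2)d_1\ge(v_2/2)d_2$. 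What your version buys is self-containedness and sharper information: it does not rely on the separate convergence proposition, it proves termination within one round of price updates, and it pinpoints the exact halting equilibrium (including on the tie boundary, which you correctly note is harmless since both halting points satisfy the bounds); the paper's argument is shorter precisely because it is coarser, bounding all reachable equilibria at once rather than following the trajectory.
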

\begin{proof}
	Proposition \ref{obs:bsd-2-stops} shows that with two demand levels, best responses dynamics always converges to an equilibrium.
	We next prove the welfare and revenue bounds.
	
	
	Assume without loss of generality that with price of $1$ the demand is $1$, and that with price of $p<1$ the demand is $d>1$.
	If $1$ is a monopolist price, then $1\geq d\cdot p$ and $(1/2,1/2)$ is an equilibrium (since the revenue by deviation is
	$(p-1/2)d<(p/2) d\leq 1/2$) having revenue that is the same as the monopolist revenue, and welfare that is at least half the optimal welfare (optimal welfare is at most $1+d\cdot p\leq 2$).
	
	We next consider the case that $p^*=p$.
	As $p^*$ is a monopolist price it holds that $p^*\cdot d\geq 1$.
	If $p^*\cdot d>2$ then $(p^*/2,p^*/2)$ is an equilibrium with maximal revenue. Otherwise
	$2\geq p^*\cdot d\geq 1$, and the revenue in equilibrium reached by the dynamics 
	will be either $1$ or $d\cdot p^*$, and in any case, at least half the maximal equilibrium revenue.
	The welfare claim follows from the fact that for the case $p^*\cdot d>2$ then $(p^*/2,p^*/2)$ is an equilibrium of welfare at least $p^*\cdot d>2$ while the optimal welfare is at most $p^*\cdot d+1< 2d\cdot p^*$. For the case that $2\geq p^*\cdot d\geq 1$, the optimal welfare is at most $d\cdot p^* + 1\leq 3$, while any equilibrium has welfare of at least $1$.
\end{proof}

\subsection{Dynamics Starting at Any Cartelistic Split}
\label{app:cartel-prices}

Proof of Proposition \ref{obs:brd-3-lb}:
\begin{proof}
	Let $v_1=1$, $v_2= 1/4$,$v_3= 1/(3\sqrt{D}) $ and
	let $d_1=1$, $d_2= \sqrt{D}$ and $d_3=D$ (assume that $D$ is large enough).

    First observe that $(v_2/2,v_2/2)$ is an equilibrium with revenue $\sqrt{D}/4$, and at least such welfare.
    Next, observe that the monopolist price is $p^*=v_3$ as the revenue from a total price of $v_3$ is $\sqrt{D}/3$ which is greater than $\sqrt{D}/4$, which is
    the revenue with total price $v_2$. 	

    We now consider any dynamics that starts by a best response to price $q\leq v_3=p^*$. By Observation \ref{obs:BR-price-is-value}, the total price after this best response must be equal to either $v_1$, $v_2$ or $v_3$. We will handle these different cases separately:

    \vspace{2mm}

    \noindent \underline{Case 1:} The best response to $q$ is $v_1-q$.

    In this case we note that $v_1-q \geq 1-v_3=1- 1/(3\sqrt{D})$,  and the dynamics stops since when $D$ is large, this price is greater than $v_2$ and $v_3$ so the other player has no beneficial deviation.
    The welfare and revenue is $1$ as claimed.

    \vspace{2mm}

    \noindent \underline{Case 2:} The best response to $q$ is $v_2-q$.

    In this case the seller with price $q$ that is getting revenue of $q\cdot d_2 \leq v_3 d_2= 1/3$ will deviate to $1-(v_2-q)\geq 3/4$ improving his utility to at least $3/4$. Again, now the dynamics stops at equilibrium with welfare and revenue of $1$, as claimed.

    \vspace{2mm}

    \noindent \underline{Case 3:} The best response to $q$ is $v_3-q$.

    We argue that in this case the dynamics does not stop, and it must continue.
    Indeed, if for large enough $D$ the best response to $q$ is $v_3-q$ then $(v_3-q)d_3\geq (v_2-q)d_2$ or $q\leq \frac{v_3 d_3-v_2 d_2}{d_3-d_2} = \frac{\sqrt{D}}{12(D-\sqrt{D})}<\frac{1}{6\sqrt{D}}$.
    This implies that when best responding to $v_3-q$, a price of $q$ gives utility of at most $D\frac{1}{6\sqrt{D}}=\frac{\sqrt{D}}{6}$, while deviating to $q'= v_2-(v_3-q)>v_2-v_3$ ensures utility of at least $(\frac{1}{4}- \frac{1}{3\sqrt{D}}) \sqrt{D}= \frac{\sqrt{D}}{4}-\frac{1}{3}> \frac{\sqrt{D}}{6}$ (and this is clearly greater than the utility with total price $v_1$ for large $D$).
    As this seller gains at least $\frac{\sqrt{D}}{4}-\frac{1}{3}$,
    the other seller gains at most $\frac{1}{3}$ from his current price.
    However, by offering a price of $1-q'$ she can get utility of at least $\frac{3}{4}$ (since $q'>v_2=\frac{1}{4}$, note also that $q'>v_3$ so deviation to this value is not beneficial). Then the dynamics terminates as in the previous cases with revenue and welfare of 1.
\end{proof}

\subsection{Dynamics with Random Starting Prices}
\label{app:dynamic-random-prices}

We show that Proposition \ref{obs:conv-bad-NE-n is 2} is essentially tight.

\begin{proposition}
	\label{obs:conv-bad-NE-n is 2-tight}
	For any instance with two demand levels for which the ratio of welfare of the best and worst equilibrium is $\epsilon D$ for some  $1>\epsilon>2/D$, it holds that the probability of the dynamics ending at the best equilibrium when starting from a uniform random pair of prices in $[0,v_1]^2$ is at least $\epsilon-2/D$. 		
\end{proposition}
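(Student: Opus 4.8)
The plan is to run the best-response dynamics on the instance underlying Proposition~\ref{obs:conv-bad-NE-n is 2}: a two-demand-level instance is determined up to scaling by $D$ and the ratio $v_2/v_1$, and the hypothesis fixes $v_2/v_1=\epsilon$ (so the revenue ratio between the two equilibrium families is exactly $\epsilon D$, and the welfare ratio is $1+\epsilon(D-1)\le \epsilon D$). Normalize $v_1=d_1=1$, $d_2=D$, so $v_2=\epsilon$; we may assume $\epsilon<1/2$ (as in Proposition~\ref{obs:conv-bad-NE-n is 2}), since for larger $\epsilon$ the analysis is similar and more of the space flows to the good equilibrium. First I would record the best-response correspondence: comparing $(v_2-q)d_2$ with $(1-q)d_1$ gives a single threshold $\hat q:=\tfrac{\epsilon D-1}{D-1}\in(0,\epsilon)$ with $\mathrm{BR}(q)=\epsilon-q$ (total price $v_2$) for $q<\hat q$ and $\mathrm{BR}(q)=1-q$ (total price $v_1$) for $\hat q\le q\le 1$. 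Then, using Lemma~\ref{obs:total-price-convex} together with a short deviation check, I would describe the two equilibrium families: the \emph{good} equilibria are the splits $(p,v_2-p)$ with $p\in[\underline p,\hat q]$, where $\underline p:=\tfrac{1-\epsilon}{D-1}$ and crucially $\underline p=v_2-\hat q$; the \emph{bad} equilibria are the splits $(p,v_1-p)$ with $p\in[\hat q,1-\hat q]$. The hypothesis $\epsilon D>2$ gives $\epsilon(D+1)>2$, i.e. $\underline p\le\hat q$, so a good equilibrium indeed exists; by Proposition~\ref{obs:best-NE-well-defined} any fixed point at total price $v_2$ is a best equilibrium.

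Since seller~$1$'s first move overwrites his own price, the whole trajectory is a function of the single initial coordinate $q$ that seller~$1$ first responds to, so it suffices to lower bound the measure of $q\in[0,1]$ whose trajectory converges to a good equilibrium. The key repeatedly-used fact is: a mover facing a price $q'$ freezes play at a good equilibrium exactly when $q'\in[\underline p,\hat q)$ (he then plays $v_2-q'\in(\underline p,\hat q]$, and since $v_2-[\underline p,\hat q]=[\underline p,\hat q]$ the resulting split is in the good interval, hence a fixed point); otherwise play either sits at total price $v_1$ or is pulled to a non-equilibrium split of $v_2$ and continues. Tracing the cases on $q$: (i) $q\in[0,\underline p)$: seller~$1$ pulls the total to $v_2$ outside $[\underline p,\hat q]$, seller~$2$ pulls it back to $v_1$, and since $1-\epsilon+q>\hat q$ play freezes at a bad equilibrium; (ii) $q\in(\underline p,\hat q)$: $(\epsilon-q,q)$ is already a good equilibrium; (iii) $q\in[\hat q,1-\hat q]$: $(1-q,q)$ is already a bad equilibrium; (iv) $q\in(1-\hat q,1-\underline p)$: seller~$1$ sets total $v_1$, but then seller~$2$ faces $1-q\in(\underline p,\hat q)$ and freezes play at the good equilibrium $(1-q,\,\epsilon-1+q)$; (v) $q\in[1-\underline p,1]$: two further moves return play to a bad equilibrium at total price $v_1$. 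Hence the set of ``good'' starting $q$ is exactly $(\underline p,\hat q)\cup(1-\hat q,1-\underline p)$, of total measure $2(\hat q-\underline p)$.

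It remains to compute and compare. As the starting profile is uniform on $[0,1]^2$ and only its second coordinate matters, the probability of converging to the best equilibrium equals $2(\hat q-\underline p)=\tfrac{2(\epsilon(D+1)-2)}{D-1}$. Clearing denominators, the target inequality $\tfrac{2(\epsilon(D+1)-2)}{D-1}\ge \epsilon-\tfrac2D$ is equivalent to $\epsilon D(D+3)\ge 2(D+1)$, which follows from $\epsilon D>2$ since $\epsilon D(D+3)>2(D+3)>2(D+1)$.

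The main obstacle is the trajectory bookkeeping of the second paragraph: the dynamics need not stabilize after one step, and a seller may over- or undershoot so that as many as four best-response moves occur before play is absorbed into one of the two families; each intermediate profile's best response must be verified against the single threshold $\hat q$, and the argument leans on the identity $\underline p=v_2-\hat q$, which makes the good interval symmetric and self-absorbing. A secondary point is confirming that the limit profiles really are Nash equilibria and that the dynamics always converges at all (here Proposition~\ref{obs:bsd-2-stops} guarantees convergence for $n=2$, and the deviation computations underlying the equilibrium-family description do the rest).
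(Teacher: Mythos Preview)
Your argument is correct, but it is considerably more elaborate than the paper's. The paper does not attempt to characterize the full set of initial $q$'s that flow to the good equilibrium. Instead it simply exhibits the sub-interval $(1/D,\,\epsilon-1/D)$ of length exactly $\epsilon-2/D$: for any $q$ in this interval, seller~1's best response is $\epsilon-q$ (since $(\epsilon-q)D>1>1-q$), and the resulting pair $(\epsilon-q,q)$ is already a good equilibrium (since $qD>1$ and $(\epsilon-q)D>1$ block both deviations to total price $v_1$). That one-line check suffices for the stated bound; no case analysis, no mirrored interval $(1-\hat q,\,1-\underline p)$, and no algebraic comparison of $2(\hat q-\underline p)$ with $\epsilon-2/D$ are needed.

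What your longer route buys is the \emph{exact} convergence probability $2(\hat q-\underline p)=\tfrac{2(\epsilon(D+1)-2)}{D-1}$, rather than merely the lower bound $\epsilon-2/D$; your interval $(\underline p,\hat q)$ strictly contains the paper's $(1/D,\epsilon-1/D)$, and you also pick up the reflected interval. So the two approaches differ in scope: the paper aims only at the stated inequality, while you effectively solve the full basin-of-attraction question for this instance.

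One small correction to your opening paragraph: for $\epsilon<1$ the welfare ratio $1+\epsilon(D-1)$ is \emph{larger} than $\epsilon D$, not smaller. (The paper's own proof is equally loose on this point; in effect both you and the paper are reading the hypothesis as fixing $v_2/v_1=\epsilon$, which makes the \emph{revenue} ratio exactly $\epsilon D$.) This does not affect the rest of your argument.
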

\begin{proof}
	Normalize the welfare of the worse equilibrium to $1$ (and thus the value is $1$) and the demand to $1$.
	The best equilibrium is for demand $D$ and value $\epsilon<1$, since the equilibria welfare ratio is $\epsilon D$.
	For any pair of prices $(p,q)$ such that $1/D<q<\epsilon-1/D$,
	the best response to $q$ is $\epsilon-q$ as it gives revenue larger than
	$D\cdot (1/D)=1$, while the maximal revenue for a seller in the other equilibrium is $1$.
	Given price $\epsilon-q<\epsilon - 1/D$, the best response is $q$ as it gives revenue larger than $1$, while deviation will give revenue of at most 1.
	We conclude that with probability at least $\epsilon-2/D$ the dynamics stops after a single best response, at the best equilibrium, as claimed.
\end{proof}

\subsection{Dynamics with Random Starting Prices: Impossibility}
\label{app:dynamic-random-prices-impossibility}

\noindent \textbf{Proof of Theorem \ref{thm:almost-sure-bed} (Almost sure convergence to bad equilibria)}:\\

\begin{proof}
	Let $\alpha=2-\epsilon$.
	For $i\in [n]$ let $v_i=\epsilon^{i-1} $ and $d_i=\alpha^{i-1}/v_i = \left(\frac{\alpha}{\epsilon}\right)^{i-1}$, notice that $R(v_i)=\alpha^{i-1}$. Thus, the monopolist revenue is $R(v_n)=\alpha^{n-1} = (2-\epsilon)^{n-1}$, and the optimal welfare is
	$SW(v_n) = d_1\cdot v_1+ \sum_{i=2}^n v_i (d_{i}-d_{i-1}) = 
	1+ \sum_{i=2}^n \left(\alpha^{i-1}- \epsilon \alpha^{i-2}\right) =
	1+ (\alpha - \epsilon)\cdot \sum_{i=0}^{n-2} \alpha^{i} =
	1+ (\alpha - \epsilon)\frac{\alpha^{n-1} - 1 }{\alpha -1 } = 1+ 2 (1 - \epsilon)\frac{(2-\epsilon)^{n-1}-1}{1-\epsilon} =
	1+ 2\cdot ((2-\epsilon)^{n-1}-1)  =  2\cdot (2-\epsilon)^{n-1}-1   $.
	
	We argue that for any $i\in [n]$ the pair of prices $(v_i/2, v_i/2)$ forms an equilibrium (in particular, $(v_n/2, v_n/2)$ is an equilibrium with revenue equals to the monopolist revenue, and optimal welfare),
	and that best response dynamics starting by best responding to any price $q\notin \{v_i/2 \ for \ i\in [n]\}$, ends in an equilibrium with total price of $v_1=1$, and welfare of $1$. Thus, the set of pairs from which the dynamics does not end at welfare of $1$ is finite, and has measure $0$, so the dynamics almost surely converges to total price of $1$ and welfare of $1$.
	
	We first observe that for any $i\in [n]$ prices $(v_i/2, v_i/2)$ form an equilibrium.
	Note that for $\epsilon$ that is small enough, $v_i/2>v_j$ for any $j>i$, and thus a best response to any price of at least $v_i/2$ must be $v_j-(v_i/2)$ for some $j\leq i$.
	Now, with prices $(v_i/2, v_i/2)$ the utility of each seller is
	$d_i \cdot v_i/2 = \alpha^{i-1}/2$, while any optimal deviation must be to some price $v_j - (v_i/2)$ for $j<i$ and it
	gives utility of $d_j (v_j - (v_i/2)) < d_j \cdot v_j= \alpha^{j-1} \leq  \alpha^{i-1}/2$.
	
	We next show that best response dynamics starting by best responding to any price $q\notin S= \{v_i/2 \ for \ i\in [n]\} $, ends in an equilibrium with total price of $v_1=1$, and welfare of $1$.
	Let $p\in BR(q)$ and let $i$ be the index such that  $p+q=v_i$. Since $q\notin S$ it holds that $p\neq q$.
	We argue that from this point onwards, unless the total price is $v_1$, the dynamics continues and the total price strictly increases at every best responses, thus ending at $v_1$ after at most $n$ steps.

	We first show that any uneven split is not an equilibrium. The seller with the low price will want to move to a higher total price, and his new price will be larger than the price of the other seller.
	\begin{lemma}
		\label{lem:non-equal-not-ne}
		Any pair $(x,y)$ such that $x+y=v_j<v_1$ and $x\neq y$ is not an equilibrium.
		Moreover, for small enough $\epsilon$, for $x<\frac{v_j}{2}<y$, it holds that for any $z\in BR(y)$
		we have
		$y+z=v_k>v_j$ for some $k<j$, and $z>y$.
	\end{lemma}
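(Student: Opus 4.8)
The plan is to give the cheaper of the two sellers an explicit profitable deviation: jumping the total price up from $v_j$ to the next value $v_{j-1}$. By the symmetry of the two sellers I may assume $x<y$, so that $x<\frac{v_j}{2}<y$; since $x+y=v_j<v_1$ we have $j\ge 2$, so $v_{j-1}$ exists, and $y<v_j<v_{j-1}$ makes $v_{j-1}-y>0$ a legal price. The seller posting $x$ currently earns $x\,d_j=(v_j-y)d_j$, and I claim that switching to $v_{j-1}-y$ (raising the total price to $v_{j-1}$) earns the strictly larger $(v_{j-1}-y)d_{j-1}$, which already shows $(x,y)$ is not an equilibrium.

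The only real content is this one inequality, and it reduces to a line of algebra using the geometric structure $v_{j-1}=v_j/\epsilon$ and $d_{j-1}=(\epsilon/\alpha)d_j$ of the instance (where $\alpha=2-\epsilon$ and $R(v_i)=\alpha^{i-1}$). Indeed $(v_{j-1}-y)d_{j-1}>(v_j-y)d_j$ is equivalent to $(v_{j-1}-y)\epsilon>\alpha(v_j-y)$, i.e.\ to $v_j-\epsilon y>(2-\epsilon)(v_j-y)$, which rearranges to $(1-\epsilon)(2y-v_j)>0$; this holds because $\epsilon<1$ and $y>\frac{v_j}{2}$.

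For the ``moreover'' part, fix $z\in BR(y)$ under the hypothesis $x<\frac{v_j}{2}<y$. Since $y<v_j\le v_2<v_1$, Observation \ref{obs:BR-price-is-value} gives $y+z=v_k$ for some $k$. I first rule out $k>j$: for $\epsilon<\frac12$ we have $v_{j+1}=\epsilon v_j<\frac{v_j}{2}<y$, hence every $v_\ell$ with $\ell>j$ satisfies $v_\ell\le v_{j+1}<y$, so $v_\ell-y<0$ is not a legal price and total prices below $v_j$ are unreachable from $y$; thus $k\le j$. The computation of the previous paragraph shows the response $v_{j-1}-y$ strictly beats $v_j-y$, so a best response cannot land on total price $v_j$ either, giving $k\le j-1$ and therefore $v_k\ge v_{j-1}>v_j$. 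Finally $z=v_k-y\ge v_{j-1}-y=\frac{v_j}{\epsilon}-y>2v_j-y>2y-y=y$ (again using $\epsilon<\frac12$ and $y<v_j$), so $z>y$.

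The main obstacle is simply spotting the right deviation and verifying the inequality $(1-\epsilon)(2y-v_j)>0$; everything else is bookkeeping that uses the spacing $v_i=\epsilon v_{i-1}$ and ``$\epsilon$ small enough'' to pin down that the feasible alternative total prices from $y$ are exactly $v_1,\dots,v_{j-1}$ and that each of these, minus $y$, exceeds $y$.
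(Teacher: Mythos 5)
Your proof is correct and follows essentially the same route as the paper's: the key deviation is the same (the low‑price seller, facing $y>v_j/2$, strictly prefers $v_{j-1}-y$ to $v_j-y$, which the paper reduces to $y>\frac{\alpha-1}{\alpha-\epsilon}v_j=\frac{v_j}{2}$ — the same computation as your $(1-\epsilon)(2y-v_j)>0$), and the "moreover" part is handled identically, ruling out totals below $v_j$ via $v_{j+1}<v_j/2<y$ and concluding $z\ge v_{j-1}-y>v_j\ge y$.
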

	\begin{proof}
		Assume that $x<y$ (thus $y>v_j/2$) we show that $v_{j-1}-y$ is better response than $v_j-y$ for $j>1$.
		Indeed $(v_{j-1}-y)d_{j-1}> (v_{j}-y)d_{j}$ since
		\begin{align*}	
		& (v_{j-1}-y)d_{j-1}> (v_{j}-y)d_{j} \Leftrightarrow  y(d_j-d_{j-1})> \alpha^{j-1} - \alpha^{j-2} \Leftrightarrow\\
		& y\left( \left(\frac{\alpha}{\epsilon}\right)^{j-1} - \left(\frac{\alpha}{\epsilon}\right)^{j-2}  \right) > \alpha^{j-1} - \alpha^{j-2} \Leftrightarrow  y>\left(\frac{\alpha-1}{\alpha-\epsilon}\right) v_j = \frac{v_j}{2}
		\end{align*}
		Finally, note that for small enough $\epsilon$ it holds that $v_j/2>v_{j+1}$ and thus $y>v_{j+1}$, so
		$z$ that is a best response to $y$ must
		satisfy $y+z\geq v_j$, and as we saw that $y+z\neq v_j$ we conclude that $y+z=v_k>v_j$ for some $k<j$.
		For small enough $\epsilon$ it holds that $z = v_k-y\geq (v_j/\epsilon)- y \geq (v_j/\epsilon) - v_j > v_j \geq  y$,
		thus $z>y$ as claimed.
	\end{proof}
	
 	We next show that in any uneven split, the seller with the higher price is best responding.
	\begin{lemma}
		\label{lem:non-equal-high-stays}
		Assume that $\epsilon>0$ is small enough.
		If for some $j<n$ it holds that $v_{j+1}<x<v_j/2$, then there is a unique best response to $x$ and it holds that  $BR(x)=\{v_j-x\}$.
	\end{lemma}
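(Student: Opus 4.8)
The plan is to fix the instance from the proof of Theorem~\ref{thm:almost-sure-bed}, namely $v_i=\epsilon^{i-1}$ and $d_i=(\alpha/\epsilon)^{i-1}$ with $\alpha=2-\epsilon$, so that $R(v_i)=\alpha^{i-1}$, and analyze best responses to a price $x$ with $v_{j+1}<x<v_j/2$. By Observation~\ref{obs:BR-price-is-value}, any best response $y$ to $x$ satisfies $x+y=v_k$ for some $k$. First I would rule out $k>j$: since $x<v_j/2$, for small enough $\epsilon$ we have $v_j/2>v_{j+1}\geq v_k$ whenever $k\geq j+1$, forcing $y=v_k-x<0$, which is impossible; and $k$ cannot make $y$ negative, so $k\leq j$. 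So it remains to show that among the feasible levels $v_1,\dots,v_j$, the seller strictly prefers total price $v_j$, i.e.\ the unique best response is $y=v_j-x$.

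\textbf{The core comparison.} For each $k\leq j$ the revenue obtained by pricing at $v_k-x$ is $d_k(v_k-x)=R(v_k)-x\,d_k=\alpha^{k-1}-x(\alpha/\epsilon)^{k-1}$. I would compare consecutive levels $k$ and $k-1$: pricing at $v_k-x$ beats pricing at $v_{k-1}-x$ iff $\alpha^{k-1}-x(\alpha/\epsilon)^{k-1} > \alpha^{k-2}-x(\alpha/\epsilon)^{k-2}$, which rearranges (dividing by $(\alpha/\epsilon)^{k-2}$ and using $\alpha-1=1-\epsilon$) to $x < \frac{\alpha-1}{\alpha-\epsilon}\,\epsilon^{k-1} = \frac{v_k}{2}$, exactly mirroring the computation in Lemma~\ref{lem:non-equal-not-ne}. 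Hence the revenue as a function of the chosen level $v_k$ is strictly increasing in $k$ for $v_k/2>x$ and strictly decreasing once $v_k/2<x$; since our hypothesis gives $x<v_j/2$ but also (because $x>v_{j+1}$, and $v_{j+1}<v_{k}/2$ fails for $k>j$ when $\epsilon$ is small) the chain $v_1/2>v_2/2>\cdots>v_j/2>x$, the revenue is strictly increasing all the way up to $k=j$. Therefore $k=j$ is the unique maximizer and $BR(x)=\{v_j-x\}$, with the tie-break at zero utility irrelevant since revenues here are strictly positive.

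\textbf{Main obstacle and cleanup.} The only delicate point is making the ``$\epsilon$ small enough'' quantification uniform: I need $v_j/2 > v_{j+1}$, i.e.\ $\epsilon^{j-1}/2 > \epsilon^{j}$, i.e.\ $\epsilon<1/2$, which is a single clean condition independent of $j$ and $n$; this simultaneously guarantees that no level $v_k$ with $k>j$ is feasible and that the $v_k/2$ are decreasing fast enough that $x$ sits below all of $v_1/2,\dots,v_j/2$. I would state this at the outset. After that the argument is just the monotonicity-in-$k$ computation above plus the feasibility observation, so the proof is short. One should also note uniqueness follows because every inequality in the comparison chain is strict (as $x\neq v_k/2$ for all relevant $k$, since $x<v_j/2<v_{j-1}/2<\cdots$ strictly and $x>v_{j+1}$), so there is no alternative optimal level.
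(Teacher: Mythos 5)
Your proposal is correct and follows essentially the same route as the paper: rule out total prices $v_k$ with $k>j$ because $x>v_{j+1}$ makes them infeasible, then compare revenues $d_k(v_k-x)$ across the remaining levels using the explicit geometric structure $v_i=\epsilon^{i-1}$, $d_i=(\alpha/\epsilon)^{i-1}$, with the threshold $x<v_k/2$ driving the comparison. The only (harmless) difference is organizational: you chain strict comparisons of consecutive levels $k-1$ versus $k$ and use $x<v_j/2\le v_k/2$, whereas the paper compares level $j$ directly against each $k<j$ and checks that the resulting threshold $\epsilon^{j-1}\frac{\alpha^m-1}{\alpha^m-\epsilon^m}$ is increasing in $m=j-k$; both yield the same strict domination and hence uniqueness of $BR(x)=\{v_j-x\}$.
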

	\begin{proof}
		Since $x>v_{j+1}$ it holds that $BR(x)=\{v_k-x\}$ for some $k\leq j$.
		To prove the claim we show that for $x<v_j/2$, for any $k<j$ it holds that $(v_{j}-x)d_{j}> (v_{k}-x)d_{k}$.
		Let $m=j-k$ and note that $d_j=d_k \cdot \left(\frac{\alpha}{\epsilon}\right)^{m}$. It holds that:
		\begin{align*}	
		& (v_{j}-x)d_{j}> (v_{k}-x)d_{k} \Leftrightarrow  \\
		& \alpha^{j-1} - \alpha^{k-1} >x(d_j-d_k)\Leftrightarrow\\
		& \alpha^{k-1} (\alpha^m -1 )> x\cdot \alpha^{k-1} \left(\frac{\alpha^m-\epsilon^m}{\epsilon^{j-1}}\right)\Leftrightarrow\\
		& x< \epsilon^{j-1} \frac{\alpha^m -1}{\alpha^m -\epsilon^m} 
		\end{align*}
		Observe that for $m=1$ it holds that $\frac{\alpha^m -1}{\alpha^m -\epsilon^m}=\frac{1}{2}$. It is easy to check that
		$\frac{\alpha^m -1}{\alpha^m -\epsilon^m}$ is increasing in $m$. Thus, if $x<\frac{v_j}{2}$ then
		$(v_{j}-x)d_{j}> (v_{k}-x)d_{k}$ for any $j$ such that $j-k\geq 1$ (any $j>k$).
	\end{proof}
	
	We now prove the theorem using these two lemmas.
	Recall that $p\in BR(q)$ and assume that $p+q=v_i<v_1$. As $q\notin S$, it holds that $q\neq v_i/2$. By Lemma \ref{lem:non-equal-not-ne},
	$(p,q)$ is an uneven split and thus not an equilibrium, so $q\notin BR(p)$. When $\epsilon$ is small enough, by Lemma \ref{lem:non-equal-high-stays} it must be the case that $p\geq v_i/2$ or $p\leq v_{i+1}$.
	By Lemma \ref{lem:very uneven split} it cannot be the case that $p\leq v_{i+1}$.
	Thus, it must hold that $p\geq v_i/2$ and then $q<v_i/2$ since $p\neq v_i/2$. We can now use Lemma \ref{lem:non-equal-not-ne} inductively, to conclude that the dynamics can only stop when the total price is $v_1$.
	
	\begin{lemma}
		\label{lem:very uneven split}
		Assume that $\epsilon>0$ is small enough.
		If $p+q=v_i<v_1$ and $p\in BR(q)$ then $p>v_{i+1}$.
	\end{lemma}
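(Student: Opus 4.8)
The plan is to argue by contradiction: assume $p \le v_{i+1}$ and exhibit a strictly profitable deviation for the seller currently posting price $p$, contradicting $p \in BR(q)$. First I would record two easy facts that make the deviation legitimate. Since $p+q = v_i < v_1$ we must have $i \ge 2$, so the value $v_{i-1}$ exists; and since $p \ge 0$ we have $q = v_i - p \le v_i < v_{i-1}$ (using $\epsilon < 1$), so the alternative price $v_{i-1} - q$ is strictly positive and, when the other seller holds price $q$ fixed, it moves the total price from $v_i$ up to $v_{i-1}$.

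Next I would bound the two revenues, using $v_j = \epsilon^{j-1}$, $d_j = (\alpha/\epsilon)^{j-1}$ and $R(v_j) = \alpha^{j-1}$ from the construction. The revenue actually obtained at the best response $p$ is $p \cdot d_i$; the assumption $p \le v_{i+1} = \epsilon^i$ then gives $p\cdot d_i \le \epsilon^{i}\cdot(\alpha/\epsilon)^{i-1} = \epsilon\,\alpha^{i-1}$. For the deviation to price $v_{i-1}-q$, the resulting revenue is $(v_{i-1}-q)\,d_{i-1} \ge (v_{i-1}-v_i)\,d_{i-1} = \epsilon^{i-2}(1-\epsilon)\cdot(\alpha/\epsilon)^{i-2} = (1-\epsilon)\,\alpha^{i-2}$, where the inequality uses $q \le v_i$.

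Finally I would compare the two bounds: the deviation strictly improves the seller's revenue whenever $(1-\epsilon)\,\alpha^{i-2} > \epsilon\,\alpha^{i-1}$, i.e. whenever $1-\epsilon > \epsilon(2-\epsilon)$, and this holds for all sufficiently small $\epsilon$ (concretely for any $\epsilon < (3-\sqrt5)/2$), uniformly in $i$. This contradicts the assumption that $p$ is a best response to $q$, so we conclude $p > v_{i+1}$, as claimed.

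I do not anticipate a genuine obstacle here; the only points requiring a bit of care are that the deviation price $v_{i-1}-q$ is nonnegative (which is why I first note $i \ge 2$ and $q \le v_i < v_{i-1}$) and that the threshold on $\epsilon$ is independent of $i$, so that a single choice of ``small enough $\epsilon$'' works for the entire instance and is consistent with the smallness assumptions used in Lemmas \ref{lem:non-equal-not-ne} and \ref{lem:non-equal-high-stays}.
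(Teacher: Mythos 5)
Your proof is correct and follows essentially the same route as the paper's: assume $p\le v_{i+1}$, bound the revenue at the best response by $p\cdot d_i\le \epsilon\,\alpha^{i-1}$, and show the deviation to $v_{i-1}-q$ earns at least roughly $(1-\epsilon)\alpha^{i-2}$, which is strictly larger for all sufficiently small $\epsilon$ independently of $i$ — a contradiction. The only (immaterial) difference is that the paper states the deviation bound as $(1-\epsilon+\epsilon^2)\alpha^{i-2}$ while you use the simpler estimate $q\le v_i$; both give the same conclusion.
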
  	
	\begin{proof}
		Assume in contradiction that $p\leq v_{i+1}$ and thus $q=v_i-p\geq v_i-v_{i+1}$.
		The revenue of the seller with price $p$ is
		$p\cdot d_i\leq v_{i+1}\cdot d_i = \epsilon^ i \cdot \left(\frac{\alpha}{\epsilon}\right)^{i-1} = \epsilon \alpha^{i-1} $.
		On the other hand, if it response to $q$ the seller prices at $v_{i-1}-q$, his revenue is $(v_{i-1}-q) d_{i-1} \geq  \alpha^{i-2} - (v_i-v_{i+1})\cdot d_{i-1} =
		\alpha^{i-2} - (\epsilon^{i-1}-\epsilon^i)\cdot \left(\frac{\alpha}{\epsilon}\right)^{i-2} =
		\alpha^{i-2} \left(1- \epsilon +\epsilon^2 \right) $.
		Observe that we get a contradiction when $\epsilon$ is small enough, as
		$\epsilon \alpha^{i-1} \geq 	\alpha^{i-2} \left(1- \epsilon +\epsilon^2 \right) $ implies that
		$\alpha \geq \frac{1- \epsilon +\epsilon^2}{\epsilon}$, but $\alpha<2$ while the RHS goes to infinity when $\epsilon$ goes to $0$.
	\end{proof}
	
	This concludes the proof of the theorem.
\end{proof}

\subsection{Time to Covergence}
\label{app:BR-stops-at-time-W}

\noindent \textbf{Proof of Proposition \ref{obs:bsd-2-stops}}:
\begin{proof}
	We prove the result for every market with two demand levels.
	Let $v_1=1$ and $d_1=1$ (we normalize the two values to 1 w.l.o.g.), and let $v_2=1-\epsilon$ for $\epsilon>0$ and $d_2=D$.
	For these parameters, $W=\frac{D}{D-1}$.
	
	We will first show that as long as the best response process proceeds,
	there is an increase of exactly $\epsilon$ between any two consecutive prices one seller sets, and a decrease of exactly $\epsilon$ between prices set by the other seller.
	
	Recall that due to Observation \ref{obs:BR-price-is-value}, after a seller is best responding, the price will be either $1$ or $1-\epsilon$. 

	Consider first a set of prices $(p,\mathbf{q})$ where $q+p=1$ (the price that is marked in bold indicates the price of the player whose turn is to best respond, in this case, the second seller).
	The seller sets his price to a price in $BR(p)$,
	and for the dynamics to continue it must hold that the total price is now equal to $v_2$:
	$BR(p)$ is unique and not equal to $q$ and it satisfies $BR(p)+p=1-\epsilon$. Since $p+q=1$, we get that
	$BR(p)=q-\epsilon$ and the new pair of offers is $(\mathbf{p},q-\epsilon)$.
	
	Consider now some set of prices $(\mathbf{p'},q')$ where $q'+p'=1-\epsilon$.
	The seller sets his price to a price in $BR(q')$,
	and for the dynamics to continue it must hold that $BR(q')$ is unique and not equal to $p'$ and it satisfies
	$BR(q')+q'=1$. Since $q'+p'=1-\epsilon$, we get that the new pair of prices is $(p'+\epsilon,\mathbf{q'})$.
	
	We conclude that every best response dynamics have the following form.
	After the first step, the sum of prices will either $v_1$ or $v_2$.
	As long as the process proceeds, we will have the following sequence of prices when the initial total price is $v_1$ (otherwise, consider the sequence starting from the second price vector):
	$(p_0,\pmb{q_0)}$,
	$(\pmb{p_0},q_0-\epsilon)$,
	$(p_0+\epsilon,\pmb{q_0-\epsilon})$,
	$(\pmb{p_0+\epsilon},q_0-2\epsilon)$,$(p_0+2\epsilon,\pmb{q_0-2\epsilon})$,
	$...$,$(p_0+m\epsilon,\pmb{q_0-m\epsilon})$,
	$(\pmb{p_0+m\epsilon},q_0-(m+1)\epsilon)$, and so on.
	
	As prices are bounded in $[0,1]$, the number of updates by one seller clearly cannot be more than $1/\epsilon$.
	
	We are left to bound the number of iterations as a function of $D$.
	Indeed, consider the price vector $(\pmb{p_0+m\epsilon},q_0-(m+1)\epsilon)$.
	For the dynamics to continue, the currently responding player must prefer
	increasing his price and selling to the lower demand at price $v_1$:
	\begin{align}
	1\cdot(p_0+(m+1)\epsilon) > D\cdot(p_0+m\epsilon)
	\end{align}
	It follows that $m<\frac{1}{D-1}-\frac{p_0}{\epsilon}<\frac{D}{D-1}=W$. Therefore, in every best response dynamics each player will change its price at most $W$ times.
	%
	%
\end{proof}

\subsection{Proof of Theorem \ref{thm:pos-UB}}
\label{sec:proof-pos-ub}

We start with the upper bounds of the form $O(\sqrt{D})$.
The following lemmas will be useful in proving the upper bounds presented in Theorem \ref{thm:pos-UB}.
\begin{lemma}\label{lem-sqrt-bound}
	Assume that the best reply
	to $v/2$ is $v'-v/2$ for some $v'>v$.
	Then $v\leq v' \sqrt{\frac{D(v')}{D(v)}}$
\end{lemma}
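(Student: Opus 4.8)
The plan is to squeeze everything out of a single best-reply comparison and then reduce the claim to $(v'-v)^2\ge 0$. First I would use the hypothesis that $v'-v/2\in BR(v/2)$: against the opponent's price $v/2$, the responder's alternative of pricing at $v/2$ (which yields total price $v$) cannot be more profitable than $v'-v/2$ (which yields total price $v'$). Writing out the two revenues, this is exactly
\[
\left(v'-\tfrac{v}{2}\right)\mathcal{D}(v')\ \ge\ \tfrac{v}{2}\,\mathcal{D}(v).
\]
If $v=0$ the claimed bound is trivial, so assume $v>0$; then $2v'-v>v>0$ since $v'>v$, and the inequality rearranges to $\mathcal{D}(v')\ \ge\ \dfrac{v\,\mathcal{D}(v)}{2v'-v}$.

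Next I would note that the claim $v\le v'\sqrt{\mathcal{D}(v')/\mathcal{D}(v)}$ is, after squaring and multiplying through by $\mathcal{D}(v)>0$, equivalent to $v^2\,\mathcal{D}(v)\le (v')^2\,\mathcal{D}(v')$. Plugging in the lower bound for $\mathcal{D}(v')$ just obtained, it suffices to check $v^2\le \dfrac{(v')^2 v}{2v'-v}$; clearing the positive denominator $2v'-v$ and dividing by $v$, this is $v(2v'-v)\le (v')^2$, i.e.\ $0\le (v')^2-2vv'+v^2=(v'-v)^2$, which holds. Undoing the reductions gives the lemma.

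There is no real obstacle here; the only things to watch are the degenerate corners. One should note that $v/2$ is a legitimate price for the responder (automatic, since a best reply ranges over all nonnegative prices), dispose of $v=0$ separately, and make sure $\mathcal{D}(v)>0$ so that the division and square root make sense — in the intended application $v$ is one of the demand values $v_i$, hence $v\le v_1$ and $\mathcal{D}(v)=d_i>0$ (and $v'>v$ then forces $\mathcal{D}(v')\le \mathcal{D}(v)$, consistent with the ratio being at most $1$). The substantive point is simply the observation that comparing the best reply against the single alternative price $v/2$ already delivers a $\sqrt{D}$-type bound rather than the weaker linear one.
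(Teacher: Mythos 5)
Your proof is correct and is essentially the paper's argument: both use only the single comparison that the best reply $v'-v/2$ to $v/2$ is at least as good as pricing at $v/2$ (total price $v$), then square and rearrange the revenue inequality until it reduces to a trivial perfect-square inequality. The paper phrases the final step in terms of the demand ratio $\alpha=\mathcal{D}(v')/\mathcal{D}(v)$ via $\bigl((1+\alpha)/2\bigr)^2\ge\alpha$, while you eliminate the demand and land on $(v'-v)^2\ge 0$; these are the same computation in different variables.
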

\begin{proof}
 Denote $d = \D(v)$, $d'=\D(v')$ and the revenues by $r=v\cdot d$, $r'=v'\cdot d'$.
 To prove the claim we show the equivalent that $(r'/r)^2 \ge d'/d$.
	
	Denote $\alpha = d'/d$.  Since $v'-v/2$ is a better reply to $v/2$ than
	$v/2$ is we have that $d\cdot v/2 < d' (v'-v/2)$, equivalently
	$v(d+d')/2 < d'\cdot v'$ so $v(1+\alpha)/2 < \alpha v'$ or
	$v'/v > (1+\alpha)/(2\alpha)$.
	
	Now, $(r'/r)^2= (\alpha v'/v)^2 >
	((1+\alpha)/2)^2 \ge \alpha = d'/d$, where the first inequality was
	just derived above, and the last one holds for any real number $\alpha$.
\end{proof}

\ignore{	
We first show that the revenue of the best equilibrium is at most  $\sqrt{D}$ factor away from the monopolist revenue.
\begin{lemma}
	The best equilibrium has revenue of at least $1/\sqrt{D}$ fraction
	of the monopolist revenue.
\end{lemma}
\begin{proof}
	Let us consider the symmetrized best-reply dynamics starting from
	an equal split of the minimal monopoly price $p_0$.
	This gives	us a sequence $p_0 < p_1 < \cdots < p_t$ of total prices, where
	at each stage $p_{i+1}-p_i/2$ is a best response to $p_i/2$, and
	$(p_t/2,p_t/2)$ is the best equilibrium.  Denote the
	demand at combined price $p_i$ by $d_i = \D(p_i)$, and the
	revenue by $r_i = d_i \cdot p_i$.
	
	We can now apply the previous lemma to each stage and get
	$(r_{i+1}/r_i)^2 \ge d_{i+1}/d_i$, and putting all these inequalities
	together get $(r_t/r_0)^2 \ge d_t/d_0$, or $r_t \ge r_0 \sqrt{d_t/d_0}$.  The theorem follows since
	$d_t/d_0 \ge 1/D$, $r_t$ is the revenue in the best equilibrium and $r_0$ is the monopolist revenue.
\end{proof}
} 

We next show that the revenue of the best equilibrium is at most  $O(\log D)$ factor away from its welfare.
\begin{lemma} \label{revwel}
	Let $(v^*/2,v^*/2)$ be an equilibrium then the welfare at this equilibrium is at most $O(\log D)$ times the
	revenue at this equilibrium.
\end{lemma}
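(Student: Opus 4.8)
The plan is to use only the ``no upward deviation'' half of the equilibrium condition together with a dyadic bucketing of the demand axis. Concretely, I will first show that the equilibrium revenue $r$ dominates the revenue $R(v_i)$ at every higher demand value $v_i \ge v^*$, and then bound the welfare by partitioning the relevant demand values into $O(\log D)$ geometric buckets, each of which contributes only $O(r)$ to the welfare.

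First I would fix notation. By Observation~\ref{obs:BR-price-is-value} the total price in the equilibrium equals some $v_k$, so $v^*=v_k$, the equilibrium revenue is $r:=R(v_k)=v_k d_k$, and the equilibrium welfare is $W:=SW(v_k)=\sum_{i\,:\,v_i>v_k} v_i(d_i-d_{i-1})=\sum_{i=1}^{k-1} v_i(d_i-d_{i-1})$; if $k=1$ then $W=0$ and there is nothing to prove, so assume $k\ge 2$. The key step is then the claim that $R(v_i)\le r$ for every $i<k$: since $(v_k/2,v_k/2)$ is an equilibrium, a seller cannot profit by deviating to the price $v_i-v_k/2$ (which is positive because $v_i>v_k$ and produces total price $v_i$), so $(v_i-v_k/2)d_i\le (v_k/2)d_k=r/2$; using $v_i-v_k/2>v_i/2$ this gives $(v_i/2)d_i<r/2$, i.e.\ $R(v_i)=v_i d_i<r$ (and of course $R(v_k)=r$).

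Next I would bucket the indices $\{1,\dots,k-1\}$ by $C_t=\{i : 2^t\le d_i/d_1<2^{t+1}\}$ for $t\ge 0$. Since $d_i/d_1\le d_{k-1}/d_1<d_n/d_1=D$ for $i\le k-1$, only $t\in\{0,1,\dots,\lfloor\log_2 D\rfloor\}$ can give a nonempty bucket, so there are at most $\lfloor\log_2 D\rfloor+1$ of them; and because $(d_i)$ is strictly increasing, each $C_t$ is a contiguous block $\{a,\dots,b\}$. Within one bucket, using that $(v_i)$ is decreasing, $\sum_{i=a}^{b} v_i(d_i-d_{i-1})\le v_a\sum_{i=a}^{b}(d_i-d_{i-1})=v_a(d_b-d_{a-1})\le v_a d_b$. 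By Step~1 (and $a\le k-1<k$), $v_a\le R(v_a)/d_a\le r/d_a\le r/(2^t d_1)$, while $d_b<2^{t+1}d_1$, so the bucket contributes at most $(r/(2^t d_1))\cdot 2^{t+1}d_1=2r$. Summing over the at most $\lfloor\log_2 D\rfloor+1$ nonempty buckets yields $W\le 2(\lfloor\log_2 D\rfloor+1)\,r=O(\log D)\cdot r$, as claimed.

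I do not expect a genuine obstacle here; the only delicate point is the off-by-one in the definition of $SW$ at a price that coincides with a demand value (whether buyers with value exactly $v^*=v_k$ are counted in the equilibrium welfare), but including the term $v_k(d_k-d_{k-1})\le v_k d_k=r$ changes $W$ by at most an additive $r$, which is absorbed into the $O(\log D)\cdot r$ bound. Beyond Observation~\ref{obs:BR-price-is-value} and the equilibrium inequality, no further machinery is needed.
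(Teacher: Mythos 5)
Your argument is correct and is essentially the paper's own proof: the same key inequality --- no profitable upward deviation from $(v^*/2,v^*/2)$ forces $v_i d_i \le v^*\mathcal{D}(v^*)$ for every value $v_i \ge v^*$ --- followed by a dyadic decomposition into $O(\log D)$ classes, each contributing $O(1)$ times the equilibrium revenue. The only difference is cosmetic: you bucket by demand level (powers of $2$ above $d_1$, so the number of buckets is immediately at most $\log_2 D +1$), whereas the paper buckets by value (powers of $2$ above $v^*$, using the derived bound $v_i \le D v^*$ to limit the number of buckets); your treatment of the marginal buyer at value exactly $v^*$ is also fine.
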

\begin{proof}
	Let $d^*=\D(v^*)$ be the demand at this equilibrium, then
	$v^*/2$ is at least as good a reply to $v^*/2$ as is $v-v^*/2$ (for any $v$).  For $v > v^*$, the revenue
	from the latter choice is at least $\D(v) \cdot v/2$ and thus for every $v \ge v^*$ we have that
	$v \cdot \D(v) \le v^* \cdot d^*$.
	In particular, for $v_i \ge v^*$, since $d_i \ge d^* / D$, we also have that $v_i \le D v^*$.
	
	We now split the welfare that is obtained in this equilibrium
	into parts according to the value: $S_j = \{ i | 2^j v^* \le v_i < 2^{j+1} v^* \}$,
	where $j = 0 ... \log D$.  The welfare that we get from all buyers whose value is in $S_j$
	can be bounded from above by $2^{j+1} v^* \cdot \D(2^j \cdot v^*) \le 2 v^* \cdot d^*$
	by applying the inequality $v \cdot \D(v) \le v^* \cdot d^*$ for $v=2^j \cdot v^*$.
	This completes the proof of the lemma.
\end{proof}

The upper bounds in the theorem follows from the next propositions.

\ignore{
\begin{proposition}
	The best equilibrium has revenue of at least $\Omega(1/\sqrt{D})$
	fraction of the optimal social welfare.
\end{proposition}
\begin{proof}
	For ease of notation we wish to convert the setting so that the demand is given as $k$ unit-demands
	at prices $v_1 > v_2 > \cdots v_k$, so that for every $i$,
	$\D(v_i)=i$.
	Assume that the demands are rational numbers.\footnote{
	If the demands are irrational then we $\epsilon$-approximate them by rational numbers without changing the internal
	order between any two $(v_i - v_j/2) \cdot \D(v_i - v_j/2)$ thus maintaining the symmetric equilibria exactly and the approximation factors to
	within an error that can go to 0 in the limit.}
	To convert to the setting of $k$ unit-demands, we first multiply the demands
	by the common denominator and then replace the multiple units of demand at each price by multiple $\epsilon$-perturbed
	larger values with strict inequalities between these values, maintaining exactly the demand at each original price.
	Notice that this transformation maintains (exactly) the equilibria of the original game.
	The parameter $D$ of the original setting
	is now read as $D=k/h$, where $h$ is the index in the new game that corresponds to the lowest possible non-zero demand
	in the old game.
	Let us further denote the index of the best equilibrium by $t$
	so that $(v_t/2,v_t/2)$ is an equilibrium with demand $t$ (and revenue
	$r_t=t\cdot v_t$).  Clearly $t \ge h$, so $k/t \le D$.
	
	Let us consider the symmetrized best-reply dynamics starting from
	an equal split of some value $p_0 = v_i \le v_t$, and denote by $p_0 < p_1 < ... p_m = v_t$ the sequence of total
	prices reached by this dynamic.
  Indeed, by Proposition \ref{prop:sym-dynamics} and its proof, the symmetric best response dynamic
	starting from $(p_0/2,p_0/2)=(v_i/2,v_i/2)$ must be monotonic (with respect to the total price) and as $p_i<v_t$ and since $(v_t/2,v_t/2)$ is the
	lowest price equilibrium, the dynamic must have increasing total price, thus $p_{i+1}> p_i$.
	It must also reach $(v_t/2,v_t/2)$, and thus any deviation from $(v_i/2,v_i/2)$ must be to price with total price
	$p_j \leq v_t$.
		
	This gives	us a sequence $p_0 < p_1 < \cdots < p_t$ of total prices, where
	at each stage $p_{i+1}-p_i/2$ is a best response to $p_i/2$, and
	$(p_t/2,p_t/2)$ is the best equilibrium.  Denote the
	demand at combined price $p_i$ by $d_i = \D(p_i)$, and the
	revenue by $r_i = d_i \cdot p_i$.
	We can now apply Lemma \ref{lem-sqrt-bound} to each stage and get
$(r_{i+1}/r_i)^2 \ge d_{i+1}/d_i$, and putting all these inequalities
together get $(r_t/r_0)^2 \ge d_t/d_0$ so
	$v_i = p_0 \le \sqrt{t/i} \cdot v_t$.
	
	We will calculate the total welfare in two parts: from
	players whose value is greater than $v_t$ and from those whose
	value is less or equal to $v_t$.  The latter is
	$\sum_{i=t}^k v_i \le \sqrt{t} \cdot v_t \cdot \sum_{i=t}^k (1/\sqrt{i}) \le
	\sqrt{t} \cdot v_t \cdot O(\sqrt{k}) \le t \cdot v_t \cdot O(\sqrt{k/t}) = O(\sqrt{D} \cdot r_t)$.
	Lemma \ref{revwel} states that the former,
	$\sum_{i=1}^t v_i$ is bounded from above by $O(\log D \cdot r_t)$.
	Adding the two parts up concludes
	the proof.
\end{proof}

}

We start with a preliminary proposition that shows that the revenue of
the best equilibrium is $\sqrt{D}$-competitive with the monopolist {\em revenue}.
Below we will strengthen this proposition showing that it is even competitive with respect
to the optimal social welfare.

\begin{proposition}
The best equilibrium has revenue of at least $1/\sqrt{D}$ fraction
of the monopolist revenue.
\end{proposition}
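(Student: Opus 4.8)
The plan is to run the symmetrized best-response dynamics of Section~\ref{sec:existence} not from zero prices but from the equal split of a minimal monopolist price, and then to chain together the per-step inequality of Lemma~\ref{lem-sqrt-bound}. Let $p^*$ be a minimal monopolist price, so the monopolist revenue is $R(p^*)=p^*\cdot\mathcal{D}(p^*)$, and let $v^*$ be the total price of the best (lowest-price) equilibrium, so by Proposition~\ref{obs:best-NE-well-defined} the revenue of the best equilibrium is exactly $R(v^*)=v^*\cdot\mathcal{D}(v^*)$. By the Cournot-type proposition (every equilibrium total price is at least the minimal monopolist price) we have $p^*\le v^*$.

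First I would verify that the symmetrized best-response dynamics started from $(p^*/2,p^*/2)$ terminates precisely at $(v^*/2,v^*/2)$. By Proposition~\ref{prop:sym-dynamics} this dynamics converges to some non-trivial equilibrium; call its total price $\hat v$. The monotonicity-in-starting-price claim in the proof of Proposition~\ref{prop:sym-dynamics}, applied to the starting total prices $p^*\le v^*$, gives $\hat v\le v^*$ (the dynamics started from $v^*$ is stationary since $(v^*/2,v^*/2)$ is an equilibrium); and minimality of $v^*$ among equilibrium total prices gives $\hat v\ge v^*$. Hence $\hat v=v^*$. If $p^*=v^*$ the claim is trivial; otherwise the dynamics produces symmetric profiles with strictly increasing total prices $p^*=p_0<p_1<\cdots<p_t=v^*$, where at step $i$ the quantity $p_{i+1}-p_i/2$ is a best response to $p_i/2$ (the first time the total price fails to increase we have already reached an equilibrium).

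Next I would apply Lemma~\ref{lem-sqrt-bound} at each step with $v=p_i$ and $v'=p_{i+1}>p_i$: writing $d_i=\mathcal{D}(p_i)$ and $r_i=p_i d_i$, the lemma yields $(r_{i+1}/r_i)^2\ge d_{i+1}/d_i$. Multiplying these over $i=0,\dots,t-1$, the products telescope to $(r_t/r_0)^2\ge d_t/d_0$, i.e.\ $R(v^*)/R(p^*)=r_t/r_0\ge\sqrt{d_t/d_0}$. Finally, since $v^*$ equals some demand level $v_j$ (Observation~\ref{obs:BR-price-is-value}) we have $d_t=\mathcal{D}(v^*)\ge d_1$, while $d_0=\mathcal{D}(p^*)\le d_n$, so $d_t/d_0\ge d_1/d_n=1/D$ and therefore $R(v^*)\ge R(p^*)/\sqrt{D}$, which is the claim.

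The main obstacle I anticipate is the bookkeeping in the first step --- making sure the dynamics from $p^*$ really lands on $v^*$ and not on a higher equilibrium price --- since everything after that is routine: the per-step bound is exactly Lemma~\ref{lem-sqrt-bound}, and the rest is a telescoping product together with the definition of $D$. A minor point to be careful about is the case of several monopolist prices, which is why I fix $p^*$ to be a \emph{minimal} one, as the paper does elsewhere.
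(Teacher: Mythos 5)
Your proposal is correct and follows essentially the same route as the paper: run the symmetrized best-response dynamics from an equal split of the monopolist price, apply Lemma~\ref{lem-sqrt-bound} at each step, telescope to get $(r_t/r_0)^2\ge d_t/d_0\ge 1/D$, and conclude. The extra bookkeeping you add (that the dynamics lands exactly at the best equilibrium, via the monotonicity-in-starting-price claim and minimality of $v^*$) is precisely what the paper compresses into ``by Proposition~\ref{prop:sym-dynamics} and its proof.''
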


\begin{proof}
Let us consider the symmetrized best-reply dynamics starting from
an equal split of the monopoly price.  By Proposition \ref{prop:sym-dynamics} and its proof, this gives
us a sequence $p_0 < p_1 < \cdots < p_t$ of price levels where
at each stage $p_{i+1}-p_i/2$ is a best response to $p_i/2$, and
$(p_t/2,p_t/2)$ is the best equilibrium.  Denote the
demand at combined price $p_i$ by $d_i = \D(p_i)$, and the
revenue by $r_i = d_i p_i$.

We can now apply the previous lemma to each stage and get
$(r_{i+1}/r_i)^2 \ge d_{i+1}/d_i$, and putting all these inequalities
together get $(r_t/r_0)^2 \ge d_t/d_0$.  The theorem follows since
$d_t/d_0 \ge 1/D$ and $r_0$ is the monopolist revenue.
\end{proof}

We now provide the stronger proof.

\begin{proposition}
	The best equilibrium has revenue of at least $\Omega(1/\sqrt{D})$
	fraction of the optimal social welfare.
\end{proposition}

\begin{proof}
For ease of notation we wish to convert the setting so that the demand is given as $k$ unit-demands
at prices $v_1 > v_2 > \cdots v_k$, so that for every $i$,
$\D(v_i)=i$.  The way that we do this is by first multiplying the demands (that we assume are rational numbers\footnote{
	If the demands are irrational then we $\epsilon$-approximate them by rational numbers without changing the internal
	order between any two $(v_i - v_j/2) \cdot \D(v_i - v_j/2)$ thus maintaining the symmetric equilibria exactly and the approximation factors to
	within an error that can go to 0 in the limit.})
by the common denominator and then replacing the multiple units of demand at each price by multiple $\epsilon$-perturbed
values with strict inequalities between these values, maintaining exactly the demand at each original price.
Notice that this transformation maintains (exactly) the equilibria of the original game.
The parameter $D$ of the original setting
is now read as $D=k/h$, where $h$ is the index in the new game that corresponds to the lowest possible non-zero demand
in the old game.
Let us further denote the index of the best equilibrium by $t$
so that $(v_t/2,v_t/2)$ is an equilibrium with demand $t$ (and revenue
$r_t=tv_t$).  Clearly $t \ge h$, so $k/t \le D$.

If we apply the previous symmetrized best-response
process, as in the proof of the previous proposition, but now starting from
any price $p_0 = v_i$ with $i \ge t$.  Again, by Proposition \ref{prop:sym-dynamics} and its proof
we always reach the same (best) equilibrium
point $(v_t/2, v_t/2)$ so using the same argument
we have $(t v_t / i v_i)^2 \ge t/i$ and thus $v_i \le \sqrt{t/i} \cdot v_t$.

We will calculate the total welfare in two parts: from
players whose value is greater than $v_t$ and from those whose
value is less or equal to $v_t$.  The latter is
$\sum_{i=t}^k v_i \le \sqrt{t} \cdot v_t \cdot \sum_{i=t}^k (1/\sqrt{i}) \le
\sqrt{t} \cdot v_t \cdot O(\sqrt{k}) \le t \cdot v_t \cdot O(\sqrt{k/t}) = O(\sqrt{D} \cdot r_t)$.
Lemma \ref{revwel} states that the former,
$\sum_{i=1}^t v_i$ is bounded from above by $O(\log D \cdot r_t)$.
Adding the two parts up concludes
the proof.
\end{proof}

\begin{proposition}
	For any instance $(\vec{v},\vec{d})$ with $n$ demand levels it holds that  the optimal welfare is at most factor $2^n -1$ larger than the welfare of the highest welfare equilibrium, and the revenue of the monopolist is at most factor $2^{n-1}$ larger than the total revenue of the highest revenue equilibrium.
\end{proposition}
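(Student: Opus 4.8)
The plan is to control the revenue $R(v_i)=v_i d_i$ at every demand value relative to $R^{*}:=R(v^{*})$, the revenue of the best equilibrium, and then to sum these estimates. Here $v^{*}=v_t$ denotes the minimal total price occurring in any equilibrium (it exists by Theorem~\ref{thm:existence}), and $W^{*}$ denotes the welfare of that equilibrium; by Proposition~\ref{obs:best-NE-well-defined}, $R^{*}$ and $W^{*}$ are the largest revenue and the largest welfare over all equilibria. Since every buyer with value at least $v^{*}$ is served at total price $v^{*}$, welfare at any total price dominates revenue there, so $W^{*}\ge R^{*}$; it therefore suffices to show that the optimal welfare $SW(0)$ is at most $(2^n-1)R^{*}$ and that the monopolist revenue is at most $2^{n-1}R^{*}$. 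By Lemma~\ref{obs:total-price-convex}, $(v^{*}/2,v^{*}/2)$ is an equilibrium, a fact I will use repeatedly.

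I would first prove two per-value revenue estimates. (i) For $i<t$, so $v_i>v^{*}$: deviating from $(v^{*}/2,v^{*}/2)$ to the positive price $v_i-v^{*}/2$ yields total price $v_i$ and is not profitable, so $(v_i-v^{*}/2)d_i\le (v^{*}/2)d_t=R^{*}/2$; rearranging and using $d_i<d_t$ gives $R(v_i)<R^{*}$. (ii) For $i\ge t$, so $v_i\le v^{*}$: I claim $R(v_i)\le 2^{i-t}R^{*}$. Run the symmetrized best-response dynamics of Section~\ref{sec:existence} from total price $v_i$. By the monotonicity claim inside the proof of Proposition~\ref{prop:sym-dynamics}, the total price eventually reached is monotone in the starting total price, hence it is squeezed between the price reached from $0$ and the price reached from $v^{*}$, both of which equal $v^{*}$, so the run terminates at $v^{*}$. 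Since $v^{*}\ge v_i$ the run is non-decreasing and passes through a strictly increasing sequence of demand values $v_i=u_0<u_1<\cdots<u_m=v^{*}$, whose indices strictly decrease from $i$ down to $t$, so $m\le i-t$. At each step $u_{k+1}=u_k/2+br$ for some $br\in BR(u_k/2)$; since pricing at $u_k/2$ is always available to the responder, $br\cdot\mathcal{D}(u_{k+1})\ge (u_k/2)\mathcal{D}(u_k)=R(u_k)/2$, hence $R(u_{k+1})=u_{k+1}\mathcal{D}(u_{k+1})\ge br\cdot\mathcal{D}(u_{k+1})\ge R(u_k)/2$. Chaining gives $R(v_i)=R(u_0)\le 2^{m}R(u_m)\le 2^{i-t}R^{*}$.

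Given (i) and (ii), both bounds are short bookkeeping. For revenue, let $v_{i^{*}}$ be a minimal monopolist price; by the (Cournot-type) proposition that every equilibrium total price is at least the minimal monopolist price, $v^{*}\ge v_{i^{*}}$, i.e.\ $i^{*}\ge t$, so the monopolist revenue $R(v_{i^{*}})\le 2^{i^{*}-t}R^{*}\le 2^{n-1}R^{*}$ because $1\le t\le i^{*}\le n$. For welfare, bound each weighted value by the corresponding revenue: $SW(0)=\sum_{i=1}^{n}v_i(d_i-d_{i-1})\le\sum_{i=1}^{n}v_i d_i=\sum_{i=1}^{n}R(v_i)$. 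Splitting the sum at $t$ and using (i) for $i<t$ and (ii) for $i\ge t$ gives $SW(0)\le (t-1)R^{*}+\big(\sum_{i=t}^{n}2^{i-t}\big)R^{*}=\big((t-1)+(2^{\,n-t+1}-1)\big)R^{*}=(t-2+2^{\,n-t+1})R^{*}$. On $t\in\{1,\dots,n\}$ the quantity $t-2+2^{\,n-t+1}$ is maximized at $t=1$, where it equals $2^n-1$, so $SW(0)\le (2^n-1)R^{*}\le (2^n-1)W^{*}$.

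The main obstacle is part (ii): verifying that the symmetrized dynamics started from \emph{any} total price at or below $v^{*}$ converges to $v^{*}$ (so the geometric chain has at most $i-t$ steps) and that the revenue cannot fall by more than a factor of two per symmetrized step. Everything else is elementary; the only thing to notice in the welfare bound is that its extreme case $t=1$ is exactly the geometric instance that makes the $2^n-1$ factor tight.
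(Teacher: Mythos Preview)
Your argument is correct. For the revenue bound it coincides with the paper's: both rely on the symmetrized dynamics together with the observation that the total revenue can fall by at most a factor of two per symmetrized step, the only cosmetic difference being that you run the dynamics from each $v_i$ (for $i\ge t$) while the paper runs it once from $(0,0)$ via the monopolist price.

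For the welfare bound you take a different, and cleaner, route. You first establish the per-index estimates $R(v_i)<R^{*}$ for $i<t$ (directly from the equilibrium condition at $(v^{*}/2,v^{*}/2)$) and $R(v_i)\le 2^{\,i-t}R^{*}$ for $i\ge t$ (by running the symmetrized dynamics from $v_i$ up to $v^{*}$), then sum and maximize $t-2+2^{\,n-t+1}$ over $t$ to obtain exactly $2^{n}-1$. The paper instead combines the single bound $R^{*}\ge R(p^{*})/2^{n-1}$ with the assertion $\sum_{i=1}^{n} R(v_i)/R(p^{*})\le\sum_{i=0}^{n-1}2^{-i}$; that intermediate inequality is stated without justification and is in fact false in general (take the equal-revenue instance $v_i=2^{1-i}$, $d_i=2^{\,i-1}$, where the left side equals $n$ while the right side is less than $2$). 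Your per-index decomposition avoids this gap and delivers the sharp constant directly.
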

\begin{proof}
	Consider the symmetrized best response dynamics starting from $(0,0)$. After the first best responses the total price equals the price of the monopolist $p^*$. From this point the dynamics continues, and we have shown that is stops at the best equilibrium (one with the highest welfare and revenue among all equilibria), increasing the total price at each step of the dynamics, till the dynamics stops.
	We claim that the $i$ steps (increases to the total price) the revenue is at least $R(p^*)/2^i$. This is so as at each step each of the sellers gets half the revenue of the given price, and for the dynamics not to stop, the revenue he gets after increasing his price must be at least as before, thus the new total is at least half the previous total revenue. We conclude by induction that after the maximal number of prices increases from the monopolist price (at most $n-1$ increases), the total revenue is at least $R(p^*)/2^{n-1}$.
	
	
	We saw that the total revenue (and thus welfare) in equilibrium is at least  $R(p^*)/2^{n-1}$. 
	The ratio between the best equilibrium welfare and the optimal welfare is therefore at least
	$$\frac{R(p^*)}{SW(v_n)\cdot 2^{n-1}}\geq \frac{R(p^*)}{\left(\sum_{i=1}^n (v_i \cdot d_{i})\right)\cdot 2^{n-1}}=
	\frac{1}{\left(\sum_{i=1}^n \frac{v_i \cdot d_{i}}{R(p^*)}\right)\cdot 2^{n-1}}\geq $$
	$$\frac{1}{\left(\sum_{i=0}^{n-1} \frac{1}{2^i}\right)\cdot 2^{n-1}}\geq
	\frac{1}{\left(2-\frac{1}{2^{n-1}}\right)\cdot 2^{n-1}}=
	\frac{1}{2^{n}-1}$$
	as claimed.
\end{proof}

\subsection{Proof of Theorem \ref{thm:POS-LB}}
\label{sec:proof-pos-lb}

We first present the lower bound that grows asymptotically as $\sqrt{D}$.

\begin{lemma}
	\label{lem:POS-LB-D}
	For any integer $D>2$ there exist an instance $(\vec{v},\vec{d})$ with total demand $D$ (and $n=D$)
	and demands that are integer multiple of $d_1=1$ for which
	welfare and revenue in equilibrium is at most $3/\sqrt{2}$ while the optimal welfare is at least $\sqrt{D}$ and
	the revenue of a monopolist is  at least $D/\sqrt{D-1}= \Theta(\sqrt{D})$.
\end{lemma}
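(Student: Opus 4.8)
The plan is to exhibit one explicit family of instances. Fix $D>2$ and take $n=D$ demand levels with $d_i=i$ for $i=1,\dots,D$; this is forced once we want strictly increasing integer demands from $d_1=1$ to $d_D=D$, and indeed the total demand is $d_D/d_1=D$. For the values put $v_1=3/\sqrt2$ and $v_i=1/\sqrt{\,i-1\,}$ for $i=2,\dots,D$. They are strictly decreasing, since $v_1=3/\sqrt2>1=v_2>1/\sqrt2=v_3>\dots>v_D>0$, so this is a legal instance.

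First I would determine all non-trivial equilibria. By Observation~\ref{obs:BR-price-is-value} any non-trivial equilibrium has total price equal to some $v_i$, and by Lemma~\ref{obs:total-price-convex} it exists only if the symmetric profile $(v_i/2,v_i/2)$ is itself an equilibrium; so it suffices to check which of these symmetric profiles are equilibria. For $i=1$: since $v_1/2>v_2$, any unilateral price drop leaves the total price in $(v_2,v_1]$, hence the demand at $d_1$, so it only lowers revenue, while any price increase kills demand; thus $(v_1/2,v_1/2)$ is an equilibrium. For $i=2$: from $(1/2,1/2)$, moving to own price $v_1-1/2$ (total price $v_1$, demand $d_1$) gives revenue $v_1-1/2>1$, beating the current revenue $1$; so $(v_2/2,v_2/2)$ is not an equilibrium. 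For $i\ge3$: from $(v_i/2,v_i/2)$, moving to own price $v_{i-1}-v_i/2>0$ (total price $v_{i-1}$, demand $d_{i-1}$) is strictly profitable, because after clearing denominators the required inequality becomes $4(i-1)^3-(2i-1)^2(i-2)>0$, and the left side equals $3i-2$, which is positive. Hence every non-trivial equilibrium is a split of the total price $v_1$, so in every equilibrium the welfare and the total revenue are $v_1 d_1=3/\sqrt2$.

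Next I would compute the benchmarks. The optimal welfare is $SW(0)=\sum_{i=1}^{D}v_i(d_i-d_{i-1})=\sum_{i=1}^{D}v_i=\tfrac{3}{\sqrt2}+\sum_{j=1}^{D-1}\tfrac1{\sqrt j}$; since $\tfrac1{\sqrt j}>\int_j^{j+1}x^{-1/2}\,dx$, the tail telescopes to more than $2\sqrt D-2$, giving $SW(0)>\tfrac3{\sqrt2}+2\sqrt D-2\ge\sqrt D$. The monopolist revenue is $\max_i v_i d_i$; for $i\ge2$ we have $v_i d_i=i/\sqrt{\,i-1\,}$, which is nondecreasing in $i$ with $v_3 d_3=3/\sqrt2=v_1 d_1$, so the maximum over all $i$ is attained at $i=D$ and equals $D/\sqrt{D-1}=\Theta(\sqrt D)$. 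Thus the optimal welfare and the monopolist revenue are both $\Omega(\sqrt D)$ while every equilibrium has welfare and revenue equal to the constant $3/\sqrt2$, which is exactly the statement (and, with Theorem~\ref{thm:pos-UB}, shows that the $\sqrt D$ bound there is tight).

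The delicate point, and the step I expect to be the real obstacle when searching for the instance rather than verifying it, is the precise decay rate of the values. With $v_i$ proportional to $1/\sqrt i$ and $d_i=i$, every symmetric split $(v_i/2,v_i/2)$ turns out to be an equilibrium---the relevant comparison reduces to the identity coming from $(\sqrt i-\sqrt j)^2\ge0$---so there is a ``good'' equilibrium at the bottom of the ladder and no gap at all. Shifting the index by one, to $v_i=1/\sqrt{\,i-1\,}$, is exactly what makes the one-level-up deviation strictly profitable at every $i\ge3$, while still keeping $\sum_i v_i=\Theta(\sqrt D)$ and $\max_i i\,v_i=\Theta(\sqrt D)$; arranging all three of these at once is the crux, after which the verification collapses to the elementary inequalities above.
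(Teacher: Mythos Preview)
Your proof is correct and follows essentially the same construction as the paper's: the same demands $d_i=i$ and the same index-shifted values $v_i=1/\sqrt{i-1}$ for $i\ge2$, with only cosmetic differences in $v_1$ (you take $3/\sqrt2$ instead of the paper's $1.001$) and in the deviation used to break the symmetric profiles (you deviate one level up to $v_{i-1}$ and compute $4(i-1)^3-(2i-1)^2(i-2)=3i-2>0$, whereas the paper deviates two levels up to $v_{i-2}$ and uses $(i-1)(i-3)<(i-2)^2$). Your version is in fact slightly tidier, since your choice of $v_1$ forces every equilibrium to sit exactly at total price $v_1$ with welfare and revenue equal to $3/\sqrt2$, while the paper's argument only rules out $i\ge4$ explicitly and relies on the bound $3/\sqrt2$ absorbing the remaining cases $i\le3$.
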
	
\begin{proof}
	Let $v_1= 1.001$ and let $v_i = 1/\sqrt{i-1}$ for $i\in \{2,4,\ldots,n  \}$. Let $d_i=i$ and note that $D=n$.
	We show that in any equilibrium the welfare (and thus the revenue) is at most $3/\sqrt{2}$,
	while the revenue of a monopolist in at least $D/\sqrt{D-1}$ (by pricing at $1/\sqrt{D-1}$), and the social welfare is
	$\sum_i v_i = 1.001+\sum_{i=2}^n 1/\sqrt{i-1} \geq \sqrt{D}$.
	
	We next show that in any equilibrium the welfare (and thus the revenue) is at most $3/\sqrt{2}$.
	Assume that we have an equilibrium $(p,q)$ such that $p+q=v_i$ and $i\geq 3$.
	By Lemma \ref{obs:total-price-convex} $(v_i/2, v_i/2)$ must also be an equilibrium.
	Since deviating to $v_{i-2}-v_i/2$ is not profitable for a player, we have that
	$iv_i/2 \ge (i-2)(v_{i-2}-v_i/2)$, i.e. $(i-1)v_i \ge (i-2)v_{i-2}$ so,
	unless $i \le 3$ (where $v_{i-2}=v_1=1.001$
	is not given by $1/\sqrt{i-1}$), we have $(i-1)/\sqrt{i-1} \ge (i-2)/\sqrt{i-3}$, which is false since
	$(i-1)(i-3) < (i-2)^2$. It follows that the only equilibria have $i \le 3$ for which the revenue and
	welfare are at most  $3/\sqrt{2}$.
\end{proof}

We next present the lower bound that  grows exponentially in $n$.

\begin{lemma}
	\label{lem:POS-LB-n}
	Fix $\epsilon>0$.
	For any integer $n\geq 2$ there exist an instance $(\vec{v},\vec{d})$ with $n$ demand levels for which the best equilibrium has welfare and revenue of $1$, while the optimal welfare is at least $2^n - 1- \epsilon$ and the monopolist revenue is at least $2^{n-1}-\epsilon$.
\end{lemma}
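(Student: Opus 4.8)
The plan is to exhibit an explicit instance, a two--parameter variant of the construction used in the proof of Theorem~\ref{thm:almost-sure-bed}; the extra slack is what lets us destroy \emph{all} the ``good'' equilibria rather than merely make them unreachable by dynamics. Fix parameters $0<\delta<\eta<1$, to be chosen at the very end as functions of $\epsilon$ and $n$, and set $v_i=\delta^{\,i-1}$ and $d_i=\bigl((2-\eta)/\delta\bigr)^{i-1}$ for $i\in[n]$. Then $v_1>\cdots>v_n>0$ and $0<d_1<\cdots<d_n$, with $v_1=d_1=1$, and the revenue at total price $v_i$ is $R(v_i)=v_id_i=(2-\eta)^{\,i-1}$, which is strictly increasing in $i$.

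First I would compute the two ``optimal'' quantities. Since $R(x)=x\,\mathcal{D}(x)$ has constant demand on each interval $(v_{i+1},v_i]$ and hence is bounded there by $R(v_i)$, while $R(x)=0$ for $x>v_1$, the monopolist price is $v_n$ and the monopolist revenue is $R(v_n)=(2-\eta)^{\,n-1}$. A direct summation gives the optimal welfare $SW(0)=\sum_{i=1}^{n}v_i(d_i-d_{i-1})=\frac{(2-\eta)^n-1}{1-\eta}-\delta\cdot\frac{(2-\eta)^{\,n-1}-1}{1-\eta}$. For fixed $n$, the first term tends to $2^n-1$ and $(2-\eta)^{n-1}$ tends to $2^{n-1}$ as $\eta\to 0$, and the subtracted term tends to $0$ as $\delta\to 0$; so choosing $\eta$ small enough and then $\delta$ small enough yields monopolist revenue $\ge 2^{n-1}-\epsilon$ and optimal welfare $\ge 2^n-1-\epsilon$, as required.

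The core of the argument is to show that every non--trivial equilibrium has total price $v_1$. By Observation~\ref{obs:BR-price-is-value} any non--trivial equilibrium has total price $v_i$ for some $i$, and by Lemma~\ref{obs:total-price-convex} its existence forces $(v_i/2,v_i/2)$ to be an equilibrium; hence it suffices to rule out $(v_i/2,v_i/2)$ for each $i\ge 2$. At that profile each seller earns $R(v_i)/2=(2-\eta)^{\,i-2}\bigl(1-\tfrac{\eta}{2}\bigr)$. Now let one seller deviate to the price $v_{i-1}-v_i/2$, which is positive since $v_{i-1}-v_i/2=\delta^{\,i-2}(1-\delta/2)>0$ and which moves the total price to exactly $v_{i-1}$, hence the demand to $d_{i-1}$; the deviating seller then earns $d_{i-1}\bigl(v_{i-1}-v_i/2\bigr)=R(v_{i-1})-d_{i-1}v_i/2=(2-\eta)^{\,i-2}\bigl(1-\tfrac{\delta}{2}\bigr)$. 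Because $\delta<\eta$, this strictly exceeds the equilibrium payoff, so $(v_i/2,v_i/2)$ is not an equilibrium. Therefore every non--trivial equilibrium has total price $v_1$, with revenue $R(v_1)=1$ and welfare $SW(v_1)=v_1d_1=1$; since a non--trivial equilibrium exists by Theorem~\ref{thm:existence}, the best equilibrium has welfare and revenue exactly $1$.

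I expect this last step to be the main obstacle: one must take the ratio of consecutive revenues strictly below $2$ (that is the role of $\eta$) so that ``splitting the value $v_i$'' becomes beatable by climbing one demand level, and one must take the values to decay fast ($\delta<\eta$) so that this upward move is actually profitable, all while keeping both perturbations small enough to preserve the targets $2^{n-1}-\epsilon$ and $2^n-1-\epsilon$ --- which is exactly why the constants are chosen in the order $\eta$ first, then $\delta$.
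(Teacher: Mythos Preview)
Your argument is correct and follows the same blueprint as the paper: a geometric instance $v_i=\delta^{\,i-1}$ with revenue ratio close to~$2$, and the ``climb one level'' deviation $v_{i-1}-v_i/2$ to kill every symmetric split except at $v_1$. The only difference is in how the perturbation is packaged: the paper keeps a single parameter~$\delta$, sets $\alpha=2-\delta$, and subtracts a small correction $\delta^{\,n-i+1}$ from each revenue level so that $R(v_i)=\alpha^{i-1}-\delta^{\,n-i+1}$; the profitable-deviation inequality then reduces to $(2-\delta)\delta<1$. You instead decouple the value decay rate $\delta$ from the revenue-ratio parameter $2-\eta$ and impose $\delta<\eta$, which makes $R(v_i)=(2-\eta)^{\,i-1}$ exactly geometric and turns the key inequality into the transparent comparison $1-\delta/2>1-\eta/2$. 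Your version is a bit cleaner arithmetically and makes the role of the two competing constraints (keep the revenue ratio near~$2$ for the target bounds, yet make the one-step climb strictly profitable) more explicit; the paper's version has the minor advantage of a single free parameter.
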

\begin{proof}
	Let $\delta>0$ be small enough.
	Let $\alpha=2-\delta$.
	For $i\in [n]$ let $v_i=\delta^{i-1} $ and $d_i=\left(\alpha^{i-1} - \delta^{n-i+1}\right)/v_i$.
	Note that $R(v_i) = v_i\cdot d_i = \alpha^{i-1} - \delta^{n-i+1}$ and for small enough $\delta$ the monopolist price is $v_n$
	and the monopolist revenue is $R(v_n) = v_n\cdot d_n = \alpha^{n-1} -\delta =  (2-\delta)^{n-1} - \delta $
	which tends to $2^{n-1}$ as $\delta$ goes to $0$.
	The welfare at the monopolist price $v_n$ is
	$SW(v_n) = d_1\cdot v_1+ \sum_{i=2}^n v_i (d_{i}-d_{i-1}) = 
	1+ \sum_{i=2}^n \left(\alpha^{i-1}- \delta \alpha^{i-2} - \delta^{n-i+1} + \delta^{n-i+3} \right)=   2\cdot (2-\delta)^{n-1}-1 +f(\delta)$,
	when $f(\delta)$ is a function that tends to $0$ as $\delta$ tends to $0$, and this welfare tends to
	$2^n - 1$ when $\delta$ goes to $0$.

	We argue that for any equilibrium $(p,q)$ it holds that $p+q=v_1 =1$, and that there is at least one such equilibrium (this is trivial at $(1/2,1/2)$ is clearly an equilibrium when $\delta<1/2$).
	
	To prove that there is no equilibrium $(p,q)$ such that  $p+q=v_i<v_1$ it is enough to prove that $(v_i/2,v_i/2)$ is not an equilibrium for any $i>1$ (by Lemma \ref{obs:total-price-convex}). Indeed, we show that $v_i/2$ is not a best response to $v_i/2$ by showing that $v_{i-1}-v_i/2$ gives higher revenue:
	\begin{align*}	
	& d_i\cdot v_i/2<d_{i-1}(v_{i-1}-v_i/2) = d_{i-1}\cdot v_{i-1}(1-\delta/2)  = d_{i-1}\cdot v_{i-1}\cdot (2-\delta)/2  \\
	& \Leftrightarrow \alpha^{i-1} - \delta^{n-i+1} < \left(\alpha^{i-2} - \delta^{n-i+2}\right) \cdot \alpha \Leftrightarrow  1>\alpha\cdot \delta = (2-\delta)\cdot \delta
	\end{align*}
	which holds when $\delta$ is small enough.	 	
\end{proof}

\end{document}